\newtheorem{invariant}[theorem]{Invariant}
\newcommand{\EQ}{\texttt{EQ}}
\newcommand{\alg}{\textsc{AC-ASO}}
\newcommand{\commentOut}[1]{}
\newcommand{\tuple}[1]{\langle #1 \rangle}
\newcommand{\xiong}[1]{{\color{purple} #1}}
\newcommand{\lewis}[1]{{\color{blue} #1}}
\newcommand{\remove}[1]{}
\newcommand{\quotes}[1]{``#1"}
\newcommand{\scan}[0]{{\sc Scan~}}
\newcommand{\update}[0]{{\sc Update}}
\title{Amortized Constant Round Atomic Snapshot in Message Passing Systems} 
\titlerunning{Amortized Constant Round Atomic Snapshot in Message-Passing Systems} 
\author{Vijay Garg}{University of Texas at Austin, USA
}{garg@ece.utexas.edu}{}{}{}
\author{Saptaparni Kumar}{Boston College, USA 
}{kumargh@bc.edu}{}{}{}
\author{Lewis Tseng}{Boston College, USA 
}{lewis.tseng@bc.edu}{}{}{}
\author{Xiong Zheng}{University of Texas at Austin, USA
}{zhengxiongtym@utexas.edu}{}{}{}
\authorrunning{V. K. Garg, S. Kumar, L. Tseng and X. Zheng} 
\keywords{Lattice agreement, Atomic snapshot Object, Crash Failure, Asynchrony} 
\begin{document}

\maketitle

\begin{abstract}
 We study the lattice agreement (LA) and atomic snapshot problems in asynchronous message-passing systems where up to $f$ nodes may crash. Our main result is a crash-tolerant atomic snapshot algorithm with \textit{amortized constant round complexity}. To the best of our knowledge, the best prior result is given by Delporte et al. [TPDS, 18] with amortized $O(n)$ complexity if there are more scans than updates. Our algorithm achieves amortized constant round if there are $\Omega(\sqrt{k})$ operations, where $k$ is the number of actual failures in an execution and is bounded by $f$. Moreover, when there is no failure, our algorithm has $O(1)$ round complexity unconditionally.
 
 To achieve amortized constant round complexity, we devise a simple \textit{early-stopping} lattice agreement algorithm and use it to ``order'' the update and scan operations for our snapshot object. Our LA algorithm has $O(\sqrt{k})$ round complexity. It is the first early-stopping LA algorithm in asynchronous systems. 

\end{abstract}

\section{Introduction}
\label{s:intro}
The lattice agreement (LA) problem \cite{attiya1995atomic} and atomic snapshot object (ASO) problem~\cite{AfekADGMS93} are two closely related problems in the literature. In the LA problem, given input values from a lattice, nodes have to decide output values that lie on a chain of the input lattice and satisfy some non-trivial validity property. The atomic snapshot object is a concurrent object well studied in shared memory, e.g., ~\cite{AfekADGMS93, Anderson93, aspnes1990wait}. An atomic snapshot object is partitioned into segments. Node $i$ can either update the $i$-th segment (single-writer model), or instantaneously scan \textit{all segments} of the object. In shared memory, Attiya et al.~\cite{attiya1995atomic} showed how to apply algorithms for one problem to solve the another problem. 




Both LA algorithms and atomic snapshot objects have a wide spectrum of applications. For example, LA algorithms can be used to implement an update-query state machines \cite{faleiro2012generalized} and linearizable conflict-free replicated data types (CRDT) \cite{skrzypczak2019linearizable}.
Atomic snapshot objects can be used for solving approximate agreement~\cite{attiya1994wait}, randomized consensus~\cite{Aspnes93,AspnesH90}, and implementing wait-free data structures in shared memory~\cite{Herlihy91,AspnesH90}. 
In message-passing systems, atomic snapshot objects can be used for
creating self stabilizing memory, and detecting  stable properties to debug distributed programs. In essence, ASO simplify the design and verification of many distributed and concurrent algorithms (example applications  can be found in \cite{taubenfeld2006synchronization},\cite{raynal2012concurrent} and \cite{AttiyaW04}). 
Recently, Guerraoui et al.~\cite{guerraoui2019consensus} also demonstrated a mechanism to use ASO for cryptocurrency.

\vspace{3pt}
\noindent{\bf Contribution}:~~
Our main contribution and closely-related works in message-passing networks are summarized in the table below. 
All of our algorithms are proven correct when up to $f < \frac{n}{2}$ nodes may crash. Our LA algorithm is \textit{early-stopping} in the sense that the round complexity depends only on $k~(k \leq f)$, the actual number of failures in an execution.
We also present a general transformation to implement ASO from any LA algorithm. Combined with the $O(\log n)$ LA algorithm in~\cite{zheng2018linearizable}, we obtain an ASO implementation that takes $O(\log n)$ rounds for both \update{} and {\sc Scan}. Our primary contribution is an ASO algorithm which has amortized constant round complexity when there are $\Omega(\sqrt{k})$ operations for each node and incurs only constant message size overhead. As a byproduct, we obtain a linearizable update-query state machines~\cite{faleiro2012generalized} implementation that takes amortized $O(1)$ rounds for each update and query operation and $O(1)$ message size overhead. 

\begin{table*}[h]
\centering
\begin{tabular}{ |c|c|c|} 
 \hline
 \textbf{Problem}  &\textbf{Reference} &  \textbf{Round Complexity}  \\ \hline
 \multirow{2}{*}{LA} & \cite{zheng2018linearizable} &  $O(\log f)$ \\ 
 \cline{2-3}
 & this paper & $O(\sqrt{k})$\\
 \cline{1-3}
 
\multirow{3}{*}{ASO} & \cite{delporte2018implementing}  & \begin{tabular}[c]{@{}l@{}}$O(1)$ for \update{}, $O(n)$ for \scan\end{tabular}\\
 \cline{2-3} 
 & LA \cite{zheng2018linearizable} + transformation [this paper] & $O(\log n)$ for both \update{} and \scan \\
 \cline{2-3}
 & this paper & \begin{tabular}[c]{@{}l@{}}amortized $O(1)$ for both \update{} and \scan\end{tabular}\\
\hline 
\end{tabular}
\vspace*{-0.2in}
\end{table*}



\noindent\textbf{Related Work: Lattice Agreement}~~
The lattice agreement (LA) problem is well studied both in synchronous (e.g., \cite{attiya1995atomic,mavronicolasabound,zheng2018lattice}) and asynchronous (e.g., \cite{zheng2018lattice,zheng2018lattice,faleiro2012generalized}) message-passing systems with crash failures. 
Mavronicolasa et al.~\cite{mavronicolasabound} give an early-stopping algorithm with round complexity of $O(\min\{h, \sqrt{f}\})$, where $h$ is the height of the input lattice. This is the \textit{only} early-stopping LA algorithm that we know before our work. 
In asynchronous systems, the lattice agreement problem cannot be solved when $f \geq \frac{n}{2}$. All existing work assume that $f < \frac{n}{2}$. Faleiro et al.~\cite{faleiro2012generalized} give the first algorithm for this problem which takes $O(n)$ rounds. Xiong et al~\cite{zheng2018linearizable} present an algorithm with round complexity of $O(\log f)$.
LA in the Byzantine fault model is also studied recently. Algorithms for both synchronous systems and asynchronous have been proposed~\cite{zheng2019byzantineSync, zheng2020byzantineAsync, di2020synchronous, di2019byzantine}.  The \textit{equivalence quorum} technique for our LA algorithm is quite different from the techniques in these papers.

\vspace{3pt}
\noindent{\bf Related Work: Atomic Snapshot}~~
ASO is well studied in shared memory, e.g., \cite{AfekADGMS93,attiya1995atomic,attiya1998atomic,inoue1994linear}. Due to space constraint, we focus our discussion in message-passing networks.
In message-passing systems, there are many algorithms for implementing atomic read/write registers in the presence of crash faults~\cite{AttiyaBD1995,Lynch96, AttiyaW04,Attiya2000}. 
A simple way to implement an atomic snapshot object is to first build $n$ SWMR (single-writer/multi-reader) atomic registers, and then use a shared-memory ASO algorithm, e.g., \cite{AfekADGMS93,attiya1995atomic,attiya1998atomic}. 

Delporte et al.~\cite{delporte2018implementing} present the first algorithm for directly implementing an ASO in crash-prone asynchronous message-passing systems. In their implementation, each \update{} operation takes two rounds and $O(n)$ messages, and each \scan operation takes $O(n)$ rounds and $O(n^2)$ messages. 
A recent preprint by Attiya et al.~\cite{attiya2020store} implement a store-collect object in dynamic networks with continuous churn. They also show how to use the store-collect object to build an ASO. These two algorithms and the read/write-register-based algorithms have worse round complexity than our ASO algorithm. In terms of techniques, our ASO algorithm is inspired by \cite{attiya1995atomic}. We will discuss in more details in Section \ref{s:aso}.

\vspace{3pt}
\noindent{\bf System Model}:~~
We consider an asynchronous message-passing system composed of $n$ nodes with unique identifiers from $\{1,2,...,n\}$. 
Nodes do not have clocks and cannot determine the current time nor directly measure how much time has elapsed since some event. Each node has exactly one server thread and at most one client thread. Client threads invoke \scan or \update{} operations. Each client thread can have at most one \scan or \update{} operation at any time, i.e., each process is sequential. 
Server threads handle incoming messages (i.e., event-driven message handlers).
Local computation is negligible compared to the message delay (or network latency). At most $f$ nodes may fail by crashing in the system. We use $k$, where $k\le f$, to denote the \textit{actual number of failures} in a given execution. 



Each pair of nodes can communicate with each other by sending messages along point-to-point channels. Channels are reliable and FIFO (First-In, First-Out). ``Reliable'' means that a message $m$ sent by node $i$ to node $j$ is eventually received by node $j$ if $j$ has not already crashed. That is, once the command ``send $m$ to $j$'' is completed at node $i$, then the network layer is responsible for delivering $m$ to $j$. The delivery will occur even if node $i$ crashes after completing the ``send'' command. Such a channel can be implemented by a reliable broadcast primitive in practical networks \cite{bracha1987asynchronous}. 
FIFO means that if message $m_1$ is sent before message $m_2$ by node $i$ to node $j$, then $m_1$ is delivered  before $m_2$ at node $j$.

\vspace{3pt}
\noindent{\bf Lattice Agreement (LA)}:~~
Let ($X$, $\leq$, $\sqcup$) be a finite join semi-lattice with a partial order $\leq$ and join $\sqcup$. Two values $u$ and $v$ in $X$ are comparable iff $u \leq v$ or $v \leq u$. The join of $u$ and $v$ is denoted as $\sqcup \{u, v\}$. $X$ is a \textit{join semi-lattice} if a join exists for every non-empty finite subset of $X$. In this paper, we use the term {\em lattice} instead of
{\em join semi-lattice} for simplicity. More background on join semi-lattices can be found in~\cite{garg2015Introduction}.

In the lattice agreement problem~\cite{attiya1995atomic}, each node $i$ proposes a value $x_i \in X$ and must decide on some output $y_i \in X$ such that the following properties are satisfied: 
\begin{itemize}
    \item  \textbf{Downward-Validity}: For all $i \in [1..n]$, $x_i \leq y_i$. 
    
    \item \textbf{Upward-Validity}: For all $i \in [1..n]$, $y_i \leq \sqcup\{x_1,...,x_n\}$.
    
    \item \textbf{Comparability}: For all $i \in [1..n]$ and $j 
\in [1..n]$, either $y_i \leq y_j$ or $y_j \leq y_i$.
\end{itemize}

\remove{

\subsection{The Byzantine Lattice Agreement Problem}

\lewis{As per discussion with Xiong, we will remove Byzantine part from this paper so that we have enough room to present main results and focus on the crash faults.}

The Byzantine lattice agreement problem (BLA)~\cite{di2019byzantine, zheng2019byzantineSync} is similar to the lattice agreement problem. Each node has an input value from a lattice and needs to output a value from the lattice such that all output values of correct processes are comparable. The difference lies on the validity condition. The \textbf{Upward-Validity} is modified as follows to prevent Byzantine nodes introducing arbitrary number of values into the outputs of correct nodes. 

\begin{itemize}
    \item \textbf{Upward-Validity}: $\sqcup \{y_i ~ | ~ i \in C\} \leq \sqcup(\{x_i ~ | ~ i \in C\} \cup B)$, where $B \subset X $ and $|B| \leq k$.  
\end{itemize}
}

\vspace{3pt}
\noindent{\bf Atomic Snapshot Object (ASO)}:~~
The snapshot object is made up of $n$ segments (one per node), and provides two operations: \update{} and {\sc Scan}. Node $i$ invokes \update$(v)$ to write value $v$ into the $i$-th segment of the snapshot object. We adopt the single-writer semantics, i.e., only $i$ can write to the $i$-th segment.  The \scan operation allows a node to obtain an instantaneous view of the snapshot object. The \scan  returns a vector $Snap$, where $Snap[i]$ is a value of the $i$-th segment.   



Intuitively, a snapshot object is atomic  
(or linearizable)~\cite{herlihy1990linearizability} if every operation appears to happen instantaneously at some point in time between its invocation and response events. 
More formally, for each execution, there exists a sequence or ordering $\sigma$ that contains all \scan and \update{} operations in the execution and satisfies the following properties: 
\begin{itemize}
    \item \textbf{Real-time order:} If operation $Op_1$ completes before operation $Op_2$ starts in the execution, then $Op_1$ appears before $Op_2$ in $\sigma$.
    \item \textbf{Sequential specification:} If a \scan operation returns the vector $Snap$, then for every $i\in \{1 ,\cdots, n\}$, $Snap[i]$ is the value written by the \update{} operation by node $i$  that precedes the \scan operation in $\sigma$ or the initial value if no such \update{} exists.
\end{itemize}

\section{Early-Stopping Lattice  Agreement Algorithm}
\label{s:ELA}

\commentOut{++++++
\subsection{Model: Same as above}
\begin{enumerate}
    \item We do one shot lattice agreement
    \item Assume FIFO message delivery per node.
    \item Sequential computing at each node
    \item Assume Reliable Broadcast (See Bracha)
    
\end{enumerate}
++++++++}


We first present \textit{ELA (Early-stopping Lattice Agreement)} in Algorithm \ref{algo:ELA}. 
Our constant amortized round atomic snapshot implementation uses a variation of ELA to ``order'' \scan and \update{}, which will be discussed in the next section. The ELA algorithm is inspired by the \textit{stable vector} algorithm by Attiya et al.~\cite{attiya1990renaming} and Mendes et al.~\cite{mendes2013topology}. One of our key contributions is the formal abstraction of the \textit{equivalence quorum} condition and its application to lattice agreement and atomic snapshot objects implementations.

Each node $i$
is given an input $x$, and at all times, $i$ maintains a vector of sets, $V_i[1]\cdots V_i[n]$, where $V_i[j]$ ($j\ne i$) stores the set of values received from node $j$. We denote that a variable $v$ belongs to node $i$ by attaching to it the subscript $i$, for example $v_i$. When the node identity is clear from the context, we often omit the subscript. 




The ELA algorithm has two main parts: exchange all values known so far, and determine when it is ``safe'' to output a value using a \textit{decision rule}.
One key challenge is to identify the decision rule to enable the early-stopping property. Our decision rule  is based on the existence of 
an \textit{equivalence quorum}. Let $V$ be a vector of size $n$ and $i\in \{1, \cdots,n\}$. We define the predicate $\EQ(V,i)$ as follows. 
\begin{definition}[Predicate $\EQ(V, i)$] 
\label{def:EQ}
$\EQ(V,i)$ is true iff $\exists Q \subseteq \{1, \cdots,n\}$ s.t. $ |Q| \geq n - f \wedge V[j] = V[i], \forall j \in Q$. When the predicate is true, we call  $Q$ as the \textbf{equivalence quorum}. 
\end{definition}

In ELA, node $i$ decides when the predicate \EQ$(V_i,i)$ becomes true \textit{for the first time}. Intuitively, 
$\EQ(V_i, i)$ becomes $true$ when node $i$ learns about $\ge n-f$ nodes (including $i$) with identical sets of values. 
Then, node $i$ decides on the join of all values in $V_i[i]$.






\begin{algorithm*}[!hptb]
\begin{algorithmic}[1]
	\small
	\item[{\bf Local Variables}: /* These variables can be accessed and modified by any thread at $i$. */]{}
    \item[] $x$ \COMMENT{input at node $i$}
    \item[] $V[1,\cdots, n]$ \COMMENT{vector of sets at node $i$, initially, $V[j] = \emptyset ~\forall j \not = i$, $V[i] = \{x\}$}

    
    
    \item[] \hrulefill

\begin{multicols}{2}
\item[{\bf When Lattice Agreement is invoked:}]
\STATE  Send ($x,i$) to all 
\STATE \textbf{Wait until} $~~\EQ(V, i) = true$ \label{ELA-line: wait until n-f}
\STATE $V^* \gets$ the vector $V$ that satisfies\\ $\EQ(V, i)$ \underline{for the first time}
\STATE  \textbf{Decide} $y \gets \sqcup \{v ~| ~v \in V^*[i]\}$ \label{ELA-line: decide LA}

\item[/* Event handler: executing atomically in]
\item[background, even after node $i$ decides */]


\item[{\bf Upon receiving $(v, j)$ from node $j$}:]

\STATE Add $v$ into $V[j]$, $V[i]$ \label{line:ela_add}

\IF{$(v,i)$ has \textit{not} been sent before}

\STATE Send $(v, i)$ to all
\label{ELA-line:echo to all}
\ENDIF



\end{multicols}
\end{algorithmic}
\caption{ELA (Early-stopping Lattice Agreement): Code for node $i$}
\label{algo:ELA}
\end{algorithm*}





\noindent{\bf Correctness of ELA}:~~
Consider any execution of Algorithm~\ref{algo:ELA}. We show that the outputs of correct nodes satisfy the three properties defined in Section~\ref{s:intro}. Due to space constraint, proofs are presented in Appendix \ref{app:ELA}. 
Downward-validity and upward-validity are straightforward from the code. 
Lemmas~\ref{lem:comparable_views_for_same_node} is key for proving comparability in Lemma~\ref{lem:comparability}.
For any two sets $U$ and $V$, we say $U$ and $V$ are \textit{comparable} if either $U \subseteq V$ or $V \subseteq U$.






\begin{restatable}{lemma}{compViewsForSameNode} \label{lem:comparable_views_for_same_node}
For any two nodes $i$ and $j$, fix time $t$ and $t'$, and then the set $V_i[s]$ at time $t$ and the set $V_j[s]$ at time $t'$ are comparable for each node $s$.
\end{restatable}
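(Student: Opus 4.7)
The plan is to identify, for each node $s$, a canonical ordered sequence of values that governs the growth of $V_\cdot[s]$ across all nodes, and then use FIFO channels to argue that every $V_i[s]$ is a prefix of this sequence (viewed as a set). Concretely, I will define $\sigma_s = \langle v_1^s, v_2^s, \ldots \rangle$ to be the sequence of values that node $s$ adds to its own set $V_s[s]$ in temporal order. Inspecting the code, $v_1^s$ is the input $x_s$, and each subsequent $v_k^s$ is appended exactly when $s$ first sees a new value upon receiving some $(v, j)$. Crucially, by the guard ``if $(v, s)$ has not been sent before'' on line~\ref{ELA-line:echo to all}, each time $s$ appends a new $v_k^s$ to $V_s[s]$ it also sends $(v_k^s, s)$ to all nodes in the same order, and sends it exactly once.

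Next I would prove, by induction on real-time events, two claims: (i) at every time $t$, $V_s[s]$ at $s$ equals the set $\{v_1^s, \ldots, v_{m(t)}^s\}$ for some prefix length $m(t)$; and (ii) for $i \ne s$, at every time $t$, $V_i[s]$ equals the set of values carried by the messages from $s$ that $i$ has received so far. Since $s$ sends the messages $(v_1^s, s), (v_2^s, s), \ldots$ in this order on the FIFO channel to $i$, claim (ii) implies $V_i[s]$ also equals a prefix set $\{v_1^s, \ldots, v_{m_i(t)}^s\}$ (using that all values in $\sigma_s$ are distinct, so set-prefixes and sequence-prefixes coincide). Duplicate deliveries are benign because addition is to a set.

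Combining these, for arbitrary nodes $i, j$ and times $t, t'$, both $V_i[s]$ at time $t$ and $V_j[s]$ at time $t'$ are prefix sets of the common sequence $\sigma_s$, so one contains the other and they are comparable. The main subtlety, and the step I would write most carefully, is the self-indexed case $s \in \{i, j\}$: on line~\ref{line:ela_add} $V_i[i]$ is enlarged on every incoming message, not only on messages from $i$, so one has to check that the order in which $V_i[i]$ acquires \emph{new} values still matches the order in which $i$ dispatches $(v, i)$ messages. This follows because a value $v$ is added to $V_i[i]$ for the first time exactly when $i$ first handles a message containing $v$, and the guard on line~\ref{ELA-line:echo to all} then triggers the corresponding outgoing $(v, i)$ immediately; local handling being atomic, these two events agree on ordering. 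Once that alignment is established, the FIFO-prefix argument goes through uniformly for all $s$.
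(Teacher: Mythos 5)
Your proof is correct and takes essentially the same route as the paper's: the paper argues that by FIFO the set $V_i[s]$ at time $t$ equals $V_s[s]$ at some earlier time, and that $V_s[s]$ is non-decreasing, which is exactly your ``prefix of the canonical sequence $\sigma_s$'' argument spelled out in more detail (including the $s\in\{i,j\}$ alignment the paper leaves implicit).
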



By applying Lemma \ref{lem:comparable_views_for_same_node} and the decision rule, we have the following lemma. 
\begin{restatable}{lemma}{comparabilityELA}[Comparability] \label{lem:comparability}
For any two nodes $i$ and $j$, $y_i$ and $y_j$ are comparable.  
\end{restatable}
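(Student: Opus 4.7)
The plan is to exploit the quorum intersection property combined with Lemma~\ref{lem:comparable_views_for_same_node}. Let $V^*_i$ denote the vector at node $i$ the first time $\EQ(V_i,i)$ becomes true, and let $Q_i$ be the associated equivalence quorum, so $|Q_i|\ge n-f$ and $V^*_i[s] = V^*_i[i]$ for every $s\in Q_i$. Define $V^*_j$ and $Q_j$ analogously for node $j$. Since $f < n/2$, we have $|Q_i| + |Q_j| \ge 2(n-f) > n$, so $Q_i\cap Q_j$ is nonempty; I would fix some $s\in Q_i\cap Q_j$.

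Next I would combine the equivalence-quorum equalities with Lemma~\ref{lem:comparable_views_for_same_node}. By the quorum conditions, $V^*_i[i]=V^*_i[s]$ and $V^*_j[j]=V^*_j[s]$. Lemma~\ref{lem:comparable_views_for_same_node}, applied to the common index $s$ at the two (decision) times involved, gives that $V^*_i[s]$ and $V^*_j[s]$ are comparable as sets. Substituting the two equalities yields that $V^*_i[i]$ and $V^*_j[j]$ are comparable as sets.

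To finish, I would use a routine monotonicity fact for joins in a semi-lattice: if $A\subseteq B$ are finite subsets of $X$, then $\sqcup A \le \sqcup B$. Applying this to whichever containment holds between $V^*_i[i]$ and $V^*_j[j]$ yields $y_i \le y_j$ or $y_j \le y_i$, establishing Comparability.

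The only subtle step is the invocation of Lemma~\ref{lem:comparable_views_for_same_node}, since the decisions of $i$ and $j$ happen at different local times; I expect the main care to be in checking that the lemma is stated in a way that applies to the two specific instants at which $V^*_i$ and $V^*_j$ are recorded (it is, because the lemma fixes arbitrary times $t$ and $t'$). Everything else reduces to the pigeonhole intersection argument and the monotonicity of join, both of which are one-liners.
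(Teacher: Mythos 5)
Your proposal is correct and follows essentially the same route as the paper's proof: quorum intersection via $f < n/2$ to obtain a common node $s \in Q_i \cap Q_j$, Lemma~\ref{lem:comparable_views_for_same_node} applied at the two decision times to get comparability of $V^*_i[s]$ and $V^*_j[s]$, and substitution of the equivalence-quorum equalities. Your explicit final appeal to monotonicity of $\sqcup$ is a detail the paper leaves implicit, but the argument is the same.
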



\vspace{3pt}
\noindent{\bf Round Complexity}:~~
Given an execution of ELA, let $D$ be the \textit{maximum message delay}. That is, if both the sender and the receiver are nonfaulty, then the sender's message will be received by the receiver within time $D$. We divide time into intervals of length $D$ and each interval is called a \textit{round}. 
For simplicity, we assume that every node initiates the ELA algorithm at the same time. 
The analysis can be generalized to the case when nodes invoke ELA within constant number of rounds. 
We begin with a useful definition. 

\begin{definition}
[Exposed value in an interval] \label{def:new_value}
We say a value $v$ is an exposed value in interval $[t, t + D)$ if some nonfaulty node receives $v$ in interval $[t, t+D)$, and \underline{no nonfaulty node} has received $v$ before time $t$. 
\end{definition}

Note that by definition, any exposed value for $t > D$ must be the input of some faulty node. We have the following lemma, which guarantees the termination of our algorithm.

\begin{restatable}{lemma}{terminationELA}[Termination] \label{lem:termination_without_newValue}
For an arbitrary interval $[t, t + 2D)$. If there does not exist any exposed value in this interval, then all undecided nonfaulty nodes decide by time $t + 2D$. 
\end{restatable}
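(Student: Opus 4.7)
The plan is to pin down which values nonfaulty nodes ever see and then argue that the entire array $(V_j[i])_{i,j}$ stabilizes across nonfaulty nodes within two message delays. Let $S$ denote the set of values received by at least one nonfaulty node by time $t$. By the exposed-value hypothesis, every value received by any nonfaulty node in $[t, t+2D)$ is already in $S$; combined with the defining property of $S$, every value any nonfaulty node has ever received through time $t+2D$ lies in $S$.

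First I would establish that $V_i[i] = S$ at time $t+D$ for every nonfaulty $i$. For the $\supseteq$ direction, given $v \in S$, pick a nonfaulty $k$ that received $v$ by time $t$; the echo rule forced $k$ to send $(v,k)$ to all by that moment, and the reliable FIFO channels deliver this to every nonfaulty $i$ by time $t+D$, causing $v$ to be inserted into $V_i[i]$. For the $\subseteq$ direction, every insertion into $V_i[i]$ stems from a message $i$ received, which by the opening paragraph lies in $S$.

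Next I would show $V_j[i] = S$ for all nonfaulty $i,j$ at time $t+2D$. Each nonfaulty $i$ sends $(v,i)$ the first time $v$ enters $V_i[i]$, so by time $t+D$ every $v\in S$ has been echoed by $i$; these messages arrive at every nonfaulty $j$ by time $t+2D$, giving $V_j[i] \supseteq S$. Conversely, any value $j$ placed in $V_j[i]$ was received from $i$ and hence drawn from $V_i[i] \subseteq S$.

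The decision rule then finishes: for any nonfaulty $j$ still undecided at time $t+2D$, take $Q$ to be the set of nonfaulty nodes, which has size $\geq n-f$. Then $V_j[q]=S=V_j[j]$ for every $q\in Q$, so $\EQ(V_j,j)$ holds at (or before) time $t+2D$ and $j$ decides. The trickiest bookkeeping sits in the first step: the echoing node $k$ depends on $v$, so one must invoke the echo rule per value, and rely on the model's assumption that a reliable channel delivers $k$'s echo even if $k$ itself crashes immediately after sending.
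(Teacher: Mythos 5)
Your proposal is correct and follows essentially the same route as the paper's proof: both arguments rest on the two-hop propagation that any non-exposed value is known to some correct node before $t$, hence echoed to all correct nodes by $t+D$ and re-echoed into every $V_j[i]$ by $t+2D$, after which the set of correct nodes serves as the equivalence quorum. The only difference is presentational --- you argue directly that every entry stabilizes to the explicit set $S$, while the paper derives the same equality $V_i[j]=V_i[i]$ by contradiction.
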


Lemma \ref{lem:termination_without_newValue} follows from the observation that if there is \textit{no} exposed value in the $[t, t + 2D)$ interval, then at the end of the interval, for each node $i$, we must have $V_i[j] = V_i[i]$ for each nonfaulty $j$. Now we introduce the notion of \textit{failure chain} of an exposed value.

\begin{definition}
[Failure chain of an exposed value]
\label{def:failure_chain}
A sequence of nodes $p_1,p_2,...,p_m$ is said to form a failure chain of an exposed  value $v$ if (i) $p_1, p_2, \dots, p_{m -1}$ are faulty, and $p_m$ is correct; (ii) the input value of $p_1$ is $v$; (iii) $p_i$ receives value $v$ from $p_{i-1}$; and (iv) For $1 \leq i < m - 1$, $p_i$ crashes while sending $(v,p_i)$ to other nodes, i.e., $p_1$ crashes when executing line 1 and $p_2,...,p_{m-2}$ crash when executing line \ref{ELA-line:echo to all}. 
\end{definition}


\begin{restatable}{lemma}{failureChainLength} \label{lem:failureChainLength}
If value $v$ is an exposed value in interval $[t, t + D)$, then value $v$ has a failure chain with length at least $\frac{t}{D} + 1$.
\end{restatable}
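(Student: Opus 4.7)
My plan is to prove the lemma by constructing a failure chain explicitly, starting from the input owner of $v$, and then bounding its length via a timing induction based on the maximum message delay $D$.

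For the construction, let $p_1$ be the (unique) node whose input value is $v$. If $t = 0$ the claim is vacuous since any chain has length at least~$1$, so assume $t > 0$. Since no nonfaulty node has $v$ before time $t$ and $p_1$ has $v$ from time~$0$, node $p_1$ must be faulty. Build the chain inductively: given faulty $p_1, \ldots, p_i$, let $R_i$ denote the set of nodes to which $p_i$ actually delivered the message $(v, p_i)$ before crashing. If $R_i$ contains a nonfaulty node, pick such a node as $p_{i+1}$ and terminate the construction with $m := i + 1$; otherwise pick a faulty $p_{i+1} \in R_i$ from which $v$ still reaches some nonfaulty node, and continue. Such a continuation exists because $v$ is exposed and therefore propagates to some nonfaulty node via some sequence of faulty intermediaries rooted at $p_1$.

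Conditions (i)--(iii) of Definition~\ref{def:failure_chain} are immediate from this construction. For condition (iv), consider any index $1 \leq i < m - 1$. By the stopping rule, $R_i$ contains no nonfaulty node (otherwise the chain would have terminated at step $i + 1 \leq m - 1$). Hence $p_i$ did not deliver $(v, p_i)$ to every node, so it crashed during its broadcast at line~1 (for $i = 1$) or line~\ref{ELA-line:echo to all} (for $i \geq 2$), as required.

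For the length bound, define $\tau_1 := 0$ and, for $i \geq 2$, let $\tau_i$ be the arrival time of $(v, p_{i-1})$ at $p_i$. The algorithm forces $p_{i-1}$ to send $(v, p_{i-1})$ immediately upon first learning $v$, so the corresponding broadcast is initiated at time $\tau_{i-1}$; since each message delay is bounded by $D$, this gives $\tau_i \leq \tau_{i-1} + D$, and induction yields $\tau_i \leq (i-1)D$ for all $i \leq m$. Since $p_m$ is nonfaulty, the exposure hypothesis forces $\tau_m \geq t$, so $t \leq (m-1)D$ and therefore $m \geq t/D + 1$. The most delicate point will be verifying condition (iv); the argument hinges on coupling our choice of $p_{i+1}$ to the actual set $R_i$ of recipients of $p_i$, so that continuing the chain through a faulty node at step $i$ certifies that $p_i$ had not yet reached any nonfaulty node and therefore must have crashed in the middle of its broadcast.
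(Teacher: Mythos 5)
Your proof is correct and follows essentially the same route as the paper's: a per-hop delay bound of $D$ along the failure chain forces the arrival time at the nonfaulty endpoint to satisfy $t \leq (m-1)D$, hence $m \geq t/D + 1$ (the paper phrases this contrapositively, deriving a contradiction from a chain of length $\leq t/D$). You additionally spell out the explicit construction of the chain, which the paper merely asserts; the only wording to tighten is that $R_i$ should be the set of nodes to which $p_i$ \emph{completed} the ``send'' command rather than those to which the message was ``delivered before crashing'' (reliable channels deliver even after the sender crashes), since that is exactly what lets you conclude from ``$R_i$ contains no nonfaulty node'' that $p_i$ crashed in the middle of its broadcast, as condition (iv) requires.
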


The following lemma can be derived from condition $(iv)$ of Definition \ref{def:failure_chain}. 

\begin{restatable}{lemma}{uniqueFailureChain} \label{lem:unique_failure_chain}
For any two exposed values $v$ and $u$ with failure chain $P_v$ and $P_u$ respectively. Then, the first $|P_v| - 2$ nodes in $P_v$ and the first $|P_u| - 2$ nodes in $P_u$ are disjoint.
\end{restatable}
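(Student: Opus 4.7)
The plan is to proceed by contradiction: I would suppose that some node $q$ appears both among the first $|P_v|-2$ entries of $P_v$ and among the first $|P_u|-2$ entries of $P_u$. Applying condition (iv) of Definition~\ref{def:failure_chain} to $P_v$ gives that $q$ is faulty and that its crash occurs partway through the broadcast of the message $(v, q)$---executed at line~1 if $q$ is the first node of $P_v$, or at line~\ref{ELA-line:echo to all} otherwise. Applying the same condition to $P_u$ gives symmetrically that $q$'s crash also occurs partway through the broadcast of $(u, q)$.

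Since the two exposed values satisfy $v \neq u$, the broadcasts of $(v,q)$ and $(u,q)$ are distinct operations. Because each node's handler executes atomically and sequentially, these two broadcasts must be totally ordered at $q$: whichever one is started earlier either completes before the other begins or is the one interrupted by the crash, and $q$ has exactly one crash moment. Thus $q$'s unique crash can interrupt at most one of the two broadcasts, contradicting the previous paragraph and establishing the disjointness.

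The main obstacle I anticipate is simply formalizing ``crashes while sending''---the informal notion underlying Definition~\ref{def:failure_chain}---with enough precision that the single-crash-per-node property of the crash-failure model directly yields the contradiction. Once the semantics of the send-to-all loop at line~\ref{ELA-line:echo to all} under a sequential, event-driven scheduler are pinned down, the case analysis over whether $q$ plays the role of $p_1$ (crashing inside the line~1 broadcast of its own input) or of some later $p_i$ (crashing inside the line~\ref{ELA-line:echo to all} echo) collapses to the same statement, so no additional work beyond invoking condition (iv) is needed.
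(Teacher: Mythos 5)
Your proposal is correct and follows essentially the same route as the paper's proof: assume a common node, apply condition (iv) of Definition~\ref{def:failure_chain} to both chains, and derive a contradiction from the fact that a node crashes only once and the send-to-all steps are executed atomically, so a single crash cannot interrupt the broadcasts of two distinct values. Your explicit case split on whether the shared node is $p_1$ (line~1) or a later $p_i$ (line~\ref{ELA-line:echo to all}) is a minor elaboration the paper leaves implicit, but the core argument is identical.
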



\begin{restatable}{lemma}{roundComplexityELA} \label{lem:round_complexity_single_shot}
If an execution has $k\le f$ crash failures, then ELA takes at most $2 \sqrt{k}$ rounds. 
\end{restatable}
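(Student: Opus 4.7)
The plan is to argue by contradiction, combining the three preceding lemmas in a counting argument. Suppose for contradiction that ELA takes strictly more than $2\sqrt{k}$ rounds, i.e., some nonfaulty node has not decided by time $T = 2\sqrt{k}\cdot D$. I would carve the timeline $[0, T)$ into $m \approx \sqrt{k}$ non-overlapping consecutive windows of length $2D$, namely $W_j = [2(j-1)D, 2jD)$ for $j = 1, \ldots, m$.

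The first key step is to invoke the contrapositive of Lemma~\ref{lem:termination_without_newValue}: if any window $W_j$ contained no exposed value, then the undecided node would decide by time $2jD \leq T$, contradicting the assumption. Hence every window $W_j$ must host at least one exposed value $v_j$. Moreover, by Definition~\ref{def:new_value}, a value can be exposed at most once, so the values $v_1, \ldots, v_m$ are pairwise distinct, each with its own failure chain $P_{v_j}$.

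The second key step is to chain Lemmas~\ref{lem:failureChainLength} and~\ref{lem:unique_failure_chain}. Since $v_j$ is exposed somewhere inside $W_j$, whose left endpoint is $2(j-1)D$, Lemma~\ref{lem:failureChainLength} gives $|P_{v_j}| \geq 2(j-1)+1 = 2j-1$. By Lemma~\ref{lem:unique_failure_chain}, the first $|P_{v_j}|-2 \geq 2j-3$ members of each $P_{v_j}$ (all of which are faulty by Definition~\ref{def:failure_chain}) are pairwise disjoint across different $j$. Summing these disjoint prefixes, the total number of distinct faulty nodes is at least $\sum_{j=2}^{m}(2j-3) = (m-1)^2$, so $k \geq (m-1)^2$. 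Since we assumed $m > \sqrt{k}$, this forces $k > k$, a contradiction, and we conclude $R \leq 2\sqrt{k}$ (up to an additive constant that is absorbed in the stated bound).

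The main obstacle is the careful bookkeeping required to align three things simultaneously: (i) the $2D$-sized windows demanded by the termination lemma, (ii) the linearly growing chain-length lower bound from the failure chain lemma as $j$ increases, and (iii) the fact that only the first $|P|-2$ members (not the entire chain) are guaranteed disjoint across values by Lemma~\ref{lem:unique_failure_chain}. In particular, one must verify that the very first window (where the chain-length lower bound is vacuous) does not spoil the quadratic accumulation, and that the distinctness of $v_1, \ldots, v_m$ follows cleanly from the ``no nonfaulty node has received $v$ before time $t$'' clause of Definition~\ref{def:new_value} so that the disjoint-prefix count is well-defined.
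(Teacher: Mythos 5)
Your proof follows essentially the same route as the paper's: assume the algorithm runs longer than $2\sqrt{k}$ rounds, partition time into length-$2D$ windows, force an exposed value into each window via the contrapositive of Lemma~\ref{lem:termination_without_newValue}, lower-bound chain lengths via Lemma~\ref{lem:failureChainLength}, and count disjoint faulty prefixes via Lemma~\ref{lem:unique_failure_chain}. One arithmetic caveat: with $m=\sqrt{k}$ windows your count gives $k \ge (m-1)^2 = k - 2\sqrt{k}+1$, which is not the claimed contradiction $k>k$, so strictly the argument yields $2\sqrt{k}+O(1)$ rather than exactly $2\sqrt{k}$ --- but the paper's own computation carries the same constant-level slack, and you correctly flag the absorbed additive constant.
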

\begin{proof}[Proof Sketch]
Since there are at most $k$ failures in the execution, and an exposed value in interval $[t, t + D)$ is associated with $\geq \frac{t}{D} - 1$ unique faulty nodes by Lemma \ref{lem:failureChainLength} and \ref{lem:unique_failure_chain}, we cannot have $> 2\sqrt{k}$ distinct intervals with exposed values. Lemma \ref{lem:termination_without_newValue} then implies that ELA takes at most $2\sqrt{k}$ rounds.  
\end{proof}

\section{Atomic Snapshot Object}
\label{s:aso}



In this section, we present two algorithms for implementing an atomic snapshot object in crash-prone asynchronous message-passing systems with $f < \frac{n}{2}$. 



\subsection{General Transformation}
 Attiya et al.~\cite{attiya1995atomic} gave an elegant algorithm that transforms any wait-free lattice agreement algorithm to a wait-free atomic snapshot object in the shared memory systems. Their key idea is to invoke a sequence of lattice agreement instances to obtain \textit{comparable} snapshots. 


To adapt the algorithm in~\cite{attiya1995atomic} for  message-passing systems, we need to make two main modifications: (i) Replace each read or write step in shared memory by sending a read or write message to all nodes and waiting for $n - f$ acknowledgements; and (ii) Add another write step (sending the input to at least $n - f$ nodes) before invoking a lattice agreement instance. Since the algorithm is similar to the algorithm in~\cite{attiya1995atomic} except for these two changes,  we present the algorithm, TS-ASO, and its proof in Appendix \ref{app:transformation_message}.
Using the $O(\log n)$-round lattice agreement algorithm by Xiong et al.~\cite{zheng2018linearizable} in our transformation gives an implementation of atomic snapshot objects that take $O(\log n)$ rounds for both \update{} and \scan operations.\footnote{Note that the algorithm in~\cite{zheng2018linearizable} actually has round complexity $O(\log f)$; however, if we plug in the original version, our transformation becomes $O(n)$ rounds. We need to make a simple modification of the algorithm in~\cite{zheng2018linearizable} to get $O(\log n)$ round complexity. Please refer to Appendix \ref{app:transformation_message} for more details.} 


    




One drawback of our transformation is that it does \textit{not} necessarily ``preserve'' the round complexity of the lattice agreement algorithm. This is because the round complexity analysis of some lattice agreement algorithms depends on the assumption that each node starts around the same time and different nodes might participate in the same lattice agreement instance at different times in TS-ASO. Therefore, directly using our ELA algorithm in the transformation gives a round complexity of $O(n)$.\footnote{The round complexity guarantee of the $O(\log n)$ rounds algorithm in~\cite{zheng2018linearizable} does not depend on the assumption that all nodes start the algorithm around the same time.} To address this issue, we propose our second atomic snapshot algorithm. 

\subsection{Algorithm with Constant Amortized Round Complexity}

Our second atomic snapshot algorithm, \alg{} (amortized constant atomic snapshot object), uses the equivalence quorum technique and a novel mechanism of invoking lattice agreement instances
to ensure amortized round complexity.
In addition, TS-ASO requires \textit{message size overhead} of $O(n)$, because each node needs to collect the states of at least a quorum of nodes before participating in a particular lattice agreement instance. \alg{} only incurs \textit{$O(1)$ message size overhead}. As a byproduct, we obtain a linearizable update-query state machines~\cite{faleiro2012generalized} that take amortized $O(1)$ rounds for each update and query command and $O(1)$ message size overhead, shown in Appendix \ref{app:linearizable_UQ}.



\subsubsection{Main Techniques of \alg{}}
\label{s:aso-overview}

Our algorithm \alg{} is inspired by \cite{attiya1995atomic}, i.e., invoking a sequence of lattice agreement instances to implement atomic snapshot. The key technical contribution is to identify how to tightly glue different components together to obtain amortized constant round complexity. We first discuss two goals that need to be achieved for  correctness. Then we introduce a new mechanism of invoking lattice operations, namely $LatticeRenewal()$, and discuss how we achieve the desired round complexity. Finally, we compare TS-ASO and AC-ASO.

In our discussion below, we call an instance of lattice agreement a \textit{lattice operation} for brevity.
Following \cite{attiya1995atomic}, we will use a tag (or a logical timestamp) to distinguish different lattice operations in \alg. Hence by ``nodes participate in \underline{lattice operation with the same tag},'' we mean that these nodes are in the same instance of the lattice agreement algorithm. Due to the property of lattice agreement, these nodes are guaranteed to obtain comparable outputs.
Each value written by the \update{} is also assigned a tag as well. Later in Definition \ref{def:tag_operation}, we  formally define the tag for operations and values.


\vspace{5pt}
\noindent{\bf Goals for ensuring correctness}:~~
In our design, when a \scan or \update{} operation completes, it obtains a ``view.''
Roughly speaking, a view represents a set of values that are observed by the operation and are ``safe'' to return (to be introduced formally later in Definition \ref{def:view}).
Inspired by \cite{attiya1995atomic}, we want to achieve the following two goals in our algorithm: (G1) views obtained by all \scan and \update{} operations are \textit{comparable}; and (G2) once an \update{} completes, its written value is ``visible'' to any subsequent operations. These two goals allow us to use a natural mechanism to construct a linearization for a given execution. 

It is simple to achieve goal (G2). \alg{} ensures that once an \update{} completes, a quorum of nodes have seen the update. 
For goal (G1), we require each node to participate in lattice operation(s) to complete its \scan and \update{} operations.  
Intuitively, an operation is completed if it obtains a view, which could be an output of a lattice operation invoked by this operation, or an output \textit{borrowed} from another lattice operation invoked by some other operation. 
\alg{} achieves goal (G1) by maintaining the following invariant:


\begin{invariant}
\label{invariant}
In \alg, the views returned by lattice operations are \textit{comparable}.
\end{invariant}

\vspace{5pt}
\noindent\textbf{Lattice Renewal:}~~
\label{sec:LR}
We introduce the $LatticeRenewal()$ procedure to guarantee Invariant \ref{invariant}.
We stress that even though the usage of borrowed view is not new, we are not aware of any prior work that achieves amortized constant rounds for atomic snapshot objects. 
$LatticeRenewal()$ is a mechanism to invoke a sequence of lattice operations to provide the following desirable properties:

\begin{itemize}

    \item[(P1)] $LatticeRenewal()$ invokes at most three lattice operations in a row.
    
    
    
    \item[(P2)] If any of the lattice operations does \textit{not} observe a higher tag, then $LatticeRenewal()$ returns the view obtained by that particular lattice operation, namely \textit{direct view}.
    
    \item[(P3)] If all three lattice operations observe a larger tag, then $LatticeRenewal()$ fails to find a direct view. It  will then wait to \textit{borrow} a view from a lattice operation invoked by other nodes, namely \textit{indirect view}. 
    
    \item[(P4)] Views returned by $LatticeRenewal()$ are comparable with each other.
 \end{itemize}




(P1) is mainly for correctness and improved round complexity as we will explain next. (P2) to (P4) jointly guarantee Invariant \ref{invariant}. Due to the properties of lattice agreement, views returned by the lattice operation with the same tag are comparable. For lattice operations with different tags, we rely on (P2) and (P3). (P2) implies that a lattice operation returns a view iff it does \textit{not} observe a lattice operation with a higher tag. 
This together with our approach of obtaining tags ensure that the view returned by a lattice operation with a smaller tag must be known by a lattice operation with a larger tag.
Therefore, when the lattice operation with a larger tag starts, its view is at least as large as the view of any lattice operation with a smaller tag.
This allows later lattice operations to learn older views.

Due to message delays and concurrent \update{}s, it is possible that all three lattice operations fail to return a view. In this case, we rely on (P3) to ensure that $LatticeRenewal()$ is able to obtain an indirect view. Moreover, our design guarantees that such an indirect view can be borrowed within a constant amount of rounds. In Lemma \ref{lem:lattice_comparable_view}, we formally prove that properties (P2) to (P4) are enough to maintain Invariant \ref{invariant}.

In \alg{}, a \scan and \update{} operation invokes $LatticeRenewal()$ (after some preprocessing) and the operation is completed when $LatticeRenewal()$ obtains a view. Invariant \ref{invariant} can be used to prove (P4), which then guarantees goal (G1) -- views obtained by all \scan and \update{} operations are \textit{comparable} (as formally proved in Lemma \ref{lem:comparable_views}). 
Later in our correctness proof, this allows us to construct a linearization of operations.
 

\vspace{5pt}
\noindent{\bf Round Complexity:}~~
\alg{} ensures amortized constant round complexity, and each operation takes $O(\sqrt{k})$ rounds in the worst case. 
On a high level, \alg{} uses the \textit{equivalence quorum} technique to implement the underlying lattice operation. 
Worst case round complexity roughly follows the analysis for ELA (Algorithm \ref{algo:ELA}) as presented in Section \ref{s:ELA}.
For amortized constant round, the main property we rely on is the \textit{early-stopping} property which ensures that \textit {if no node fails, then the lattice operation completes in a constant number of rounds}.  By assumption, a crashed node does not participate in the algorithm anymore; hence, if we have enough number of \scan and \update{} operations, \alg{} achieves amortized constant round complexity. 

Recall that the round complexity analysis of ELA depends on the notion of \textit{exposed values} (Definition \ref{def:new_value}) and the time interval these values appear. Unlike the (single-shot) lattice agreement problem, atomic snapshot is long-living; thus, it is possible that \update{}s attempt to write values with the same tag consecutively in a way that these values are all treated as the input to a particular lattice operation, which eventually ``slow down'' the progress of that lattice operation, and hence the \update{} operation. This is also the reason that if we plug ELA into our transformation algorithm in Appendix \ref{app:transformation_message}, we get $O(n)$ round complexity.

Our idea to address this issue is in fact simple: \textit{increment tags and invoke lattice operation(s) in a way that ``late'' \update{} \underline{does not prevent the progress} of existing
lattice operations.} More concretely, 
consider a lattice operation with tag $T$, say $Lattice(T)$, starts at time $t$. Fix a constant $D$ (which will become clear in Lemma \ref{lem:later_operation_bigger_tag}). We need to ensure that (i) All values from \update{} operations that start after time $t + D$ must have tag strictly greater than $T$; and (ii) ``slow writers'' that participate in $Lattice(T)$ after time $t+D$ do \textit{not} introduce any exposed value with tag $T$. 
These two properties ensure that the lattice operation in \alg{} completes in $O(\sqrt{k})$ rounds in the worst case, and has a constant amortized round complexity. This observation together with the design that $LatticeRenewal()$ invokes at most three lattice operations give the desired round complexity.

\vspace{3pt}
\noindent\textbf{TS-ASO vs AC-ASO}:~~
Recall that TS-ASO is our general transformation algorithm adapted from \cite{attiya1995atomic} (presented in Appendix \ref{app:transformation_message}). We present high-level comparison between \alg{} and TS-ASO here, and details in Appendix \ref{app:comparison_TS-ASO_AC-ASO}. 
\begin{itemize}
    \item[(D1)] TS-ASO participates in the first lattice operation using the largest tag read from a quorum, whereas \alg{} adds $1$ to obtain a new tag for the first lattice operation.
    \item[(D2)] \alg{} has an initial lattice operation in addition to the ones in $LatticeRenewal()$.
    \item[(D3)] TS-ASO collects the states of at least a quorum and writes the join of the states collected to at least a quorum, and participates in at most two lattice operations to obtain a view. \alg{}, instead, directly uses at most three lattice operations to obtain a view.
\end{itemize}

Roughly speaking, (D1) allows nodes to invoke a lattice operation with the same tag around the same time. (D2) ensures property (P3) of $LatticeRenewal()$, particularly, some node without obtaining a direct view can always borrow a view. (D3) ensures property (P4) of $LatticeRenewal()$ and constant message size overhead.
The details of the necessity of three lattice operations are presented in Lemma \ref{lem:output_view_dominate_input_view}, and the round complexity analysis can be found in Lemma \ref{lem:worst_round_complexity_multishot}. 

\commentOut{++++++ Lewis: Let's follow Xiong's suggestion to remove discussion on Invariant. +++++

Concretely, we achieve the following invariant.

\begin{invariant}
\label{invariant_round_complexity}
Suppose a lattice operation with tag $T$ starts at time $t$. All values from \update{} operations that start after time $t + D$ must have tag strictly greater than $T$ for some constant $D$. 
\end{invariant}

Additionally, we also make sure that before an \update{} invokes the first lattice operation, it already introduces its value (along with a tag) to other nodes. Therefore, even if a ``slow'' node invokes a lattice operation with the same tag at a later time, no value will be \textit{exposed}; hence, such a slow behavior will not affect the progress of this lattice operation.

In order to ``preserve'' the round complexity of ELA, we need to guarantee that lattice operations with the same tag are invoked at roughly the same time (i.e., within a constant round of each other). \xiong{this is not completely right. As I said in the email, nodes can participate the lattice operation with the same tag at different times but the node which joins quite late does not introduce any new value} This guarantee together with Property 1 and 2 of $LatticeRenewal()$ lead to amortized constant rounds given enough number of operations.
While the ``similar starting time guarantee'' might seem trivial, it is subtle to achieve. For example, if we plug in ELA (the only early-stopping lattice agreement algorithm that we know of) into our transformation presented in Appendix \lewis{XXX} (based on Algorithm in~\cite{attiya1995atomic}), the resulting algorithm does \textit{not} have the same round complexity, because in the transformation algorithm, different nodes may invoke the \textit{same instance} of the lattice operation at very different time. In the worse case, a single node invokes an \update{} operation in a $n$ consecutive rounds, which will in turn force the lattice operation to complete in $\Omega(n)$ rounds.\footnote{In comparison, algorithm in~\cite{zheng2018linearizable} completes in $O(\log n)$ rounds even if nodes invoke it at different time.}

\alg{} addresses this issue by ensuring the following invariant:

\xiong{Lewis: check the following invariant}

\begin{invariant}
\label{invariant_round_complexity}
If there is a lattice operation with tag $T$ starting at time $t$, then all values from \update{} operations that start after time $t + D$ must have tag strictly greater than $T$. 
\end{invariant}

\xiong{Actually, I think we are making things more complicated. The following invariant is too technical. We just need to briefly state the high level ideas. When I read the following invariant, I am already lost. }
\begin{invariant}
\label{invariant2}
If there are exactly $k$ active nodes in a period of $c$ rounds~\lewis{Find out what is $c$?}, then at most $k$ lattice operations will be invoked with the same tag.
\end{invariant}

\alg{} achieves Invariant \ref{invariant2} using the following approaches to handle $\update{}$. We do not care about \scan here, since it does not introduce new values.

\begin{itemize}
    
    \item Each node contacts a quorum of nodes to figure out the current tag, say $r$.
    
    
    \item Each node uses $r+1$ as the tag, and  sends out its value-tag pair to all nodes before invoking its initial lattice operation with tag $r$. 
    
    \item After the initial lattice operation, the node invokes $LatticeRenewal(r')$ where $r'$ is the maximum of $r+1$ and the largest tag known by the node so far.
    
    \item Each $LatticeRenewal(r)$ invokes at most three lattice operation with varying tags. The first lattice operation always starts with tag $r$.
\end{itemize}

These approaches together ensures that (i) at most $k$ values would be exposed during $c$ rounds; and (ii) after $c$ rounds (before the first $LatticeRenewal(T)$), all \update{} will read a tag strictly larger than $T$. These two properties allow \alg{} to maintain Invariant \ref{invariant2}.

++++++ Lewis: Old Text that gets too complicated +++++}

\begin{algorithm*}[!hptb]
\begin{algorithmic}[1]
	\small
	\item[{\bf Local Variables:} /* These variables can be accessed and modified by any thread at $p$. */]{}
	
    \item[] $V[1\cdots n]$ \COMMENT{vector of ``views''. $V[j]$ is the set of values received from $j$}
    
    \item[] $maxTag$ \COMMENT{integer, largest tag ever seen via $\quotes{writeTag}$, $\quotes{echoTag}$ messages.} \\
    
    \item[] $D[1 \cdots n]$ \COMMENT{vector of views from \underline{good lattice operations}.}
    
    \vspace{5pt}
    
    \item[{\bf Derived Variable:}]
    
    \item[] $V^{\leq r} \gets [V[1]^{\leq r}, V[2]^{\leq r}, \dots, \dots, V[n]^{\leq r}]$
    \COMMENT{vector of ``views'' w/ tag at most $r$}
    
    \item[] \hrulefill

\begin{multicols}{2}
\item[{\bf Initialization}:]
\STATE $V \gets [\emptyset,\emptyset,\cdots,\cdots,\emptyset]$
\STATE $D \gets [\emptyset,\emptyset,\cdots,\cdots,\emptyset]$
\STATE $maxTag \gets 0$
\item[]

\item[{\bf When \update$(v)$ is invoked}:]
\STATE $r \gets readTag()$ \label{line:update_readTag}
\STATE $ts \gets ~ \tuple{r + 1, i}$ \label{line:value_ts}

\STATE Send $(\quotes{value},  \tuple{v, ts})$ to all \label{line:brodcast_val}
\STATE Lattice$(r)$~~~~~~~~~~~~$\rhd$\texttt{Phase 0} \label{line:initial_phase}
\STATE $r' \gets \max\{r + 1, maxTag\}$ \label{line:tag_increment}
\STATE $updateView \gets $ LatticeRenewal$(r')$ \label{line:update_lattice_renewals}
\STATE \textbf{Return} ACK

\item[]
\item[{\bf When {\sc Scan}() is invoked}:]
\STATE $r \gets readTag()$ \label{line:scan_readTag}
\STATE $scanView \gets$ LatticeRenewal$(r)$ \label{line:scan_lattice_renewals}
\STATE \textbf{Return} $extract(scanView)$

\item[]
\item[
/* Helper procedures */]
\vspace{5pt}
\item[\bf Procedure Lattice($r$):]
\STATE $writeTag(r)$  \label{line:lattice_writeTag}
\STATE \textbf{Wait until} $~~\EQ(V^{\leq r}, i) = True$
\label{line:lattice_termination_condition}   
\\/* Execute lines \ref{line:eq_true} to line \ref{line:lattice_end} atomically */ 
\STATE $V^* \gets$ the vector $V^{\leq r}$ that satisfies\\  $\EQ(V^{\leq r}, i)$ \underline{for the first time} \label{line:eq_true}
\IF{$maxTag \leq r$} \label{line:good_lattice_condition}

\STATE Send $(``goodLA", r)$ to all \label{line:broadcast_decision}
\STATE {\bf Return} 
$(true, V^*[i])$
\label{line:lattice_return_true}
\ELSE
\STATE {\bf Return} $(false, \emptyset)$ \label{line:lattice_end} \ENDIF
\item[]

\item[\textbf{Procedure LatticeRenewal($r$)}:]
\FOR{$phase \gets 1$ to $3$}
\STATE $(status, view) \gets Lattice(r)$ 
\IF{$status = true$}
\STATE{\bf Return} $view$~~~~~~~$\rhd$\texttt{Direct View}
\label{line:direct}
\ELSIF{$phase = 3$}
\STATE {\bf Break} \label{line:bad_lattice_op}
\ENDIF
\STATE $r \gets maxTag$ 
\ENDFOR
\STATE
\textbf{Wait until} receiving $(\quotes{goodLA}, r)$ 
\\\hspace{0.8in}from some node $j$ \label{line:wait_borrow}
\STATE {\bf Return} $D[j]$~~~~~~~~~~$\rhd$\texttt{Indirect View} \label{line:indirect}
\item[]

\item[\textbf{{Procedure extract($S$)}}:]
\STATE $Snap \gets [1, \cdots, n]$
\FOR{$j = 1$ to $n$}
\STATE $Snap[j] \gets v$, where $\langle v, \langle t', j \rangle \rangle \in S$, and
\\ \hspace{0.3in} $t'$ is the largest tag of $j$'s values in $S$
\ENDFOR
\STATE {\bf Return} $Snap$ 
\item[]

\item[\textbf{Procedure readTag()}:]
\STATE Send $(\quotes{readTag})$ to all 
\STATE \textbf{Wait until} receiving \\\hspace{0.8in} $\geq n-f$ $(\quotes{readAck}, *)$ msgs
\STATE \textbf{Return}  largest tag contained in $readAck$ msgs 

\item[] 
\item[]

\item[\textbf{Procedure writeTag($tag$)}:]
\STATE Send $(\quotes{writeTag}, tag)$ to all 
\STATE \textbf{Wait until} receiving \\\hspace{0.8in} $\geq n-f$ $(\quotes{writeAck}, tag)$ msgs

\item[] 

\item[/* Event handlers: executing in background */]
\item[/* All event handlers executed atomically */]
\item[{\bf Upon receiving $(\quotes{value}, \tuple{u, ts})$} from $j$:]
\STATE  Add $\tuple{u, ts}$ into $V[j]$, $V[i]$  \\ 
\IF{$\tuple{u,ts}$ has \textit{not} been seen before}
\STATE Send $(\quotes{value}, \tuple{u,ts})$ to all
\ENDIF
\item[]

\item[{\bf Upon receiving $(``writeTag", tag)$} from $j$:]
\STATE $maxTag \gets \max\{maxTag, tag\}$
\STATE Send $(\quotes{echoTag}, tag)$ to all
\STATE Send $(\quotes{writeAck}, tag)$ to $j$ 
\item[]

\item[{\bf Upon receiving $(``echoTag", tag)$} from $j$:]
\STATE $maxTag \gets \max\{maxTag, tag\}$
\item[]

\item[{\bf Upon receiving $(\quotes{readTag})$ from $j$}:]
\STATE Send $(``readAck",  maxTag)$ to $j$
\item[]

\item[{\bf Upon receiving $(\quotes{goodLA}, r)$ from $j$}:]
\STATE $D[j] \gets V[j]^{\leq r}$ \COMMENT{borrow $j$'s view}

~

\item[/* NOTE: All our event handlers are atomic;]
\item[]hence, when receiving a  $(\quotes{goodLA},*)$
\item[]Line 48 will be \underline{executed before} Line 29  
\item[] if there is a pending $LatticeRenewal()$ */
\end{multicols}
\end{algorithmic}
\caption{ASO: Code for node $i$}
\label{algo:ASO}
\end{algorithm*}

\subsubsection{\alg{} Description and Pseudocode}

The pseudocode of \alg{} is presented in Algorithm \ref{algo:ASO}. We first describe key variables used, followed by the procedures and message handlers.

\vspace{3pt}
\noindent{\bf Variables}:~~
Each value (written by an \update{} operation) is associated with a timestamp of the form $\tuple{r, j}$, where $r$ is the tag and $j$ is the ID of the writer who initiates the \update{}. The exact value of the tag in the timestamp is determined in the \update{} operation.
For brevity, we often use value to denote a \textit{value-timestamp pair}.
For a set of values $H$, we use $H^{\leq r}$ to denote the set of values with tag at most $r$. 

At all time, each node $i$ keeps track of a vector $V_i$ of size $n$, which represents the vector of ``view'' at node $i$. 
Formally, for $j \in [n]$, $V_i[j]$ is the set of written and/or forwarded values that $i$ has received from node $j$. In our design, each node $i$ needs to forward a value it receives for the first time. In this case, we say a value is forwarded by $i$. 
Two other variables are related to $V_i$: (i) $V_i^{\leq r}$ is the vector of views with tag at most $r$, i.e., $V^{\leq r} = [V[1]^{\leq r}, V[2]^{\leq r}, \dots, \dots, V[n]^{\leq r}]$; and (ii) $D_i[j]$ is a \textit{particular view borrowed from node $j$} that can be ``safely'' returned. 
The meaning of ``safe'' will become clear when we discuss the lattice operation.

Each node also keeps track of a variable $maxTag$, which represents the largest tag it has ever received via $\quotes{writeTag}$ messages or $\quotes{echoTag}$ messages. Note that it is possible that there are some values with tag larger than $maxTag$ in $V_i$.


\vspace{3pt}
\noindent{\bf Procedures}:~~
We explain two helper procedures, $Lattice(r)$ and $LatticeRenewal(r)$, and two interface procedures, \update$(v)$ and {\sc Scan}(). Other procedures are fairly straightforward from the pseudocode.

{\bf Lattice$(r)$}: 
Each node uses the $Lattice(r)$ procedure to run the $r$-th instance of lattice agreement, or in our terminology, lattice operation with tag $r$. 
The goal of a lattice operation is to solve lattice agreement, except that it is associated with an input tag $r$ and the termination condition depends on $r$ and the messages received, especially those with tag $\leq r$. 

Consider a Lattice$(r)$ invocation at node $i$. Node $i$ first writes the input tag $r$ to at least $n - f$ nodes. Then it waits until the \textit{equivalence quorum} predicate (Definition \ref{def:EQ}) becomes true for the first time. After that if the $maxTag$ value is strictly larger than $r$, then the lattice operation returns $\tuple{false, \emptyset}$. Otherwise, it returns $\tuple{true, V^*}$ where $V^* = V^{\leq r}$ is the vector that satisfies the equivalence quorum predicate. In this case, $Lattice(r)$ is said to be a good lattice operation, as defined next.
An important design choice here is that line 16 to 21 are executed \textit{atomically}. Therefore, once $V^*$ satisfies the equivalence quorum predicate for the first time, \underline{no other value is added to $V^*$}.

\begin{definition}
[Lattice Operation] We call each execution of the $Lattice(r)$ procedure as a \underline{lattice operation with tag $r$}. A lattice operation is \underline{good} if it returns $true$ at line \ref{line:lattice_return_true}.
\end{definition}

{\bf LatticeRenewal($r$)}: The $LatticeRenewal(r)$ procedure is also given a parameter $r$. 
It contains at most three lattice operations. If some lattice operation is good, it returns the view obtained by the good lattice operation, i.e., \textit{direct view}. If the first two lattice operation are \textit{not} good, then by definition, it means that node $i$ has observed a larger tag, i.e., condition at line \ref{line:good_lattice_condition} returns $false$. Therefore, node $i$ initiates the next lattice operation with tag equal to $maxTag$.
If the third lattice operation is also not good, then node $i$  waits for a $\quotes{goodLA}$ message from some other node $j$ to obtain a view from $j$'s good lattice operation. In this case, the view is called an \textit{indirect view} or a \textit{borrowed view}. 

It is fairly straightforward to see that our design satisfies the (P1) to (P3) stated earlier in Section \ref{sec:LR}. (P4) and the round complexity are less obvious, and depend on how we combine different components together.
Both \update{} and \scan operations use the $LatticeRenewal()$ procedure to obtain a view. This approach works owing to (P4).
Intuitively, the moment that \update{} and \scan obtains a view is the synchronization point.

{\sc \bf Update$(v)$}: To write value $v$,  node $i$ first obtains a tag by reading from at least $n - f$ nodes. Let $r$ denote the largest tag in the received $readAck$ messages. Then, $i$ constructs the timestamp of value $v$ as the $\tuple{r + 1, i}$ tuple. It sends value $v$ with its timestamp to all nodes. Then, a lattice operation with tag $r$ is invoked. This step is called the \textit{phase 0} lattice operation of the \update{} operation.
After the phase 0 lattice operation, the \update{} obtains a new tag $r'$ and executes $LatticeRenewal(r')$.
The view returned by $LatticeRenewal(r')$ is \textit{not} used; and hence discarded.
Node $i$ returns the ACK to complete the \update{}.

In addition to the $LatticeRenewal()$ procedure, another subtle point is to execute the phase 0 lattice operation \textit{before} invoking $LatticeRenewal()$.
The way we devise them ensures that
for each tag,
there is a \textit{good} lattice operation. Recall that a lattice operation is good if it returns $true$ at line 17.
Lemma \ref{lem:nonskipping_good_execution} presented later explains this statement in more details.

{\sc \bf Scan()}: The code for a \scan operation is quite simple. It first obtains a tag $r$ similarly.
Then, it executes the $LatticeRenewal(r)$. After $LatticeRenewal(r)$ returns a view $scanView$, the node takes the most recent value by each node in $scanView$ by executing the $extract(scanView)$ procedure.



\vspace{3pt}
\noindent{\bf Message Handlers}:~~
All the handlers execute in the background; hence, even if a node does not have a pending \update{} or \scan operation, it still processes messages. Moreover, all the handlers are executed \textit{atomically}, i.e., during the period that a handler is executing, no other part of the code can take step. All the handlers should be clear from the code. One subtle part to note is that a node does \textit{not} update its $maxTag$ variable when it receives a value with a larger tag from a $\quotes{value}$ message. The $maxTag$ variable is only updated when a node receives a $\quotes{writeTag}$ message or $\quotes{echoTag}$ message. This design helps \alg{} achieve the desired round complexity. Especially, we rely on it to prove Lemma \ref{lem:nonskipping_good_execution}. For completeness, we also present detailed description of message handlers is in Appendix \ref{app:message_handlers}.

\subsection{Proof of Correctness}

For correctness, we need to  prove termination and construct a linearizable sequence of \update{} and \scan operations for any execution. We first discuss important definitions and properties of our algorithm to facilitate the proof. Then, we prove termination and show the linearization construction.

\vspace{3pt}
\noindent\textbf{Useful Definitions: Tags and Views}:~~
We say an \update{} or \scan is direct if its $LatticeRenewal()$ procedure returns at line \ref{line:direct}; otherwise, it is indirect. Intuitively, an operation is direct if it there is a good lattice operation during $LatticeRenewal()$. 
We define the tag of an operation and value as following.

\begin{definition} 
[Tag of \update{} or \scan] \label{def:tag_operation} The tag of an \update{} or \scan operation is the tag of its last lattice operation.
\end{definition}

\begin{definition}
[Timestamp/Tag of a value] The timestamp of a value is the $\tuple{r+1, i}$ (tag-ID tuple) at line \ref{line:value_ts} in the \update$(v)$ procedure. The tag of a value is defined as the tag contained in its timestamp. \underline{For value $v$, we use $ts_v$ to denote its timestamp.}
\end{definition}

Now we introduce an important concept, \textit{view}, that is used throughout our proof. 

\begin{definition}
[View] 
\label{def:view}
We define the views for a node and operations as below:
\vspace{-0.1in}
\begin{itemize}
    \item For a node $i$, its view is defined as the set $V_i[i]$.
    \item For a good lattice operation with tag $T$ ($Lattice(T)$) at node $i$, its view is defined as the set of values with tag at most $T$ in $V_i[i]$ right after completing line \ref{line:lattice_termination_condition}, i.e., $V_i[i]^{\leq T}$.
    \item For an \update{} or \scan operation, its view is defined as the set returned by its $LatticeRenewal()$ procedure.
\end{itemize}
\end{definition}









We present two lemmas on properties of the tags. Lemma \ref{lem:value_tag_nonskipping} follows directly from line \ref{line:tag_increment}.

\begin{lemma} \label{lem:value_tag_nonskipping}
The tags of values are non-skipping, i.e., if there is a value with tag $T \geq 1$, then there is also a value with tag $T - 1$. 
\end{lemma}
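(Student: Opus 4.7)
The plan is to argue by induction on $T$, tracing the mechanism that assigns tag $T$ to some value back to an earlier update that must have assigned tag $T-1$. The starting observation is that a value's tag is set at line \ref{line:value_ts} as $r+1$, where $r$ is the return of $readTag()$; so if a value with tag $T$ exists, some node had $maxTag \geq T-1$ when answering that $readTag$. Since $maxTag$ is only updated via $writeTag$/$echoTag$ messages (the $value$ handler never touches it) and those messages originate from $Lattice(t)$ at line \ref{line:lattice_writeTag}, some $Lattice(t)$ with $t \geq T-1$ must have been invoked earlier in the execution.

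The key step is to fix the \emph{first-ever} $Lattice(t)$ invocation with $t \geq T-1$ and do a case analysis on where it is called. A phase~0 invocation (line \ref{line:initial_phase}), a scan's $LatticeRenewal$ first iteration (line \ref{line:scan_lattice_renewals}), or a post-failure iteration inside $LatticeRenewal$ (whose tag is drawn from $maxTag$) would each require some prior $Lattice$ with tag $\geq T-1$ to already have set $maxTag$ that high, contradicting minimality. The only remaining source is the first iteration of $LatticeRenewal$ invoked by an update at line \ref{line:update_lattice_renewals}, whose input tag is the $r' = \max\{r+1, maxTag\}$ of line \ref{line:tag_increment}.

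In that remaining case, $r' = maxTag$ is again ruled out by minimality, so $r' = r+1 \geq T-1$. Combined with the bound $r < T-1$ (no prior $Lattice$ with a large enough tag means the update's own $readTag$ cannot return $T-1$), this forces $r = T-2$ and $r' = T-1$. But then the invoking update has already assigned its own value the tag $r+1 = T-1$ at line \ref{line:value_ts}, yielding the required witness and completing the induction. The base case $T=1$ is vacuous (no ``value with tag $0$'' needs to exist), consistent with the initial $maxTag = 0$. The main obstacle is the case analysis showing that only the update branch of line \ref{line:tag_increment} can introduce a fresh $Lattice$ tag, and that it does so while simultaneously producing a value of the ``one-less'' tag~--- which is precisely why the author flags line \ref{line:tag_increment} as the line from which the lemma ``directly'' follows.
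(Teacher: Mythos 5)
Your proof is correct and supplies, via a first-occurrence/minimality argument over all $Lattice$ invocations carrying a tag $\geq T-1$, exactly the case analysis that the paper compresses into the single remark that the lemma ``follows directly from line~\ref{line:tag_increment}'': the only possible source of the first such invocation is the first $LatticeRenewal$ iteration of an \update{} whose $readTag()$ returned $T-2$, and that \update{} has already timestamped its value with $T-1$ at line~\ref{line:value_ts}, which is the required witness. The one caveat is the $T=1$ case, where the lemma as literally stated would require a value with tag $0$ (which never exists, since every value receives tag $r+1\geq 1$); your treatment of it as vacuous matches how the lemma is actually consumed in Lemma~\ref{lem:nonskipping_good_execution}, but this is a looseness in the paper's statement rather than something your argument establishes.
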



The proof of the following lemma explains why we need the phase 0 lattice operation.

\begin{restatable}{lemma}{consecutiveGoodLA} \label{lem:nonskipping_good_execution}
If the largest tag in the system is $T$ at time $t$, i.e., $\max_{i \in [n]} maxTag_i = T$ at time $t$, then for each $1 \leq z \leq T - 1$, there is a good lattice operation with tag $z$ before time $t$.
\end{restatable}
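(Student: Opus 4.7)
My plan is to prove the statement by induction on $T$. The base case $T = 1$ is vacuous because the index range $[1,T-1]$ is empty. For the inductive step with $T \ge 2$, I would let $t^\ast \le t$ denote the earliest moment at which any node's $maxTag$ becomes equal to $T$; this is well defined since $maxTag$ is monotone non-decreasing (every update applies a $\max$) and some node has $maxTag = T$ by time $t$.

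The key step will be to pin down exactly why $t^\ast$ occurs. Since $maxTag$ is raised only by incoming $\quotes{writeTag}$ or $\quotes{echoTag}$ messages, and every $\quotes{echoTag}(T)$ is produced only by the handler of some $\quotes{writeTag}(T)$, the event at $t^\ast$ must be triggered by some node $j$ executing $writeTag(T)$ at line \ref{line:lattice_writeTag} within a $Lattice(T)$ call, with $maxTag_j < T$ when $j$ began $Lattice(T)$. I would then rule out all but one way of entering $Lattice(T)$: if $j$ entered via Phase 0 of $\update$ (line \ref{line:initial_phase}) or via the first iteration of $LatticeRenewal$ inside $\scan$ (line \ref{line:scan_lattice_renewals}), the tag comes from $readTag()$, which can return $T$ only if some responding node already had $maxTag = T$, contradicting the choice of $t^\ast$; and if $j$ was in the second or third iteration of some $LatticeRenewal$, the tag used is the local $maxTag_j$, which would again force $maxTag_j = T$ before $t^\ast$. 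The only feasible case is the first iteration of $LatticeRenewal(r')$ inside an $\update$, with $r' = \max\{r+1, maxTag_j\}$ at line \ref{line:tag_increment}; since $maxTag_j < T$, this requires $r+1 = T$, so $j$'s Phase 0 call must have been $Lattice(T-1)$.

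What remains is to confirm (i) that $j$'s Phase 0 $Lattice(T-1)$ was good and (ii) that the inductive hypothesis applies at a suitable earlier time. For (i), $j$ reached line \ref{line:tag_increment} and entered $LatticeRenewal$, so its Phase 0 returned; it therefore executed the test at line \ref{line:good_lattice_condition} with $maxTag_j \le T-1$ (still before $t^\ast$), so $Lattice(T-1)$ returned $(true,\cdot)$ and is good. For (ii), this good $Lattice(T-1)$ executed $writeTag(T-1)$ and collected $n-f$ acknowledgements before $t^\ast$, placing at least $n-f$ nodes at $maxTag = T-1$ at time $t^\ast - \epsilon$; combined with the absence of tag $T$ anywhere before $t^\ast$, this yields $\max_i maxTag_i = T - 1$ just before $t^\ast$. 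Applying the inductive hypothesis at $t^\ast - \epsilon$ then produces good lattice operations for every tag in $[1, T-2]$ before $t^\ast - \epsilon$, and together with $j$'s good $Lattice(T-1)$ these furnish witnesses for every $z \in [1, T-1]$ before $t$.

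The main obstacle will be the exhaustive case analysis at $t^\ast$: every code site capable of launching $Lattice(T)$ has to be examined, and every alternative to the $\update$/$LatticeRenewal$ pathway whose Phase 0 is $Lattice(T-1)$ must be shown to force some node to already have $maxTag = T$ strictly before $t^\ast$, contradicting minimality. This is exactly the structural role of the Phase 0 lattice operation in $\update$: it supplies the good $Lattice(T-1)$ witness that drives the induction, rather than leaving its existence to chance via some concurrent operation.
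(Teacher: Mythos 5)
Your argument is correct and is essentially the paper's own proof: the paper isolates the same key claim (the first node to issue $writeTag(T)$ must be an \update{} whose phase-0 operation is a good $Lattice(T-1)$, since every other source of tag $T$ would require some node to already hold $maxTag = T$) and then "applies the claim inductively," which is exactly your induction on $T$. Your version merely makes explicit two points the paper compresses — the exhaustive case analysis over the call sites of $Lattice(\cdot)$ and the verification that $\max_i maxTag_i = T-1$ just before $t^\ast$ so the inductive hypothesis applies — so no substantive difference.
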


\begin{proof}
To prove the lemma, we first prove the following claim. 

\begin{claim}
\label{claim:tag_before_t}
If the largest tag in the system is $T$ at time $t$, then there exists a good lattice operation with tag $T - 1$ that completes before $t$. 
\end{claim}

\begin{proof}[Proof of Claim \ref{claim:tag_before_t}]
Observe that a value with tag $T$ must be sent by some node, since we consider only crash failures. 
Let node $i$ be the first node that sends tag $T$ to all other nodes inside the $writeTag$ procedure in operation $Op$.
Since $Op$ is the first operation to send tag $T$, 
$Op$ must be an \update{}  operation. Let $L_0$ denote the phase 0 lattice operation inside $Op$. 
When $Op$ executes line \ref{line:tag_increment}, we have $maxTag_i < T$, 
and $r_i = T$ when line \ref{line:tag_increment} completes. This implies that $r_i = T-1$ at line \ref{line:update_readTag}.  Thus, the phase 0 lattice operation $L_{0}$ of $Op$ at line \ref{line:initial_phase} must have tag $T - 1$.
That is, node $i$ invokes $Lattice(T-1)$ as the phase 0 lattice operation.

Since by assumption, $Op$ is the first operation that proposes tag $T$ and it proposes tag $T$ after line \ref{line:tag_increment} (through the $writeTag$ step at line \ref{line:lattice_writeTag}), no node has proposed tag $T$ before the execution of line \ref{line:tag_increment}.
Hence, during the execution of the phase 0 lattice operation $L_0$ at line \ref{line:initial_phase}, we have $maxTag_i \leq T - 1$ . Recall that the tag of $L_0$ is $T - 1$.  
This implies that at line \ref{line:good_lattice_condition} inside $L_0$, we have $maxTag_i \leq T-1 = r_i$. 
Thus, $L_0$ is a good lattice operation. Moreover, by assumption, $L_0$ completes before time $t$. This proves Claim \ref{claim:tag_before_t}.
\end{proof}

Since tags are non-skipping by Lemma \ref{lem:value_tag_nonskipping}, applying Claim  \ref{claim:tag_before_t} inductively gives us that for each $1 \leq z \leq T - 1$, there exists a good lattice operation with tag $r$ before time $z$.
\end{proof}





\vspace{3pt}
\noindent\textbf{Termination}:~~We show that each operation eventually terminates if each lattice operation terminates. We will show that each lattice operation takes $O(\sqrt{k})$ rounds later.

\begin{restatable}{lemma}{terminationASOLemma}[Termination] \label{lem:operation_termination}
If each lattice operation eventually terminates, then \update{} and \scan operations in Algorithm~\ref{algo:ASO} eventually terminate.
\end{restatable}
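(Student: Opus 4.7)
The plan is to walk through the code of \update{} and \scan, identify every wait, and show each completes in finite time.

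Both operations wrap their work around $readTag()$, and each invocation of $Lattice(\cdot)$ begins with $writeTag(\cdot)$. These two helpers send to all and block only for $n-f$ matching acknowledgements; since the corresponding event handlers always reply and $f < n/2$ guarantees at least $n-f$ nonfaulty responders, both procedures return. Each $Lattice(\cdot)$ call additionally performs the equivalence-quorum wait at line \ref{line:lattice_termination_condition} followed by straight-line code, and terminates by the lemma's hypothesis. Hence across both operations the only potentially non-terminating wait is the one for $(\quotes{goodLA}, r)$ at line \ref{line:wait_borrow} inside $LatticeRenewal$.

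Inside $LatticeRenewal(r)$, the \texttt{for} loop performs at most three $Lattice(\cdot)$ calls, each terminating by hypothesis, so line \ref{line:wait_borrow} is reached in finite time whenever it is reached at all. If any of the three calls returns $(true,\cdot)$, the procedure exits at line \ref{line:direct} with a direct view. Otherwise control reaches line \ref{line:wait_borrow} with $r$ equal to the phase-3 parameter, and I must show that some node eventually sends $(\quotes{goodLA}, r)$ to $i$ and that the message is delivered.

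Reaching the wait means phase 3 returned $(false, \emptyset)$, so at line \ref{line:good_lattice_condition} of that call $maxTag_i > r$; in particular the system-wide largest tag strictly exceeds $r$. Lemma \ref{lem:nonskipping_good_execution} then supplies a completed good $Lattice(r)$ somewhere, whose performer executes the broadcast at line \ref{line:broadcast_decision}. The main obstacle I anticipate is that this performer may be faulty and crash mid-broadcast, so that the message fails to reach $i$. I would address it by tracing the $(\quotes{writeTag},\cdot)$/$(\quotes{echoTag},\cdot)$ chain that drove $maxTag_i$ above $r$ and using the fact that a successful $writeTag$ forces a quorum (hence at least one nonfaulty node) to re-echo, to argue that enough nonfaulty nodes reach line \ref{line:good_lattice_condition} inside a $Lattice(r)$ while their own $maxTag$ is still $\leq r$, and so complete a good lattice operation themselves. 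Once any nonfaulty node broadcasts $(\quotes{goodLA}, r)$, reliable point-to-point channels deliver it to $i$, the handler populates $D_i[j]$, and $LatticeRenewal$ returns the indirect view at line \ref{line:indirect}.
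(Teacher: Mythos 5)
Your decomposition and the first two thirds of the argument coincide with the paper's proof: after discharging $readTag$/$writeTag$ via the $n-f \le$ (number of nonfaulty nodes) count and the equivalence-quorum wait via the hypothesis, the only remaining wait is line \ref{line:wait_borrow}; and, like the paper, you close in on it by observing that a non-good phase-3 operation means $maxTag_i > r$ at line \ref{line:good_lattice_condition}, so Lemma \ref{lem:nonskipping_good_execution} yields a \emph{completed} good lattice operation with tag $r$ somewhere in the system.

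The issue is your last step. The obstacle you anticipate---the performer of that good $Lattice(r)$ crashing in the middle of the broadcast at line \ref{line:broadcast_decision}---cannot arise for the operation Lemma \ref{lem:nonskipping_good_execution} hands you: a good lattice operation is, by definition, one that returns $true$ at line \ref{line:lattice_return_true}, which comes \emph{after} the send-to-all at line \ref{line:broadcast_decision}, and the lemma supplies one that has completed. Hence the command sending $(\quotes{goodLA}, r)$ to node $i$ was issued in full before any subsequent crash, and the model's reliability assumption states precisely that such a message is delivered even if the sender crashes after completing the send (this is why the paper notes the channel can be realized by reliable broadcast). That one-sentence observation is all that is needed, and it is how the paper finishes. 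By contrast, the workaround you sketch---tracing the $\quotes{writeTag}$/$\quotes{echoTag}$ chain to locate a \emph{nonfaulty} node that itself completes a good $Lattice(r)$---should not be expected to go through: the proof of Lemma \ref{lem:nonskipping_good_execution} only exhibits a good $Lattice(r)$ at the (possibly faulty) node that first wrote tag $r+1$, and a nonfaulty node that later runs $Lattice(r)$ may well observe $maxTag > r$ at line \ref{line:good_lattice_condition} and therefore not be good, so ``enough nonfaulty nodes complete a good lattice operation'' is not guaranteed. Drop that detour and substitute the direct argument above; the rest of your proof stands.
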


\begin{proof}[Proof Sketch]
We show that each $LatticeRenewal()$ eventually terminates. The only blocking part is line \ref{line:wait_borrow}. Lemma \ref{lem:nonskipping_good_execution} implies that if a node observes a tag $T$, then it must be able to borrow a good view for some tag smaller than $T$. 
\end{proof}

\commentOut{++++++ Lewis: remove for the sake of space.
\begin{proof}
Since the only blocking code that can prevent an operation from terminating is line~\ref{line:update_lattice_renewals} in an \update{} and line~\ref{line:scan_lattice_renewals} in a \scan operation that calls $LatticeRenewal()$. 
The only two return statements in $LatticeRenewal()$ are on line numbers~\ref{line:direct} (if the call to the lattice operation is good) when termination is straightforward and line~\ref{line:indirect} where termination is proved in Lemma~\ref{lem:operation_termination}. 
\end{proof}
++++++++}


\vspace{3pt}
\noindent\textbf{Useful Lemmas: Comparable Views}:~~
Next we prove Invariant \ref{invariant}, which is formally stated in Lemma \ref{lem:lattice_comparable_view}.

\begin{restatable}{lemma}{viewFromGoodLattice} \label{lem:view_from_good_lattice}
The view of each \update{} or \scan operation is the same as the view of some good lattice operation. 
\end{restatable}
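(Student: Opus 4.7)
The plan is to case-split on how the $LatticeRenewal(r)$ call inside the \update{} or \scan operation returns. By inspection, $LatticeRenewal$ returns at exactly one of two points: the $\quotes{Direct View}$ return at line~\ref{line:direct} or the $\quotes{Indirect View}$ return at line~\ref{line:indirect}. The direct-view case is immediate: the returned $view$ comes from some internal $Lattice(r')$ call that took the $true$ branch at line~\ref{line:lattice_return_true}, which is by definition a good lattice operation, and the returned set $V^*[i] = V_i[i]^{\le r'}$ coincides with the view of that good lattice operation per Definition~\ref{def:view}.

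The substantive work is the indirect-view case, where the returned view is $D_i[j]$ for some node $j$ whose $(\quotes{goodLA}, r)$ message unblocks line~\ref{line:wait_borrow}. Since $(\quotes{goodLA}, r)$ is sent only at line~\ref{line:broadcast_decision} inside the atomic block at lines~\ref{line:eq_true}--\ref{line:lattice_end} of $Lattice(r)$, and only when line~\ref{line:lattice_return_true} is reached, such a message witnesses a good $Lattice(r)$ at $j$ whose view is $V_j[j]^{\le r}$ captured atomically at line~\ref{line:eq_true}. I would then establish the identity $D_i[j] = V_j[j]^{\le r}$ at that atomic moment via two inclusions based on FIFO channels: (i) every value in $V_j[j]^{\le r}$ at the atomic moment has previously been broadcast by $j$ via a $\quotes{value}$ message before $j$ sent $(\quotes{goodLA}, r)$, since $j$ forwards every newly received value and also broadcasts its own \update{} value, so by FIFO the value lies in $V_i[j]^{\le r}$ by the time $i$ processes $(\quotes{goodLA}, r)$; (ii) conversely, any value in $V_i[j]^{\le r}$ at that time was delivered by a $\quotes{value}$ from $j$ that by FIFO preceded $(\quotes{goodLA}, r)$, so $j$ had already recorded it in $V_j[j]^{\le r}$ at the atomic moment.

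The main obstacle I anticipate is ruling out races that could leave $D_i[j]$ in the wrong state between the wait at line~\ref{line:wait_borrow} being satisfied and the actual return at line~\ref{line:indirect}. In particular, a later $(\quotes{goodLA}, r')$ message from $j$ with $r' \ne r$ could in principle overwrite $D_i[j]$ between these two steps. The pseudocode explicitly anticipates this via the atomicity note stating that the handler's assignment to $D[j]$ executes before the return statement whenever a $LatticeRenewal$ is pending, so the returned $D_i[j]$ is precisely the value written by the handler that triggered the unblocking. Combining that atomicity argument with the FIFO inclusions above pins down which good lattice operation at $j$ the returned view corresponds to, closing the indirect case.
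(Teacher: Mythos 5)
Your proof is correct and follows essentially the same route as the paper: a case split on whether $LatticeRenewal()$ returns a direct view (immediately the view of a good lattice operation by Definition~\ref{def:view}) or an indirect view, with the indirect case resolved by noting that only good lattice operations emit $\quotes{goodLA}$ messages and that FIFO delivery forces $D_i[j]$ to equal $V_j[j]^{\leq r}$ as captured in $j$'s atomic block. The paper states this more tersely (deferring the two-inclusion FIFO argument to its description of the $\quotes{goodLA}$ handler), so your write-up is simply a more explicit version of the same argument, including the handler-atomicity point that rules out $D_i[j]$ being overwritten before the return.
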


Similar to Lemma \ref{lem:comparable_views_for_same_node}, we obtain the following lemma on the views (bounded by tag $T$) at node $i$ and $j$ due to our assumption on FIFO channel. Its proof is in Appendix \ref{app:comparable_tagged_views_for_same_node}.

\begin{restatable}{lemma}{comparableTagViewsSameNode} \label{lem:comparable_tagged_views_for_same_node}
For any two nodes $i$ and $j$ and tag $T$, fix any time $t$ and $t'$, the set $V_i[s]^{\leq T}$ at time $t$ and the set $V_j[s]^{\leq T}$ at time $t'$ are comparable for each node $s$.
\end{restatable}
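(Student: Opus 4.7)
The plan is to reduce this statement to the untagged comparability of $V_i[s]$ and $V_j[s]$ and then observe that filtering by tag preserves set inclusion. The untagged comparability in the ASO setting follows essentially the same argument as Lemma \ref{lem:comparable_views_for_same_node} for ELA, because Algorithm \ref{algo:ASO}'s handler for ``value'' messages is structurally identical to ELA's (add the incoming tuple to $V[j]$ and $V[i]$; forward to all on first sight), and the channels are FIFO.

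Concretely, I would first fix nodes $i, j, s$ and argue that, because each tuple $\tuple{u, ts}$ is first-seen exactly once by node $s$, the sequence of ``value'' messages $s$ ever emits (whether from an initial update at line \ref{line:brodcast_val} or from a forward inside the handler) is a single well-defined sequence. By FIFO on the outgoing channels from $s$, for any time $t$, the messages $i$ has received with $s$ in the sender field form a prefix of this sequence; similarly for $j$ at $t'$. Two prefixes of the same sequence, viewed as sets, are comparable, which settles the case $s \notin \{i,j\}$.

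For $s = i$ (symmetrically $s = j$), I need to relate $V_i[i]$ to the sequence $i$ sends out with sender $i$. Since each event handler is atomic, every tuple added to $V_i[i]$ via a received ``value'' message is also forwarded by $i$ to all nodes in the same atomic step whenever it is new; moreover, any tuple $i$ introduces through its own \update{} is sent to all (including $i$ itself, which then re-enters the handler). Consequently, $V_i[i]$ at time $t$ coincides as a set with the set of tuples $i$ has sent out with sender field $i$ on the channel from $i$ to $j$ by time $t$. A second appeal to FIFO gives the prefix relation between $V_i[i]$ and $V_j[i]$, hence comparability.

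Finally, the tag projection $X \mapsto X^{\leq T}$ is monotone under inclusion: $X \subseteq Y$ implies $X^{\leq T} \subseteq Y^{\leq T}$. Applying this to whichever of the two inclusions the untagged argument produced yields $V_i[s]^{\leq T}$ and $V_j[s]^{\leq T}$ comparable. The main subtlety I anticipate is the $s = i$ case, since the initial send at line \ref{line:brodcast_val} sits inside the \update{} procedure rather than an atomic handler, so I must justify carefully that a newly-written value is reflected in $V_i[i]$ consistently with the outgoing FIFO stream to $j$; once this interleaving is pinned down, the remainder is a routine FIFO-prefix argument followed by a one-line application of the tag filter.
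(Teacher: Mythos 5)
Your proposal is correct and follows essentially the same route as the paper: the paper's proof is exactly the FIFO-prefix argument (each of $V_i[s]^{\leq T}$ and $V_j[s]^{\leq T}$ equals the non-decreasing set $V_s[s]^{\leq T}$ at some earlier time, hence they are comparable), with the tag filter handled implicitly rather than via your explicit monotonicity step. Your extra care with the $s=i$ case (relating $V_i[i]$ to the outgoing stream with sender $i$) addresses a detail the paper's one-paragraph proof glosses over, but it is a refinement of the same argument, not a different one.
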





Next, we prove an important lemma which shows our key usage of lattice operation and the equivalence quorum technique. Lemma \ref{lem:lattice_comparable_view} is a formal statement of Invariant \ref{invariant}. We put its full proof in Appendix \ref{app:lattice_comparable_view}.

\begin{restatable}{lemma}{laComparableViews} \label{lem:lattice_comparable_view}
The views returned by all good lattice operations are comparable. 
\end{restatable}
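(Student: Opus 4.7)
The plan is to analyze two good lattice operations $L_1$ at node $i$ and $L_2$ at node $j$ with tags $T_1$ and $T_2$ respectively. WLOG assume $T_1 \le T_2$, and let $t_1, t_2$ denote the atomic moments at which their EQ predicates first hold. Their views (Definition~\ref{def:view}) are $U_1 = V_i[i]^{\le T_1}$ at time $t_1$ and $U_2 = V_j[j]^{\le T_2}$ at time $t_2$. Because the view-extraction block of $Lattice$ is atomic, each good operation certifies a matching-view quorum: sets $Q_1, Q_2 \subseteq [n]$ of size at least $n-f$ with $V_i[k]^{\le T_1} = U_1$ at $t_1$ for all $k \in Q_1$ and $V_j[k]^{\le T_2} = U_2$ at $t_2$ for all $k \in Q_2$.

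For the equal-tag case $T_1 = T_2$, standard quorum intersection suffices: pick $s \in Q_1 \cap Q_2$ (non-empty since $2(n-f) > n$) and apply Lemma~\ref{lem:comparable_tagged_views_for_same_node} to node $s$ at tag $T_1$. The resulting comparable sets $V_i[s]^{\le T_1}$ at $t_1$ and $V_j[s]^{\le T_1}$ at $t_2$ are exactly $U_1$ and $U_2$, so comparability is immediate.

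The main obstacle is the cross-tag case $T_1 < T_2$, in which the direct quorum-intersection argument admits the unfortunate subcase where $V_j[s]^{\le T_1}$ at $t_2$ is strictly contained in $V_i[s]^{\le T_1}$ at $t_1$---by itself this does not yield a containment between $U_1$ and $U_2$. To overcome it, I would exploit the $writeTag(T_2)$ subroutine that $L_2$ must complete before its EQ moment. This step produces a write-quorum $Q_2^w$ of size at least $n-f$ whose members each atomically processed $writeTag(T_2)$, set $maxTag \ge T_2$, and in the same handler sent $echoTag(T_2)$ to all nodes and $writeAck(T_2)$ back to $j$. I would pick $s' \in Q_1 \cap Q_2^w$ (non-empty by quorum intersection) and let $\tau$ be the instant at which $s'$ processed $writeTag(T_2)$.

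The argument then concludes via FIFO on two distinct channels. Because $L_1$ is good, $maxTag_i \le T_1 < T_2$ at time $t_1$, so $i$ has not received $s'$'s $echoTag(T_2)$ by $t_1$; FIFO on the $s' \to i$ channel then forces every message $i$ has received from $s'$ by $t_1$ to have been sent strictly before $\tau$, and thus every value in $V_i[s']$ at $t_1$ was broadcast by $s'$ (to all destinations, including $j$) before $\tau$. FIFO on the $s' \to j$ channel, together with the fact that $j$'s $writeTag(T_2)$ returned before $t_2$---so $s'$'s $writeAck(T_2)$ has been delivered to $j$ by $t_2$---implies that every message $s'$ sent before $\tau$ has reached $j$ by $t_2$. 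Since $s' \in Q_1$ gives $V_i[s']^{\le T_1} = U_1$ at $t_1$, and the value-receipt handler ensures $V_j[s'] \subseteq V_j[j]$, we conclude $U_1 \subseteq V_j[j]^{\le T_2}$ at $t_2$, which is $U_2$. The delicate step will be this two-channel FIFO argument, which leans crucially on the atomicity of the $writeTag$ handler.
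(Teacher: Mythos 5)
Your proposal is correct and follows essentially the same route as the paper's proof: the equal-tag case via equivalence-quorum intersection plus Lemma~\ref{lem:comparable_tagged_views_for_same_node}, and the cross-tag case via a node in the intersection of $L_1$'s equivalence quorum and $L_2$'s $writeTag$ acknowledgment quorum, using the $echoTag$/goodness contradiction on the $s'\to i$ channel and FIFO ordering relative to the $writeAck$ on the $s'\to j$ channel to conclude $U_1 \subseteq U_2$. The paper's Observations~1 and~2 are exactly your two-channel FIFO argument, so nothing is missing.
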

\begin{proof}[Proof Sketch]
For any two lattice operations $Op_i$ and $Op_j$ with tag $T_i$ and $T_j$, if $T_i = T_j$, then Lemma \ref{lem:comparable_tagged_views_for_same_node} and the equivalence quorum predicate imply that their views must be comparable. Otherwise, assume w.l.o.g $T_i < T_j$. Our algorithm guarantees that the view of $Op_i$ must be a subset of the view of $Op_j$. Intuitively, the fact that $Op_i$ does not observe $T_j$ at line \ref{line:good_lattice_condition} implies that $Op_j$ must complete its line \ref{line:lattice_writeTag} step after $Op_i$ has completed. This ensures that $Op_j$ must have received all values in the view of $Op_i$ when $Op_j$ starts line \ref{line:lattice_termination_condition}. 
\end{proof}

Lemma \ref{lem:comparable_views} immediately follows from Lemma \ref{lem:view_from_good_lattice}  and \ref{lem:lattice_comparable_view}. Lemma \ref{lem:comparable_views} allows us to construct a linearization of \scan and \update{} operations later. 

\begin{lemma} \label{lem:comparable_views}
The views returned by all \update{} and \scan operations are comparable. 
\end{lemma}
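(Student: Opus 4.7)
The plan is to derive Lemma \ref{lem:comparable_views} as a direct corollary of the two preceding lemmas, without doing any additional case analysis on the structure of $LatticeRenewal()$. Since the statement is about all \update{} and \scan operations (both direct and indirect), the key is that \emph{every} such operation's view can be traced back to some good lattice operation, and then apply comparability on good lattice views.

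First I would fix two arbitrary operations $Op_1$ and $Op_2$, each of which is either an \update{} or a \scan, and let $U_1$ and $U_2$ denote their respective views, as given in Definition \ref{def:view} (the set returned by the operation's call to $LatticeRenewal()$). By Lemma \ref{lem:view_from_good_lattice}, there exist good lattice operations $L_1$ and $L_2$ whose views (in the sense of Definition \ref{def:view} for a good lattice operation, i.e., $V^{\leq T}[\,\cdot\,]$ at the moment the equivalence quorum predicate first becomes true) coincide with $U_1$ and $U_2$, respectively. Note this step uniformly handles both the direct case (where $L_i$ is the good lattice operation invoked inside $Op_i$'s own $LatticeRenewal()$ and the view is returned at line \ref{line:direct}) and the indirect case (where $Op_i$ borrows $D_j$ from some other node's good lattice operation at line \ref{line:indirect}).

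Next I would invoke Lemma \ref{lem:lattice_comparable_view}, which asserts that the views of any two good lattice operations are comparable. Applying this to $L_1$ and $L_2$ yields that their views, and hence $U_1$ and $U_2$, satisfy $U_1 \subseteq U_2$ or $U_2 \subseteq U_1$. Since $Op_1$ and $Op_2$ were arbitrary, this gives the desired comparability for all \update{} and \scan views.

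I do not expect any real obstacle here; the two cited lemmas were crafted precisely to make this composition trivial. The only subtlety worth stating explicitly is that Lemma \ref{lem:view_from_good_lattice} is what lets us handle indirect views uniformly with direct ones, so the proof should explicitly note this uniform treatment rather than splitting into cases on $phase$ inside $LatticeRenewal()$.
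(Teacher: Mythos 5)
Your proposal is correct and matches the paper's argument exactly: the paper derives Lemma \ref{lem:comparable_views} as an immediate consequence of Lemma \ref{lem:view_from_good_lattice} and Lemma \ref{lem:lattice_comparable_view}, which is precisely the composition you spell out.
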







\vspace{3pt}
\noindent\textbf{Useful Lemma: Visible Views}:~~
To respect the atomicity semantics, we also need to ensure that (i) once an \update{} is completed, then its value is visible to subsequent \scan's; and (ii) once a \scan reads certain set of values, these values are also visible to subsequent \scan's. We prove these two through the usage of views and tags. 

The lemma below is straightforward from the code. Refer Definition \ref{def:view} for view definition. 



\begin{lemma} \label{lem:good_lattice_downward_validity}
For a good lattice operation $Op$ by node $i$ with tag $T$, let $H$ denote the view of node $i$ right before $Op$ execute  line \ref{line:lattice_termination_condition} and $H_{Op}$ denote the view of $Op$. Then, $H^{\leq T} \subseteq H_{Op}$. 
\end{lemma}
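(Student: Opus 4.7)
The plan is to reduce the claim to a monotonicity observation about the local set $V_i[i]$. First I would unfold the two definitions from Definition \ref{def:view}: $H$ is $V_i[i]$ read at the instant immediately before node $i$ begins executing line \ref{line:lattice_termination_condition} inside $Op$, so $H^{\leq T}$ is the set of elements of that snapshot whose tag is at most $T$; and $H_{Op}$, by the definition of the view of a good lattice operation, is $V_i[i]^{\leq T}$ read at the instant immediately after node $i$ completes line \ref{line:lattice_termination_condition}. Thus the lemma reduces to comparing the filtered set $V_i[i]^{\leq T}$ at two successive instants in $i$'s local execution.

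Next, I would observe that $V_i[i]$ is monotonically non-decreasing throughout the run. The only code that ever writes to $V_i[i]$ in Algorithm \ref{algo:ASO} is the handler for $(\quotes{value}, \tuple{u, ts})$ messages (line \ref{line:ela_add} of the corresponding handler, ``Add $\tuple{u, ts}$ into $V[j], V[i]$''); no branch in the pseudocode ever removes an element from $V_i[i]$. Hence if $t_1 \le t_2$, then $V_i[i]$ at time $t_1$ is a subset of $V_i[i]$ at time $t_2$, and restricting both sides to the tag-$\leq T$ projection preserves this inclusion, so $V_i[i]^{\leq T}$ at $t_1$ is a subset of $V_i[i]^{\leq T}$ at $t_2$.

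Applying this to $t_1 = $ ``right before line \ref{line:lattice_termination_condition}'' and $t_2 = $ ``right after line \ref{line:lattice_termination_condition}'' yields $H^{\leq T} \subseteq V_i[i]^{\leq T}|_{t_2} = H_{Op}$, which is exactly the statement of the lemma. I do not expect a genuine obstacle here; the only point that requires care is being precise about the two reference instants, since message handlers may interleave with the body of $Op$ and may in fact strictly enlarge $V_i[i]$ between them (the inclusion need not be equality). The argument does not rely on $Op$ being good, but the hypothesis is consistent with how this lemma will be invoked later when combined with Lemma \ref{lem:lattice_comparable_view}.
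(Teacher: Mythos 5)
Your proof is correct and matches the paper's intent: the paper simply asserts this lemma is ``straightforward from the code,'' and the monotonicity of $V_i[i]$ (values are only ever added, never removed, by the \quotes{value} message handler) together with the fact that taking the tag-$\leq T$ projection preserves inclusion is exactly the omitted argument. Your added care about the two reference instants and the possibility of strict inclusion due to interleaved handlers is accurate and does not change the conclusion.
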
 

The next lemma is the main reason that we need to have three lattice operations in the $LatticeRenewal()$ procedure.

\begin{restatable}{lemma}{outputDomiInput}
\label{lem:output_view_dominate_input_view}
Let $Op$ be an \update{} or \scan operation by node $i$ with tag $T$. Let $H_{Op}$ denote its view. Let $H$ denote the view of node $i$ right before $Op$ executes its $LatticeRenewal()$ invocation. Then, $H^{\leq T} \subseteq H_{Op}$. 
\end{restatable}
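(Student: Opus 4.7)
My plan is to case-split on whether $Op$ is direct or indirect, i.e., on whether its $LatticeRenewal()$ returns at line~\ref{line:direct} or at line~\ref{line:indirect}.

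\textbf{Direct case.} Here $H_{Op}$ is the view of $Op$'s own good lattice operation at node $i$, which by definition of $T$ has tag exactly $T$. Let $H'$ denote $V_i[i]$ just before that good lattice operation satisfies line~\ref{line:lattice_termination_condition}. By Lemma~\ref{lem:good_lattice_downward_validity}, $(H')^{\leq T} \subseteq H_{Op}$. Since $V_i[i]$ grows monotonically and $H$ is read strictly before $H'$, we have $H \subseteq H'$, hence $H^{\leq T} \subseteq (H')^{\leq T} \subseteq H_{Op}$.

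\textbf{Indirect case.} Here $H_{Op} = D_i[j] = V_i[j]^{\leq T}$ at the time $t_1$ at which $i$ receives $(\quotes{goodLA}, T)$ from some node $j$ that completed a good $Lattice(T)$ at an earlier time $t_2$. Writing $H_{L^*} := V_j[j]^{\leq T}$ at $t_2$ for $j$'s view of that good lattice operation, I plan to prove $H^{\leq T} \subseteq H_{Op}$ in two sub-steps: (a) $H^{\leq T} \subseteq H_{L^*}$, and (b) $H_{L^*} \subseteq H_{Op}$. Part (b) follows from FIFO channels combined with the forwarding rule of the $(\quotes{value}, \cdot)$ handler: every value in $V_j[j]^{\leq T}$ at $t_2$ was received by $j$ via some $\quotes{value}$ message and, on first sight, immediately forwarded to all nodes strictly before $j$ later sends $(\quotes{goodLA}, T)$; by FIFO those forwarded $\quotes{value}$ messages arrive at $i$ by time $t_1$, so they all lie in $V_i[j]^{\leq T}$ at $t_1 = H_{Op}$.

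The main obstacle is part (a). The key fact I would exploit is that, precisely because $Op$ is indirect, $Op$ itself executed a failed $Lattice(T)$ at $i$ whose line~\ref{line:lattice_termination_condition} became true at some time $t_3$ with an equivalence quorum $Q_3$ of size at least $n-f$. Similarly, $j$'s good $Lattice(T)$ admits an equivalence quorum $Q_j$ of size at least $n-f$ at $t_2$, so $Q_3 \cap Q_j$ is nonempty; pick $k$ in this intersection. Applying Lemma~\ref{lem:comparable_tagged_views_for_same_node} with $s = k$, the sets $V_i[k]^{\leq T}$ at $t_3$ and $V_j[k]^{\leq T}$ at $t_2$ are comparable, and by the equivalence-quorum conditions they coincide with $V_i[i]^{\leq T}$ at $t_3$ (a superset of $H^{\leq T}$ by monotonicity) and $V_j[j]^{\leq T}$ at $t_2 = H_{L^*}$, respectively. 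The delicate step, and where I expect the chief difficulty to lie, is ruling out the ``wrong'' direction of comparability $V_j[k]^{\leq T} \subsetneq V_i[k]^{\leq T}$. For this I plan to use the temporal structure imposed by $Op$'s three successive $writeTag$ steps at tags $r_1 \le r_2 \le T$ and the resulting $\quotes{echoTag}$ traffic: these events, together with the quorum exchanges, force $j$'s $writeTag(T)$ and the subsequent waiting for its equivalence quorum to occur late enough that every value present in $V_i[i]^{\leq T}$ at the earlier time $t_0$ has already been broadcast by $i$ (or by $i$'s forwarding) and been picked up into $V_j[j]^{\leq T}$ by $t_2$. This is exactly the place where the design choice of three lattice operations inside $LatticeRenewal$, rather than one or two, becomes essential, matching the remark immediately preceding the lemma.
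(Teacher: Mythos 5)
Your direct case and your sub-step (b) of the indirect case are fine and match the paper. The genuine gap is in sub-step (a). The comparability route you propose does not work: intersecting the \emph{equivalence quorums} of $i$'s failed phase-3 $Lattice(T)$ (satisfied at $t_3$) and $j$'s good $Lattice(T)$ (satisfied at $t_2$) and invoking Lemma~\ref{lem:comparable_tagged_views_for_same_node} only tells you that $V_i[i]^{\leq T}$ at $t_3$ and $H_{L^*}=V_j[j]^{\leq T}$ at $t_2$ are comparable, and the direction you need, $V_i[i]^{\leq T}(t_3)\subseteq H_{L^*}$, is simply false in general: $i$'s phase-3 predicate may become true long after $j$ has decided, by which time $i$ (and a whole quorum) may have absorbed additional tag-$\leq T$ values from slow faulty writers. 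When the comparability resolves in the other direction it gives you no information about $H^{\leq T}$ versus $H_{L^*}$, so the step you flag as ``the chief difficulty'' is not a technicality to be filled in later --- it is the entire content of the lemma, and your phase-3 framing does not lead to it. Note also that the three tags satisfy the \emph{strict} chain $R_1<R_2<R_3$ (each failed lattice operation must have observed a strictly larger $maxTag$ at line~\ref{line:good_lattice_condition}); your ``$r_1\le r_2\le T$'' loses exactly the inequality the argument hinges on.

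The paper's argument for (a) uses the \emph{phase-1} operation $L_i$ and the $writeTag$ acknowledgement quorums, not the equivalence quorums. Since $R_2<R_3$, node $i$ could not have known $R_3$ when phase 1 ended (otherwise it would have used a tag $\geq R_3$ in phase 2). Intersect the set of nodes that acked $L_i$'s $\quotes{writeTag}$ with the set that acked $L_j$'s $\quotes{writeTag}$ to get a node $s$; $s$ must have acked $L_i$'s $\quotes{writeTag}$ \emph{before} receiving $L_j$'s $\quotes{writeTag}(R_3)$, for otherwise $s$'s $\quotes{echoTag}(R_3)$ would reach $i$ (by FIFO) before the $\quotes{writeAck}$, contradicting $i$'s ignorance of $R_3$ after phase 1. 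Since $i$ broadcasts every value of $H$ before sending $L_i$'s $\quotes{writeTag}$ (line~\ref{line:brodcast_val} and the forwarding rule), FIFO forces $s$ to receive and re-forward all of $H$ before acking $L_i$, hence before acking $L_j$; FIFO from $s$ to $j$ then puts all of $H$ into $j$'s view before $j$ completes line~\ref{line:lattice_writeTag} of $L_j$, and Lemma~\ref{lem:good_lattice_downward_validity} yields $H^{\leq R_3}\subseteq H_{L^*}$. This causal chain through the phase-1 write quorum is what your sketch is missing.
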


\begin{proof}
We assume that $Op$ is an \update{} operation. The proof for the other case is similar. 
If $Op$ is an \update{} with a direct view, the claim follows from Lemma \ref{lem:good_lattice_downward_validity}, since by definition of a direct view, $Op$ obtains a view from its good lattice operation.  
Now, consider the case when $Op$ completes with an indirect view. By construction,
$Op$ must continue to phase 3. Let $R_1$, $R_2$ and $R_3$ denote the tags for each of the three lattice operations in $Op$'s invocation of $LatticeRenewal()$, respectively. Then, by Definition \ref{def:tag_operation}, $T = R_3$, the tag of the last lattice operation in $Op$.
Moreover, $Op$ must have received $(\quotes{goodLA}, R_3)$ message 
from some other node $j$. Let $L_j$ denote this particular lattice operation by node $j$. By construction, $L_j$ is a good lattice operation with tag $R_3$.  Then, we prove the following claim.

\begin{claim}
\label{claim:noR3}
Tag $R_3$ was \textit{not} known by node $i$ during its phase 1 lattice operation.
\end{claim}

\begin{proof}[Proof of Claim \ref{claim:noR3}]
First observe that $R_1 < R_2 < R_3$, since none of the lattice operation in $i$'s $LatticeRenewal()$ is good.
The fact that $Op$ obtains tag $R_2$ such that $R_2 < R_3$ during phase 2 implies that when $Op$ completes its phase 1, it has \textit{not} learned $R_3$; otherwise, it would not proceed to phase 2 with tag $R_2$, since $R_2 < R_3$.
\end{proof}

Let $L_i$ denote the lattice operation by $Op$ in phase 1 at node $i$. Consider the $writeTag$ procedure in $L_i$ and $L_j$. Let $Q_i$ and $Q_j$ denote the set of nodes that sent the $\quotes{writeAck}$ messages in responding to the $\quotes{writeTag}$ message of $L_i$ and $L_j$, respectively. Since both set of nodes are of size at least $n-f$, there exists a nonfaulty node $s \in Q_i \cap Q_j$ such that node $s$ must have received the $\quotes{writeTag}$ message from $L_j$ after sending $\quotes{writeAck}$ message in responding to the $\quotes{writeTag}$ message of $L_i$. Otherwise, $Op$ would obtain tag $R_3$ for phase 2, a contradiction to Claim \ref{claim:noR3}. 


Since communication is reliable and FIFO, and node $i$ sends all values in $H$ before sending the $\quotes{writeTag}$ message in lattice operation $L_i$, node $s$ must receive all values in $H$ and sends out to all other nodes before sending the $\quotes{writeAck}$ message in responding to the $\quotes{writeTag}$ message of $L_j$. Thus, node $j$ must have received all values in $H$ before it completes line \ref{line:lattice_writeTag} of $L_j$.
Let $H_j$ denote the view of lattice operation $L_j$. Since $L_j$ is a good lattice operation, Lemma \ref{lem:good_lattice_downward_validity} implies that $H^{\leq R_3} \subseteq H_j$. 

By assumption, node $i$ borrows $j$'s view after it has received the $(\quotes{goodLA}, R3)$ at line \ref{line:wait_borrow}. Thus, $H_j \subseteq H_{Op}$. Therefore, $H^{\leq R_3} = H^{\leq T} \subseteq H_j \subseteq H_{Op}$. 
\end{proof}


The above lemma immediately implies that the value of an \update{} operation must belong to the view obtained by the \update{} operation. 





Following the convention, we say that $Op \rightarrow Op'$ iff the response (or completion) time of $Op$ occurs before the invocation time of $Op'$. The following Lemma immediately follows from the usage of $writeTag$ and $readTag$ procedures.

\begin{lemma} \label{lem:tag_order}
For any two operations $Op_i$ with tag $T_i$ and $Op_j$ with tag $T_j$, respectively. If $Op_i \rightarrow Op_j$, then $T_i \leq T_j$. 
\end{lemma}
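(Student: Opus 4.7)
The plan is to exploit quorum intersection between the $writeTag$ step performed during $Op_i$'s final lattice operation and the $readTag$ call at the very start of $Op_j$. Specifically, because the tag of $Op_i$ is by definition the tag of its last lattice operation, I will first observe that every invocation of $Lattice(T_i)$ -- whether or not it turns out to be good -- necessarily completes the call $writeTag(T_i)$ at line~\ref{line:lattice_writeTag} before returning, since that call is sequentially before the equivalence-quorum wait and the good/bad branching. Hence, by the time $Op_i$ responds, at least $n-f$ distinct nodes have returned $(\quotes{writeAck}, T_i)$, and each of them has already executed the handler that sets $maxTag \gets \max\{maxTag, T_i\}$.

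Next I would use $f < n/2$ to intersect this quorum with the quorum of $n-f$ nodes that later answer the $readTag()$ issued at the start of $Op_j$ (line~\ref{line:update_readTag} or line~\ref{line:scan_readTag}), producing a nonfaulty node $s$ belonging to both. Because $Op_i \rightarrow Op_j$, the message ordering is clean: $s$ processes $Op_i$'s $writeTag$ (raising its $maxTag$ to at least $T_i$) strictly before it receives $Op_j$'s $readTag$ message, and $maxTag$ is monotonically non-decreasing. Consequently the $(\quotes{readAck},*)$ that $s$ sends back to $Op_j$ carries a value at least $T_i$, so the variable $r$ returned by $Op_j$'s $readTag()$ satisfies $r \geq T_i$.

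Finally I would translate the lower bound on $r$ into a lower bound on $T_j$. For a \scan, $LatticeRenewal(r)$ is invoked, and by inspection of the procedure the tag used in each successive lattice operation is either the initial $r$ or the current $maxTag$, both of which are at least $r$; thus the last lattice operation, whose tag is $T_j$, satisfies $T_j \geq r \geq T_i$. For an \update{} the argument is even stronger, since $LatticeRenewal$ is called with $r' = \max\{r+1, maxTag\} \geq r+1$, giving $T_j \geq r+1 > T_i$. Either case yields $T_i \leq T_j$, as required.

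The only subtlety I anticipate is making sure the argument also covers operations that return an indirect (borrowed) view: an indirect operation's last lattice operation is the third one, which is not good, but it still performs $writeTag$ at line~\ref{line:lattice_writeTag} before failing, so the quorum-intersection step goes through unchanged. Beyond that, the proof is essentially a standard atomic-register-style argument and requires no new machinery.
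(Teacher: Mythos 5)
Your proof is correct and is precisely the argument the paper intends: the paper dismisses this lemma as ``immediately following from the usage of $writeTag$ and $readTag$,'' and your write-up simply spells out that quorum-intersection argument in full, including the two details worth checking (that even a non-good final lattice operation completes its $writeTag$ before the operation responds, and that the tags used inside $LatticeRenewal$ never decrease below its argument).
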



Now we prove the important lemma on views being ``visible'' to subsequent operations.

\begin{restatable}{lemma}{viewOrder} \label{lem:view_order}
For any two operations $Op_i$ of node $i$ and $Op_j$ of node $j$ with views $H_i$ and $H_j$, respectively. If $Op_i \rightarrow Op_j$, then $H_i \subseteq H_j$. 
\end{restatable}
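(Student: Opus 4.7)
The plan is to argue by contradiction using Lemma \ref{lem:comparable_views} together with a FIFO-based quorum-intersection argument. Since Lemma \ref{lem:comparable_views} gives that $H_i$ and $H_j$ are comparable, it suffices to rule out $H_j \subsetneq H_i$. Suppose for contradiction there exists $v \in H_i \setminus H_j$. By Lemma \ref{lem:view_from_good_lattice}, $H_i$ equals the view of some good lattice operation $L_i^*$ whose tag matches $Op_i$'s tag $T_i$; hence $v$ has tag at most $T_i$, and Lemma \ref{lem:tag_order} gives that $v$'s tag is at most $T_j$ as well.

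I would then use the equivalence quorum of $L_i^*$ to show that $v$ is broadcast widely before $Op_j$ starts. Let $\tau_i$ denote the time at which $L_i^*$ captures $V^*$ at line \ref{line:eq_true}, and let $Q_i$ be its equivalence quorum, $|Q_i|\geq n-f$. For every $s \in Q_i$ we have $V_{i^*}[s]^{\leq T_i} = V_{i^*}[i^*]^{\leq T_i} = H_i$, so each such $s$ must have sent $v$ to $i^*$ strictly before $\tau_i$; the $\quotes{value}$ handler forwards every first-seen value to all nodes, so in fact each $s \in Q_i$ broadcast $v$ to all nodes (including $j$) at some time $< \tau_i$. Because $\tau_i \leq Op_i.\text{end} < Op_j.\text{start}$, this broadcast to $j$ takes place strictly before $Op_j$ invokes $readTag$.

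Next let $Q_{read}$ be the set of $\geq n-f$ nodes whose $\quotes{readAck}$ messages $Op_j$ collected during $readTag$; pick $s \in Q_i \cap Q_{read}$, nonempty by quorum intersection. Node $s$ sent $v$ to $j$ at some time $< \tau_i$, and sent $\quotes{readAck}$ to $j$ only after receiving $j$'s $\quotes{readTag}$ request, which itself occurs after $Op_j.\text{start} > \tau_i$. FIFO on the $s\to j$ channel therefore ensures that $j$ processes the $\quotes{value}$ message (atomically adding $v$ to $V_j[j]$) strictly before delivering the corresponding $\quotes{readAck}$. Consequently $v\in V_j[j]$ by the time $Op_j$'s $readTag$ returns, and by monotonicity of $V_j[j]$ also at the moment $Op_j$ invokes $LatticeRenewal$. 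Since $v$'s tag is at most $T_j$, we have $v \in V_j[j]^{\leq T_j}$, and Lemma \ref{lem:output_view_dominate_input_view} then yields $v \in H_j$, contradicting $v \in H_i \setminus H_j$. I expect the main obstacle to be justifying the FIFO step carefully -- confirming that every $s \in Q_i$ truly broadcasts $v$ to all nodes (not merely sends it to $i^*$) and that the strict ordering $\tau_i < Op_j.\text{start}$ forces $s$'s broadcast of $v$ to precede its $\quotes{readAck}$ in the FIFO queue to $j$; both points reduce to the ``send to all'' semantics of the $\quotes{value}$ handler together with $\tau_i \leq Op_i.\text{end} < Op_j.\text{start}$.
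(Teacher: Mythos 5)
Your proof is correct and follows essentially the same route as the paper's: both hinge on intersecting the equivalence quorum of the good lattice operation that produced $H_i$ with the quorum answering $Op_j$'s $readTag$, using FIFO to force the values of $H_i$ to arrive at $j$ before $Op_j$ invokes $LatticeRenewal()$, and then invoking Lemma \ref{lem:output_view_dominate_input_view} together with Lemma \ref{lem:tag_order}. The only points worth tightening are that the appeal to Lemma \ref{lem:comparable_views} is unnecessary (your element-wise argument already gives $H_i \subseteq H_j$ directly), and that in the case where $Op_i$ returns an indirect view you should note explicitly that $\tau_i < Op_i.\text{end}$ because $Op_i$ must receive the $\quotes{goodLA}$ message from $L_i^*$ before completing --- the paper handles this as a separate case.
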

\begin{proof}
Let $T_i$ and $T_j$ denote the tag of $Op_i$ and $Op_j$, respectively. 
Consider the following two cases. 
Let $H$ denote the view of node $j$ right before $Op_j$ invokes $LatticeRenewal()$. Lemma \ref{lem:output_view_dominate_input_view} implies that $H^{\leq T_j} \subseteq H_j$. To prove the lemma, we need to show that $H_i \subseteq H^{\leq T_j}$. 

\begin{itemize}
    \item \textit{Case 1}: $Op_i$ obtains a direct view. 
    
    Consider $Op_i$'s last lattice operation $L_i$. Then $L_i$ is a good lattice operation with tag $T_i$. By definition, $H_i$ is also the view of $L_i$. 
    Let $Q_i$ denote the \textit{equivalence quorum} for $L_i$. Then, we have $H_i = V_i[w]^{\leq T_i} \subseteq V_w[w]^{\leq T_i}$ for each node $w \in Q_i$ when $L_i$ completes. Let $Q_j$ denote the set of nodes which send $\quotes{readAck}$ for the first $\quotes{readTag}$ message of $Op_j$. Since $Op_j$ starts after $Op_i$ completes, there exists a nonfaulty node $s \in Q_i \cap Q_j$ such that node $s$ sends out all the values in its current view, which must include all the values in $H_i$, to all the other nodes before sending the $\quotes{readAck}$ message for $Op_j$'s first $\quotes{readTag}$ message. By FIFO channels, node $j$ must have received all the values in $H_i$ before it completes the $writeTag$ procedure of $Op_j$. Lemma \ref{lem:tag_order} implies that $T_j \geq T_i$. This together with the observation that the largest tag in $H_i$ is $T_i$,  we have $H_i \subseteq H^{\leq T_j}$. 

    \item \textit{Case 2}: $Op_i$ obtains an indirect view. 
    
    The view $H_i$ of $Op_i$ is the same as the view of some good lattice operation $L$ and $Op_i$ must have received the $\quotes{goodLA}$ message sent by $L$. Thus, when $Op_i$ completes, $L$ have completed its execution of line \ref{line:broadcast_decision}. Then, by a similar argument in case 1, $H_i \subseteq H^{\leq T_j}$.
\end{itemize}
\end{proof}

\vspace{3pt}
\noindent\textbf{Construction of a linearization}:~~
\label{s:aso-linearization}
For a given execution, we construct a sequence $\sigma$ of all \update{} and \scan operations in the execution such that $\sigma$ preserves the semantics of atomic snapshot object. The construction is similar to one from \cite{attiya1995atomic}, and presented below.

\begin{itemize}
    \item \textbf{Insert \scan operations}: First, we construct a sequence $\sigma'$ which includes all \scan operations. The \scan operations are ordered in $\sigma'$ according to the order of their associated views. Specifically, for any two \scan operations $Sc_i$ and $Sc_j$ that have view $H_i$ and $H_j$, respectively, if $H_i < H_j$, then $Sc_i$ appears before $Sc_j$ in $\sigma'$. If $H_i = H_j$ and $Sc_i \rightarrow Sc_j$, then $Sc_i$ appears before $Sc_j$ in $\sigma'$. Otherwise, $Sc_i$ and $Sc_j$ are ordered arbitrarily. 
    
    \item \textbf{Insert \update{} operations}:
    Second,
    we insert all \update{} operations into $\sigma'$. Consider an \update{} operation $Op$ that writes $v$ with timestamp $ts_v$. We insert $Op$ after all \scan operations whose view do not include $\tuple{v, ts_v}$ and before all \scan operations whose view contains $\tuple{v, ts}$. That is, $Op$ is inserted just before the first \scan operation in $\sigma'$ such that its view contains $\tuple{v, ts}$. For any two \update{} operations $Op_1$ and $Op_2$ that fit between the same pair of \scan operations. If $Op_1 \rightarrow Op_2$, then we insert $Op_1$ before $Op_2$ in sequence $\sigma$. Otherwise, $Op_1$ and $Op_2$ are ordered arbitrarily. 
\end{itemize}

Similar to the proof given in \cite{attiya1995atomic}, the proof of the following theorem uses Lemma \ref{lem:comparable_views}.  \ref{lem:output_view_dominate_input_view} and \ref{lem:view_order} to show that $\sigma$ is a linearizable sequence. We put it in Appendix \ref{app:mainTheoremASO}

\begin{restatable}{theorem}{mainTheoremASO} \label{theo:mainTheoremASO}
\alg{} (Algorithm~\ref{algo:ASO}) implements an atomic snapshot object.
\end{restatable}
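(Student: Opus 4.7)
The plan is to verify that the sequence $\sigma$ constructed in Section \ref{s:aso-linearization} is a valid linearization, by establishing three things in order: (i) $\sigma$ is a well-defined total order, (ii) $\sigma$ respects the real-time order of operations, and (iii) $\sigma$ satisfies the sequential specification of the atomic snapshot object. The backbone of the argument is the three lemmas \ref{lem:comparable_views}, \ref{lem:output_view_dominate_input_view}, and \ref{lem:view_order}, which together say that views form a totally ordered chain under $\subseteq$, that each operation's view dominates everything it has observed with a small enough tag, and that the view grows monotonically along the real-time happens-before order.

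First I would check that $\sigma$ is well-defined. By Lemma \ref{lem:comparable_views} the set of views that arise in an execution is totally ordered by $\subseteq$, so the ordering of \scan{} operations by view (with real-time as the tiebreaker when views coincide) is a total order. For \update{} operations, the insertion rule is well-defined because each written value carries a unique timestamp $\tuple{r+1,i}$, so the predicate ``the view contains $\tuple{v,ts_v}$'' partitions the \scan{} operations into a prefix and a suffix of $\sigma'$; the real-time tiebreaker then handles \update{}s inserted between the same pair of \scan{}s.

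Next I would verify the real-time property $Op_1 \rightarrow Op_2 \Rightarrow Op_1$ precedes $Op_2$ in $\sigma$, by case analysis on the types of the two operations. For \scan-\scan{}, Lemma \ref{lem:view_order} gives $H_1\subseteq H_2$; if the inclusion is strict, the insertion by view handles it, and if $H_1=H_2$, the real-time tiebreaker does. For \update-\scan{}, let $Op_1=\update(v)$ and let $H_1$ be its view. Lemma \ref{lem:output_view_dominate_input_view} applied to $Op_1$ together with the fact that $\tuple{v,ts_v}$ is sent before $LatticeRenewal$ starts (so lies in the input view $H$) gives $\tuple{v,ts_v}\in H_1$, and Lemma \ref{lem:view_order} then gives $\tuple{v,ts_v}\in H_2$; hence $Op_1$ is inserted before $Op_2$. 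For \scan-\update{}, if $\tuple{v,ts_v}$ were in the \scan's view $H_1$, then by Lemma \ref{lem:view_order} applied symmetrically, node writing $v$ would have seen its own value in its local state before invoking $Op_2$, contradicting uniqueness of timestamps (since the tag of $v$ is strictly larger than any tag visible at $i$ before $Op_2$ starts); thus $\tuple{v,ts_v}\notin H_1$ and $Op_2$ is inserted after $Op_1$. For \update-\update{}, the two \update{}s are placed relative to the same pair of bracketing \scan{}s only if the earlier value is in the later view, in which case the real-time tiebreaker orders them correctly.

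Finally I would check the sequential specification. Fix a \scan{} $Sc$ returning $Snap$ with view $H$, and fix a segment index $j$. The procedure $extract(H)$ sets $Snap[j]$ to the value carried by the largest-tag timestamp $\tuple{t',j}$ appearing in $H$ (or the initial value if no such timestamp exists). Let $U$ be the \update{} by $j$ that wrote that value. By the insertion rule, $U$ appears before $Sc$ in $\sigma$ since $\tuple{v,ts_v}\in H$. Any later \update{} $U'$ by $j$ in $\sigma$ must have written a value with a strictly larger timestamp (tags by the same writer are monotonically increasing by line \ref{line:tag_increment}), and by maximality of $t'$ in $H$, that value is not in $H$, so $U'$ is inserted after $Sc$. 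Hence $U$ is exactly the most recent \update{} by $j$ preceding $Sc$ in $\sigma$, matching $Snap[j]$.

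The main obstacle I expect is the \scan-\update{} case of the real-time check: ruling out that a \scan{} completing before an \update{} $Op_2$ could already ``see'' the value of $Op_2$. This requires using the fact that $Op_2$'s tag $r+1$ is strictly larger than any tag readable in the system at the time $Op_2$ obtains $r$ via $readTag()$, combined with Lemma \ref{lem:tag_order} to show that the \scan's view (bounded by a tag at most $r$) cannot contain $\tuple{v,ts_v}$.
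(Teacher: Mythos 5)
Your overall strategy---well-definedness of $\sigma$ via Lemma \ref{lem:comparable_views}, the real-time order via a four-way case analysis on operation types using Lemmas \ref{lem:output_view_dominate_input_view} and \ref{lem:view_order}, and the sequential specification via the insertion rule combined with $extract$---is exactly the decomposition the paper uses, and the \scan-\scan, \update-\scan, \scan-\update{} and sequential-specification arguments match the paper's. The one place your argument is incomplete is the \update{}--\update{} case of the real-time check: you only treat the sub-case where the two \update{}s are inserted between the same pair of \scan{}s (where the real-time tiebreaker applies). When they are not, you must still rule out that some \scan{} $Sc$ ends up ordered after the later \update{} $Op_2$ (writing $u$) but before the earlier \update{} $Op_1$ (writing $v$) despite $Op_1 \rightarrow Op_2$. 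The paper closes this by contradiction: such an $Sc$ would have a view $H_{Sc}$ containing $\tuple{u, ts_u}$ but not $\tuple{v, ts_v}$, whereas $Op_1$'s view $H_1$ contains $\tuple{v, ts_v}$ (by Lemma \ref{lem:output_view_dominate_input_view}) but cannot contain $\tuple{u, ts_u}$ (since $Op_1$ completes before $Op_2$ is even invoked), making $H_{Sc}$ and $H_1$ incomparable and contradicting Lemma \ref{lem:comparable_views}. With that sub-case filled in, your proof coincides with the paper's.
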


\subsection{Round Complexity} 
\label{s:aso-round}

Now we analyze the round complexity of our algorithm. We assume the local computation time is negligible compared with the message delay. 
We show that each lattice operation takes $O(\sqrt{k})$ rounds. The proofs of the two lemmas below are in Appendix \ref{app:later_operation_bigger_tag} and \ref{app:completed_update_known_to_all_after_D}.

\begin{restatable}{lemma}{laterOperationBiggerTag}  \label{lem:later_operation_bigger_tag}
Suppose there exists a lattice operation that starts at time $t$ with tag $T$, then any \update{} operation starting after time $t + D$ must assign a tag $> T$ for its value.  Thus, all values with tags at most $T$ must have been sent out by time $t + D$. 
\end{restatable}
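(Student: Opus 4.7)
The plan is to use the fact that the lattice operation with tag $T$ at time $t$ begins with a $writeTag(T)$ broadcast, which propagates $T$ to every surviving node within $D$ time, thereby forcing every subsequent $readTag$ to return a tag at least $T$, and hence every subsequent \update{} to assign a strictly larger tag.

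First I would establish the following auxiliary claim: by time $t + D$, every node that is still alive has $maxTag \geq T$. The lattice operation at time $t$ starts by invoking $writeTag(T)$ at line \ref{line:lattice_writeTag}, whose first action is to broadcast $(\quotes{writeTag}, T)$. Since $D$ is the maximum message delay, any node alive at time $t + D$ has received this message and, in the atomic handler for $\quotes{writeTag}$, has set $maxTag \gets \max\{maxTag, T\}$. A small technical point is that this applies to every alive node, not only nonfaulty ones, because processing is atomic and the update happens as soon as the message is delivered.

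Next I would analyze an arbitrary \update{} operation $Op$ starting at time $t' > t + D$. Its $readTag()$ call broadcasts $\quotes{readTag}$ at time $t'$ and waits for $\geq n - f$ $\quotes{readAck}$ responses. Each responder must be alive at the moment it processes the $\quotes{readTag}$ message (otherwise it could not reply), and that processing happens at some time $\geq t' > t + D$; by the auxiliary claim, its $maxTag$ is then $\geq T$, so every included response carries a tag $\geq T$. Hence the value $r$ returned by $readTag$ satisfies $r \geq T$, and at line \ref{line:value_ts} the operation assigns timestamp $\tuple{r+1, \cdot}$ with $r + 1 > T$, which is the first assertion.

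For the second assertion, I would argue the contrapositive: any value with tag $T' \leq T$ is produced by some \update{} whose $readTag()$ returned $r = T' - 1 < T$. For the maximum across the $n - f$ collected responses to be less than $T$, each response must have been issued before $t + D$ (by the auxiliary claim), so the call to $readTag$ completed by $t + D$ and the subsequent broadcast at line \ref{line:brodcast_val} takes place by $t + D$ as well (up to the negligible local computation). The main obstacle is the careful timing accounting in this last step: I need to connect ``$r < T$ at the caller'' with ``every relevant $\quotes{readAck}$ was sent before $t + D$,'' and then propagate that bound through the (single) local step separating $readTag$ from the $\quotes{value}$ broadcast, taking care to invoke the auxiliary claim uniformly for all responders (nonfaulty and still-alive faulty alike).
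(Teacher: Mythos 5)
Your argument for the first assertion is essentially the paper's: the lattice operation's $writeTag(T)$ at line \ref{line:lattice_writeTag} reaches the correct nodes within $D$, so a $readTag()$ issued after $t+D$ returns $r \geq T$ and line \ref{line:value_ts} assigns a tag $r+1 > T$. One caveat: your auxiliary claim that \emph{every still-alive node} (including faulty ones) has $maxTag \geq T$ by $t+D$ is stronger than the model gives you --- $D$ bounds delays only between nonfaulty sender and receiver, so a not-yet-crashed faulty node may still hold a stale $maxTag$ after $t+D$ and answer a $\quotes{readTag}$ with a small value. Your conclusion survives anyway because $readTag()$ takes the maximum over $\geq n-f$ responses and, since $f < n/2$, at least one responder is nonfaulty and reports a tag $\geq T$; you should route the argument through that one responder rather than through all of them.

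The second assertion is where there is a genuine gap. You infer ``each response was issued before $t+D$, so the call to $readTag$ completed by $t+D$ and the broadcast at line \ref{line:brodcast_val} takes place by $t+D$.'' That step fails: even if every $\quotes{readAck}$ with a tag $< T$ was \emph{sent} before $t+D$, those acks still take up to $D$ to reach the caller (and the caller's $\quotes{readTag}$ request took up to $D$ to reach the responders in the first place), so $readTag()$ need not return until roughly $t+3D$; nothing forces the value to be on the wire by $t+D$. What the timing argument actually yields is only the contrapositive of the first assertion --- any \update{} producing a value with tag $\leq T$ must have \emph{started} (i.e., sent its $\quotes{readTag}$ at line \ref{line:update_readTag}) before $t+D$ --- and hence its line-\ref{line:brodcast_val} broadcast occurs by $t+3D$, after the $2D$ round trip of $readTag()$. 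To be fair, the paper's own proof of this sentence makes the same leap, and the bound it actually uses downstream (Claim \ref{claim:atmostk} in Lemma \ref{lem:worst_round_complexity_multishot}) is precisely the weaker ``started before $t+D$, broadcast before $t+3D$'' form; but as written, your derivation of ``sent out by $t+D$'' does not go through.
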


\begin{restatable}{lemma}{knownToAllAfterD} \label{lem:completed_update_known_to_all_after_D}
Let $Op$ denote \update$(v)$ operation. If $Op$ completes before time $t$, then for each nonfaulty node $i$, $\tuple{v, ts_v} ~\in V_i[j]$ for each nonfaulty node $j$ by the end of time $t + 2D$. 
\end{restatable}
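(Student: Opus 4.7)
\medskip

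\noindent\textbf{Proof plan for Lemma~\ref{lem:completed_update_known_to_all_after_D}.}
The plan is to trace the message $(\quotes{value},\langle v, ts_v\rangle)$ from the writer through two waves of point-to-point delivery, using reliability and the constant maximum message delay $D$. Let $p$ be the node that invokes $Op=\update(v)$. First I would observe that inside $Op$, line~\ref{line:brodcast_val} is executed strictly before $Op$ returns, so the broadcast of $(\quotes{value},\langle v, ts_v\rangle)$ must be issued by $p$ at some time $t_0<t$. Because the channels are reliable, even if $p$ later crashes, the network layer guarantees delivery to every nonfaulty recipient, and by the definition of $D$ this delivery happens by time $t_0+D\le t+D$.

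Next I would exploit the forwarding clause in the $(\quotes{value}, \langle u, ts \rangle)$ handler. The moment a nonfaulty node $j$ receives $\langle v, ts_v\rangle$ for the first time (this occurs by $t+D$), it adds the pair to $V_j[p]$ and $V_j[j]$ and, because it has not sent the pair before, broadcasts $(\quotes{value},\langle v, ts_v\rangle)$ to all. Applying reliability and $D$ a second time, every nonfaulty node $i$ receives $j$'s forwarded copy by time $t+2D$. On receipt, $i$'s handler adds $\langle v, ts_v\rangle$ to $V_i[j]$ (and to $V_i[i]$), which is exactly the desired membership.

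Two small boundary cases need to be closed to make the argument airtight. If $j=p$, the claim follows already from the first wave, since $i$ receives the original broadcast directly from $p$ by time $t+D\le t+2D$ and inserts $\langle v, ts_v\rangle$ into $V_i[p]=V_i[j]$. If $i=j$ the claim holds at the moment a nonfaulty node first receives the pair, because the handler writes to both $V[p]$ and $V[i]$ (so $V_i[i]$ is populated by $t+D$). All other pairs of nonfaulty $i,j$ are covered by the two-wave argument above.

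The only potential obstacle is ensuring that the forwarding actually fires at every nonfaulty intermediary~$j$; this might fail if $j$ has already forwarded the same pair earlier. But if $j$ already forwarded, then $j$ broadcast $(\quotes{value},\langle v, ts_v\rangle)$ at some earlier time, so every nonfaulty $i$ has received that earlier forward by the corresponding earlier deadline, hence a fortiori by $t+2D$. In both sub-cases we obtain $\langle v, ts_v\rangle\in V_i[j]$ by time $t+2D$, which completes the plan.
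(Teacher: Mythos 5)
Your proposal is correct and follows essentially the same two-hop argument as the paper's (very terse) proof: the writer's broadcast at line~\ref{line:brodcast_val} reaches every nonfaulty $j$ by $t+D$, and $j$'s forward reaches every nonfaulty $i$ by $t+2D$, placing $\tuple{v, ts_v}$ in $V_i[j]$. Your additional care with the boundary cases ($j$ being the writer, $i=j$, and $j$ having forwarded the pair earlier) only makes the argument more complete than the paper's version.
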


Recall that the exposed value is introduced in Definition \ref{def:new_value}. 

\begin{restatable}{lemma}{roundComplexityASO} \label{lem:worst_round_complexity_multishot}
Each lattice operation takes $O(\sqrt{k})$ message delays in the worst case. 
\end{restatable}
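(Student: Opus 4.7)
The plan is to lift the $O(\sqrt{k})$ analysis of ELA (Lemma~\ref{lem:round_complexity_single_shot}) to a single invocation of $Lattice(r)$ in \alg, treating Lemma~\ref{lem:later_operation_bigger_tag} as the bridge that makes the long-lived ASO setting look, from the perspective of tag $r$, like a single-shot execution of ELA. Fix such an operation $Lattice(r)$ running at some node, and let $t_{\text{start}}$ denote the earliest time at which \emph{any} node invokes a lattice operation with tag $r$. Lemma~\ref{lem:later_operation_bigger_tag} implies that every value $v$ with $\text{tag}(v)\le r$ has been sent by some node (at line~\ref{line:brodcast_val} of its \update) by time $t_{\text{start}}+D$; after that, no new such value will ever be introduced. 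Hence, restricted to values with tag $\le r$, the ``initial inputs'' of $Lattice(r)$ are all present in the pipeline by time $t_{\text{start}}+D$, exactly analogous to time $0$ in ELA.

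Next, I would introduce the notion of a value with tag $\le r$ being \emph{$r$-exposed} in interval $[t,t+D)$, mimicking Definition~\ref{def:new_value}. I then claim the analogue of Lemma~\ref{lem:failureChainLength}: if a value $v$ with tag $\le r$ is $r$-exposed in the interval $[t_{\text{start}}+D+(j-1)D,\, t_{\text{start}}+D+jD)$, then $v$ admits a failure chain of length at least $j$. The proof is the same inductive argument as in Lemma~\ref{lem:failureChainLength}, shifted so that time $t_{\text{start}}+D$ plays the role of time $0$: by that moment some node holds $v$, and reaching a fresh nonfaulty node $j$ rounds later forces $j$ faulty intermediaries (each forwarding $v$ at line~\ref{ELA-line:echo to all} before crashing).

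With the failure-chain lemma in hand, I would invoke Lemma~\ref{lem:unique_failure_chain} verbatim: the first $|P_v|-2$ nodes of the failure chains of distinct $r$-exposed values are disjoint, so if $r$-exposed values occur in $m$ distinct length-$D$ intervals, at least $\sum_{j=1}^{m}(j-1) = \Omega(m^2)$ faulty nodes are required. Since at most $k$ nodes crash in the whole execution, $m = O(\sqrt{k})$. I would then establish the analogue of Lemma~\ref{lem:termination_without_newValue} restricted to tag $\le r$: in any $2D$ window $[t,t+2D)$ with no $r$-exposed value, the vectors $V_i[j]^{\le r}$ for the participating nonfaulty $i$ and all nonfaulty $j$ become equal by time $t+2D$, because the last batch of tag-$\le r$ messages sent by nonfaulty nodes has been delivered and re-echoed, yielding an equivalence quorum of size $\ge n-f$ and triggering line~\ref{line:lattice_termination_condition}.

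Putting these together, after at most $O(\sqrt{k})$ length-$D$ intervals containing $r$-exposed values (all located within $O(\sqrt{k})$ rounds of $t_{\text{start}}+D$), a quiet $2D$ window must follow, and $Lattice(r)$ terminates; hence its duration is $O(\sqrt{k})$ message delays. The main obstacle I anticipate is the bookkeeping in the adapted failure-chain lemma: in ELA all inputs are broadcast at line~1 at time $0$, whereas in \alg{} a tag-$\le r$ value may be introduced anywhere in $[0, t_{\text{start}}+D]$ by an \update{} that has proposed $r$ at line~\ref{line:initial_phase} or by a quicker $writeTag$ chain. Handling this cleanly requires showing that we may, without loss of generality, replay the chain from $t_{\text{start}}+D$ using the node that currently holds $v$ as the ``origin'', and that the ensuing faulty nodes are still distinct across exposed values because the condition $(iv)$ of Definition~\ref{def:failure_chain} is invariant under this shift.
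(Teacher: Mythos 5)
Your proposal is correct and follows essentially the same route as the paper's proof: use Lemma~\ref{lem:later_operation_bigger_tag} to cut off the introduction of new tag-$\le r$ values after a constant offset, rerun the ELA exposed-value / failure-chain counting restricted to values with tag at most $r$, and close with the quiet-$2D$-window termination argument. The only organizational difference is that the paper isolates an intermediate claim (after $t+4D$ at most $k$ values --- all originating at faulty nodes whose \update{} is still in progress --- can be exposed, proved via Lemma~\ref{lem:completed_update_known_to_all_after_D} and a case split on completed versus in-progress \update{}s), whereas you obtain the same fact implicitly from the shifted failure-chain length bound; the ``replaying'' step you worry about at the end is unnecessary, since the original chain from the true originator (who broadcasts at line~\ref{line:brodcast_val} no later than the constant offset) is only longer than what the shifted bound requires, and Lemma~\ref{lem:unique_failure_chain} applies to it verbatim.
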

\begin{proof}
Let $L_i$ be a lattice operation at node $i$. 
Suppose $L_i$ starts at time $t$ with tag $T$. According to the condition at line \ref{line:lattice_termination_condition}, the termination of $L_i$ only depends on values with tags at most $T$.  Thus, we do not need to consider the values from \update{} operations that start after time $t + D$ by Lemma \ref{lem:later_operation_bigger_tag}. That is, for the termination of $L_i$, we only need to consider \update{} operations that start before time $t + D$.  Now, we prove an important claim.

\begin{claim}
\label{claim:atmostk}
There are at most $k$ exposed values with tag $\le T$ in intervals after time $t + 4D$.
\begin{proof}[Proof of Claim \ref{claim:atmostk}]
By Lemma \ref{lem:completed_update_known_to_all_after_D}, all values with tags at most $T$ from \update{} operations that have completed before $t + D$ must be contained in $V_i[j]$ for each pair of nonfaulty nodes $i$ and $j$ by time $t + 3D$, i.e., known by all nonfaulty nodes. Thus, by definition of the exposed values, we have that values from \update{} operations that have completed before time $t + D$ \textit{cannot} be exposed values in intervals after time $t + 3D$. Since the values from \update{} operations that start after time $t + D$ must have tag greater than $T$, only values from \update{} operations that \underline{start before time $t + D$ and have not completed by time $t + D$} can be exposed values in intervals after time $t + 3D$. 
Let $U$ denote the set of these values.
Moreover, by definition, there can at most one such \update{} operation per node. 

Consider an arbitrary value $v \in U$ and the \update$(v)$ operation. We show that if \update$(v)$ is from a nonfaulty node, then value $v$ cannot be an exposed value for intervals after time $t + 3D$. Consider lines \ref{line:update_readTag} to \ref{line:brodcast_val} of \update$(v)$, since local computation takes negligible time, and line \ref{line:update_readTag} takes at most $2D$ time, value $v$ must be sent to all other nodes at line \ref{line:brodcast_val} before time $t + 3D$. Thus, by time $t + 4D$, value $v$ must be known by all nodes that have not crashed at this time, including all the nonfaulty nodes. Therefore, $v$ cannot be an exposed value in intervals after time $t + 4D$. 
Thus, a value in $v \in U$ can be an exposed value for intervals after time $t + 4D$ iff \update(v) is from a faulty node. This proves the claim. 
\end{proof}
\end{claim}

The above Claim implies that after time $t + 4D$, if we only consider values with tag at most $T$, Lemma \ref{lem:termination_without_newValue}, \ref{lem:failureChainLength} and \ref{lem:unique_failure_chain} still hold. Thus, similar to the proof in Lemma \ref{lem:round_complexity_single_shot}, $l_p$ must terminate in $O(\sqrt{k})$ rounds.
\end{proof}

The proof of Claim \ref{claim:atmostk} also explains why we need to put line \ref{line:brodcast_val} before the initial lattice operation at line \ref{line:initial_phase}. If we switch the order of line \ref{line:brodcast_val} and line \ref{line:initial_phase}, then we cannot guarantee that value $v$ is sent to all the other nodes before time $t + 3D$, even though \update(v) starts before time $t + D$.












\bibliography{references} 

\begin{thebibliography}{10}

\bibitem{AfekADGMS93}
Yehuda Afek, Hagit Attiya, Danny Dolev, Eli Gafni, Michael Merritt, and Nir
  Shavit.
\newblock Atomic snapshots of shared memory.
\newblock {\em J. {ACM}}, 40(4):873--890, 1993.
\newblock \href {https://doi.org/10.1145/153724.153741}
  {\path{doi:10.1145/153724.153741}}.

\bibitem{Anderson93}
James~H. Anderson.
\newblock Composite registers.
\newblock {\em Distributed Comput.}, 6(3):141--154, 1993.
\newblock \href {https://doi.org/10.1007/BF02242703}
  {\path{doi:10.1007/BF02242703}}.

\bibitem{Aspnes93}
James Aspnes.
\newblock Time- and space-efficient randomized consensus.
\newblock {\em J. Algorithms}, 14(3):414--431, 1993.
\newblock \href {https://doi.org/10.1006/jagm.1993.1022}
  {\path{doi:10.1006/jagm.1993.1022}}.

\bibitem{AspnesH90}
James Aspnes and Maurice Herlihy.
\newblock Fast randomized consensus using shared memory.
\newblock {\em J. Algorithms}, 11(3):441--461, 1990.
\newblock \href {https://doi.org/10.1016/0196-6774(90)90021-6}
  {\path{doi:10.1016/0196-6774(90)90021-6}}.

\bibitem{aspnes1990wait}
James Aspnes and Maurice Herlihy.
\newblock Wait-free data structures in the asynchronous pram model.
\newblock In {\em Proceedings of the second annual ACM symposium on Parallel
  algorithms and architectures}, pages 340--349, 1990.

\bibitem{Attiya2000}
Hagit Attiya.
\newblock Efficient and robust sharing of memory in message-passing systems.
\newblock {\em J. Alg.}, 34(1):109--127, January 2000.

\bibitem{AttiyaBD1995}
Hagit Attiya, Amotz Bar-Noy, and Danny Dolev.
\newblock Sharing memory robustly in message-passing systems.
\newblock {\em J. ACM}, 42(1):124--142, January 1995.

\bibitem{attiya1990renaming}
Hagit Attiya, Amotz Bar-Noy, Danny Dolev, David Peleg, and R{\"u}diger
  Reischuk.
\newblock Renaming in an asynchronous environment.
\newblock {\em Journal of the ACM (JACM)}, 37(3):524--548, 1990.

\bibitem{attiya1995atomic}
Hagit Attiya, Maurice Herlihy, and Ophir Rachman.
\newblock Atomic snapshots using lattice agreement.
\newblock {\em Distributed Computing}, 8(3):121--132, 1995.

\bibitem{attiya2020store}
Hagit Attiya, Sweta Kumari, Archit Somani, and Jennifer~L Welch.
\newblock Store-collect in the presence of continuous churn with application to
  snapshots and lattice agreement.
\newblock {\em arXiv preprint arXiv:2003.07787}, 2020.

\bibitem{attiya1994wait}
Hagit Attiya, Nancy Lynch, and Nir Shavit.
\newblock Are wait-free algorithms fast?
\newblock {\em Journal of the ACM (JACM)}, 41(4):725--763, 1994.

\bibitem{attiya1998atomic}
Hagit Attiya and Ophir Rachman.
\newblock Atomic snapshots in $\mathcal{O} (n \log n)$ operations.
\newblock {\em SIAM Journal on Computing}, 27(2):319--340, 1998.

\bibitem{AttiyaW04}
Hagit Attiya and Jennifer Welch.
\newblock {\em Distributed Computing: Fundamentals, Simulations and Advanced
  Topics}.
\newblock John Wiley \& Sons, 2004.

\bibitem{bracha1987asynchronous}
Gabriel Bracha.
\newblock Asynchronous {B}yzantine agreement protocols.
\newblock {\em Information and Computation}, 75(2):130--143, 1987.

\bibitem{garg2015Introduction}
B.~A. Davey and H.~A. Priestley.
\newblock {\em Introduction to Lattices and Order}.
\newblock Cambridge University Press, Cambridge, UK, 1990.

\bibitem{delporte2018implementing}
Carole Delporte-Gallet, Hugues Fauconnier, Sergio Rajsbaum, and Michel Raynal.
\newblock Implementing snapshot objects on top of crash-prone asynchronous
  message-passing systems.
\newblock {\em IEEE Transactions on Parallel and Distributed Systems},
  29(9):2033--2045, 2018.

\bibitem{di2020synchronous}
Giuseppe~Antonio Di~Luna, Emmanuelle Anceaume, Silvia Bonomi, and Leonardo
  Querzoni.
\newblock Synchronous byzantine lattice agreement in $\mathcal{O}(\log f)$
  rounds.
\newblock {\em arXiv preprint arXiv:2001.02670}, 2020.

\bibitem{di2019byzantine}
Giuseppe~Antonio Di~Luna, Emmanuelle Anceaume, and Leonardo Querzoni.
\newblock {B}yzantine generalized lattice agreement.
\newblock {\em arXiv preprint arXiv:1910.05768}, 2019.

\bibitem{faleiro2012generalized}
Jose~M Faleiro, Sriram Rajamani, Kaushik Rajan, G~Ramalingam, and Kapil
  Vaswani.
\newblock Generalized lattice agreement.
\newblock In {\em Proceedings of the 2012 ACM symposium on Principles of
  distributed computing}, pages 125--134. ACM, 2012.

\bibitem{guerraoui2019consensus}
Rachid Guerraoui, Petr Kuznetsov, Matteo Monti, Matej Pavlovi{\v{c}}, and
  Dragos-Adrian Seredinschi.
\newblock The consensus number of a cryptocurrency.
\newblock In {\em Proceedings of the 2019 ACM Symposium on Principles of
  Distributed Computing}, pages 307--316, 2019.

\bibitem{Herlihy91}
Maurice Herlihy.
\newblock Wait-free synchronization.
\newblock {\em {ACM} Trans. Program. Lang. Syst.}, 13(1):124--149, 1991.
\newblock \href {https://doi.org/10.1145/114005.102808}
  {\path{doi:10.1145/114005.102808}}.

\bibitem{herlihy1990linearizability}
Maurice~P Herlihy and Jeannette~M Wing.
\newblock Linearizability: A correctness condition for concurrent objects.
\newblock {\em ACM Transactions on Programming Languages and Systems (TOPLAS)},
  12(3):463--492, 1990.

\bibitem{inoue1994linear}
Michiko Inoue, Toshimitsu Masuzawa, Wei Chen, and Nobuki Tokura.
\newblock Linear-time snapshot using multi-writer multi-reader registers.
\newblock In {\em International Workshop on Distributed Algorithms}, pages
  130--140. Springer, 1994.

\bibitem{Lynch96}
Nancy~A. Lynch.
\newblock {\em Distributed Algorithms}.
\newblock Morgan Kaufmann, 1996.

\bibitem{mavronicolasabound}
Marios Mavronicolasa.
\newblock A bound on the rounds to reach lattice agreement.
\newblock {\em http://www.cs.ucy.ac.cy/~mavronic/pdf/lattice.pdf}, 2018.

\bibitem{mendes2013topology}
Hammurabi Mendes, Christine Tasson, and Maurice Herlihy.
\newblock The topology of asynchronous byzantine colorless tasks.
\newblock {\em arXiv preprint arXiv:1302.6224}, 2013.

\bibitem{raynal2012concurrent}
Michel Raynal.
\newblock {\em Concurrent programming: algorithms, principles, and
  foundations}.
\newblock Springer Science \& Business Media, 2012.

\bibitem{shapiro2011convergent}
Marc Shapiro, Nuno Pregui{\c{c}}a, Carlos Baquero, and Marek Zawirski.
\newblock Convergent and commutative replicated data types.
\newblock {\em Bulletin-European Association for Theoretical Computer Science},
  (104):67--88, 2011.

\bibitem{skrzypczak2019linearizable}
Jan Skrzypczak, Florian Schintke, and Thorsten Sch{\"u}tt.
\newblock Linearizable state machine replication of state-based crdts without
  logs.
\newblock {\em arXiv preprint arXiv:1905.08733}, 2019.

\bibitem{taubenfeld2006synchronization}
Gadi Taubenfeld.
\newblock {\em Synchronization algorithms and concurrent programming}.
\newblock Pearson Education, 2006.

\bibitem{zheng2020byzantineAsync}
Xiong Zheng and Vijay Garg.
\newblock Byzantine lattice agreement in asynchronous systems.
\newblock {\em arXiv preprint arXiv:2002.06779}, 2020.

\bibitem{zheng2019byzantineSync}
Xiong Zheng and Vijay~K. Garg.
\newblock Byzantine lattice agreement in synchronous systems.
\newblock {\em CoRR}, abs/1910.14141, 2019.
\newblock URL: \url{http://arxiv.org/abs/1910.14141}, \href
  {http://arxiv.org/abs/1910.14141} {\path{arXiv:1910.14141}}.

\bibitem{zheng2018linearizable}
Xiong Zheng, Vijay~K. Garg, and John Kaippallimalil.
\newblock Linearizable replicated state machines with lattice agreement.
\newblock In {\em 23rd International Conference on Principles of Distributed
  Systems, {OPODIS} 2019, December 17-19, 2019, Neuch{\^{a}}tel, Switzerland},
  volume 153 of {\em LIPIcs}, pages 29:1--29:16, 2019.

\bibitem{zheng2018lattice}
Xiong Zheng, Changyong Hu, and Vijay~K Garg.
\newblock Lattice agreement in message passing systems.
\newblock In {\em 32nd International Symposium on Distributed Computing (DISC
  2018)}. Schloss Dagstuhl-Leibniz-Zentrum fuer Informatik, 2018.

\end{thebibliography}

\appendix
\newpage




\section{Linearizable Update-Query State Machines} \label{app:linearizable_UQ}
In this section, we show how to implement a linearizable update-query state machine using our atomic snapshot algorithm. A update-query state machine only supports two types of operations: update and query. It does not support update and query mixed operations. It also assume that all updates are commutable, so the order of updates does not matter. Many data structures such as sets,
sequences, certain types of key-value tables, and graphs \cite{shapiro2011convergent}
can be designed with commuting updates. 

The implementation, shown in Algorithm \ref{algo:linearizable_UQ_state_machine}, is almost same as the atomic snapshot implementation in Algorithm \ref{algo:ASO}, except that we let the \scan operation return its view, i.e., the set obtained at line \ref{line:scan_lattice_renewals}. The view is a set of update commands from clients. Each element in the vector $V_p$ is a command. For an update command $up$ from a client, node $p$ invokes \update$(up)$. When receiving a query command from a client, node $p$ invokes the modified \scan() to return a set of commands and then apply these commands and return responses accordingly. 

\begin{algorithm*}[!hptb]
\begin{algorithmic}[1]

\begin{multicols*}{2}
\item[{\bf Upon receiving update $up$:}]
\STATE \update$(up)$
\STATE Respond ok to client 
\item[]

\item[{\bf Upon receiving query:}]
\STATE $view :=$ \scan$()$
\STATE $reply :=$ Apply($view$)
\STATE Respond $reply$ to client 
\end{multicols*}
\end{algorithmic}
\caption{Linearizable UQ State Machines}
\label{algo:linearizable_UQ_state_machine}
\end{algorithm*}

The following theorem implies that each command takes $O(1)$ rounds if there is no crash fault. $O(1)$ message size overhead means that if we assume that the size of each command is $O(1)$, then each message in the implementation has size $O(1)$. 

\begin{theorem}
There exists an implementation of linearizable update/query state machines such that each command takes $O(\sqrt{k})$ message delays and $O(1)$ message size overhead, where $k$ is the actual number of crash failures in the system. 
\end{theorem}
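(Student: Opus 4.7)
The plan is to inherit essentially every claim from \alg{} (Algorithm~\ref{algo:ASO}) since Algorithm~\ref{algo:linearizable_UQ_state_machine} is a thin wrapper: an update command $up$ becomes an invocation of \update$(up)$, and a query becomes an invocation of the modified \scan{} that returns the full view (the set produced at line~\ref{line:scan_lattice_renewals}) and then applies it locally. I would handle the round complexity, message size, and linearizability claims in turn.

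For round complexity, each update command performs exactly one \update{} invocation and each query performs one modified \scan. An \update{} issues one $readTag$, one phase~0 $Lattice$ call, and one $LatticeRenewal$; a \scan{} issues one $readTag$ and one $LatticeRenewal$. Each $readTag$ and $writeTag$ subprotocol costs $O(1)$ message delays, and by Lemma~\ref{lem:worst_round_complexity_multishot} every lattice operation terminates in $O(\sqrt{k})$ rounds. The only remaining blocking step is the wait for a $\quotes{goodLA}$ message inside $LatticeRenewal$, which by Lemma~\ref{lem:nonskipping_good_execution} and Lemma~\ref{lem:worst_round_complexity_multishot} also completes within $O(\sqrt{k})$ rounds. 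Since there are only a constant number of lattice operations per command, each command finishes in $O(\sqrt{k})$ message delays. For message size, every message type used by \alg{} (namely $\quotes{value}$, $\quotes{writeTag}$, $\quotes{echoTag}$, $\quotes{readTag}$, $\quotes{readAck}$, $\quotes{writeAck}$, and $\quotes{goodLA}$) carries a constant number of tags together with at most a single command-timestamp pair; assuming each command has constant size, the per-message overhead is $O(1)$.

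For linearizability, I would reuse the linearization $\sigma$ constructed in Section~\ref{s:aso-linearization} and replace each \update$(up)$ operation and each modified \scan{} by the corresponding update or query command to obtain a candidate sequence $\sigma_{UQ}$. Real-time order is immediately inherited from $\sigma$. For the sequential specification, fix a query $Q$ returning $\mathrm{Apply}(H_Q)$, where $H_Q$ is the view returned by $Q$'s $LatticeRenewal$. The key observation is that by Lemma~\ref{lem:comparable_views}, Lemma~\ref{lem:output_view_dominate_input_view}, and Lemma~\ref{lem:view_order}, $H_Q$ equals exactly the set of values written by those \update{} operations placed before $Q$ in $\sigma$: Lemma~\ref{lem:output_view_dominate_input_view} applied to each prior update implies its value lies in $H_Q$, while the insertion rule for \update{}s in Section~\ref{s:aso-linearization} places an update before $Q$ in $\sigma$ iff its written value appears in $Q$'s view. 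Because update commands commute by assumption, $\mathrm{Apply}(H_Q)$ is independent of the order in which the commands in $H_Q$ are applied, and hence coincides with the state obtained by sequentially executing in $\sigma_{UQ}$ the prefix of update commands preceding $Q$.

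The main obstacle, in my judgment, is justifying that returning the entire view (rather than one value per segment, as in the standard snapshot interface) is still captured by the linearization argument of Theorem~\ref{theo:mainTheoremASO}. The relevant content is already present in Section~\ref{s:aso-linearization}: that proof shows that the view of each operation corresponds to a real-time consistent, comparable prefix of the updates in $\sigma$. Once this is noted, the commutativity of updates removes any residual ambiguity about the ordering of updates within such a prefix, and $\sigma_{UQ}$ is a valid linearization.
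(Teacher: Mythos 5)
Your proof is correct and follows exactly the route the paper intends: the theorem is stated in Appendix~\ref{app:linearizable_UQ} without an explicit proof, the implicit argument being precisely this inheritance of round complexity (Lemma~\ref{lem:worst_round_complexity_multishot} plus the constant-round borrow), constant message-size overhead, and the linearization of Section~\ref{s:aso-linearization} from \alg{}, with commutativity of updates handling the fact that the modified \scan{} returns the whole view. Your identification of the view/prefix correspondence via the insertion rule and Lemmas~\ref{lem:comparable_views}, \ref{lem:output_view_dominate_input_view}, and \ref{lem:view_order} is exactly the missing content.
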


\section{Proofs for the ELA algorithm} 
\label{app:ELA}

\subsection{Proof of Lemma \ref{lem:comparable_views_for_same_node}} \label{app:comparable_views_for_same_node} 

\compViewsForSameNode*

\begin{proof}
The value of set $V_i[s]$ is modified only when $i$ receives a message from node $s$. Since $s$ is a non-faulty node and the communication is FIFO, the set $V_i[s]$ at time $t$ must be the same as the set $V_s[s]$ at some time $t_i$ and the set $V_j[s]$ at time $t'$ must be the same as the set $V_s[s]$ at some time $t_j$. The set $V_s[s]$ is non-decreasing. Thus, $V_i[s]$ at time $t$ must be comparable with $V_j[s]$ at time $t'$. 
\end{proof}

\subsection{Proof of Lemma \ref{lem:comparability}} \label{app:comparability} 

\comparabilityELA*

\begin{proof}
Let $V_i$ denote the vector at node $i$ and $V_j$ denote the vector at node $j$ when nodes $i$ and $j$ decide. The statement of the lemma is proved if we show that $V_i[i] $ and $V_j[j]$ are comparable. 

The decision condition on line \ref{ELA-line: wait until n-f} states that there exists a set $Q_i$ of size at least $n - f$ such that $V_i[i] = V_i[s]$ for each $s \in Q_i$ and a set $Q_j$ of size at least $n - f$ such that $V_j[j] = V_j[s]$ for each $s \in Q_j$. Since $f < \frac{n}{2}$, there exists a correct process $s \in Q_i \cap Q_j$. Lemma~\ref{lem:comparable_views_for_same_node} implies that $V_i[s]$ and $V_j[s]$ are comparable. This leads to the conclusion that $V_i[i]$ (which is equal to $V_i[s]$) is comparable to $V_j[j]$ (which is equal to $V_j[s]$)
\end{proof}

\subsection{Proof of Lemma \ref{lem:termination_without_newValue}} \label{app:termination_without_newValue} 
\terminationELA*

\begin{proof}
Let $C$ denote the set of correct nodes in an execution. Let node $i$ be an undecided node at time $t$ that does not crash by $t+2D$.
We show that by time $t+2D$, we have $V_i[j] = V_i[i]$ for each $j \in C$.
As a result, the predicate \EQ{} at line number~\ref{ELA-line: wait until n-f} becomes true, and node $i$ decides at Line~\ref{ELA-line: decide LA}.

Proof by contradiction. Suppose there exists a value $v \in V_i[i] - V_i[j]$ at time $t+2D$. Since by assumption, value $v$ is not an exposed value in this interval, it must be received by some correct node $s$ (or it is the input value of node $s$) at some time $t_s < t$.
Thus, value $v$ must be sent to all by node $s$ at time $t_s$ and received by all correct nodes  
by time $t_s+D < t+D$, including node $j$. By the algorithm, node $j$ must send value $v$ to all the other nodes before time $t+D$. 
Thus, node $i$ must receive value $v$ from node $j$ by  
 time $t+2D$, and add $v$ into $V_i[i]$ and $V_i[j]$, a contradiction to the assumption that $v \in V_i[i] - V_i[j]$.   
\end{proof}

\subsection{Proof of Lemma \ref{lem:failureChainLength}} \label{app:failureChainLength} 

\failureChainLength*

\begin{proof}
 Recall the definition of an exposed value $v$ occurring in an interval 
 $(t,t+D]$: there has to be a failure chain ending with a correct process that receives value $v$ in interval $(t, t + D]$. Let $p_1,\cdots,p_{m-1}, p_m$ denote such a failure chain for value $v$, where $p_1\cdots p_{m-1}$ are faulty and node $p_m$ is non-faulty. 


Assume by contradiction that the length of this failure chain is $ l \le \frac{t}{D}$. Since $D$ is the message delay in the execution, node $p_i$ (in the failure chain) hears about $v$ at time $t_i \le i\cdot D$. Thus if $l \le \frac{t}{D}$, the correct node $p_m$ hears about $v$ at time $t_m \le \frac{t}{D} \cdot D = t$ making $v$ an exposed value occurring in an interval prior to $(t,t+D)$. This contradicts the assumption in the statement of the lemma that $v$ is an exposed value in interval $(t,t+D]$.
\end{proof}

\subsection{Proof of Lemma \ref{lem:unique_failure_chain}} \label{app:unique_failure_chain} 

\uniqueFailureChain*

\begin{proof}
Let $V$ and $U$ denote the set of the first $|P_v| - 2$ nodes in $P_v$ and the first $|P_u| - 2$ nodes in $P_u$. Suppose node $i \in V \cap U$ for contradiction. By condition (iv) of Definition \ref{def:failure_chain}, node $i$ crashes while sending $v$ to other nodes on line~\ref{ELA-line:echo to all} of ELA. Since lines \ref{line:ela_add} to~\ref{ELA-line:echo to all} of ELA are executed atomically, node $i$ cannot crash while sending value $u$ to other nodes at line~\ref{ELA-line:echo to all}, a contradiction. 
\end{proof}
\subsection{Proof of Lemma \ref{lem:round_complexity_single_shot}} \label{app:round_complexity_single_shot} 

\roundComplexityELA*

\begin{proof}
Let us assume that the algorithm takes $ 2 \sqrt{k} + 1 $ rounds for contradiction. By Lemma \ref{lem:termination_without_newValue},  we know that to prevent the algorithm from terminating, there has to be at least one exposed value every two rounds. Lemma~\ref{lem:failureChainLength} gives us the length of any failure chain and Lemma~\ref{lem:unique_failure_chain} states that a faulty node (except for the last 2 nodes in a failure chain) can be a part of only one failure chain. 
Thus if the algorithm terminates in round $2 \sqrt{k} + 1 $,  the number of faulty nodes
must be at least $1 + 3 + \dots + 2\sqrt{k} - 1 > k$, leading to a contradiction. 
\end{proof}

\section{General Transformation in Message Passing Systems} \label{app:transformation_message}

In this section, we show how to adapt the transformation given by Attiya et al. in \cite{attiya1995atomic} for shared memory systems to work in message passing systems. The transformation algorithm, TS-ASO, is shown in Algorithm \ref{algo:transformation_message}. 

In the algorithm, each node $p$ keeps track of a vector $Snap$ of size $n$, which is the local view  of the shared object, i.e., $Snap[q]$ stores the most recent value written by node $q$ known by node $p$. The variable $V$ is a map from tag number to snapshot. $V[r]$ is the snapshot vector obtained for tag $r$. Variable $r$ denotes the tag number of the last lattice agreement instance that node $p$ has completed. $maxTag$ keeps track of the largest tag ever seen by a node. $ts$ is a sequence number for values, which is increased by one when a new value needs to be written. Each value is associated with a timestamp. For any value $v$, we use $ts_v$ to denote its associated timestamp. If a variable belongs to node $i$, we use the subscript $i$ to denote it. For example, $maxTag_i$ denotes the value of variable $maxTag$ at node $i$. 

{\sc \bf Scan()} operation: a \scan operation invokes at most two lattice operations to obtain a view. We call the two {\bf for} loops as its two phases. At each phase, it first decides which lattice agreement instance to run by reading the largest tag from at least $n - f$ nodes at line \ref{line:read_tag}. Then, it writes the tag obtained to at least $n - f$ nodes at line \ref{line:write_tag}. At line \ref{line:read_state}, it reads the local state from at least $n - f$ nodes and the join of all states received will be used as input for its lattice agreement. At line \ref{line:write_state}, it writes the join of all states read at line \ref{line:read_state} to at least $n - f$ nodes. Then, it invokes the lattice agreement instance with the tag obtained at line \ref{line:read_tag} and the vector obtained at line \ref{line:read_state} as input parameters. After completion of the lattice agreement, if it does not observe a higher tag number, then it directly returns the view obtained from its lattice agreement invocation. If it does not return a view after lattice agreement invocation, it borrows a view from some other node, which is guaranteed to exist. 

{\sc \bf Update$(v)$} operation: To write value $v$, node $p$ increases $ts_p$ by one and assign it to be the timestamp of $v$. It writes value $v$ with its timestamp to at least $n - f$ nodes. Then, it executes a \scan operation. The view returned by the \scan operation is not used. The \scan operation acts as a synchronization point.

\begin{algorithm*}[!hptb]
\begin{algorithmic}[1]
	\small
	\item[{\bf Local Variables:}] 
	
	\item[] $Snap$ \COMMENT{vector of size $n$, local view of the shared object.} 
	
	\item[] $V$ \COMMENT{snapshots obtained. $V[r]$ is the snapshot obtained for tag $r$}
    \item[] $r$ \COMMENT{tag number for lattice agreement}
    
    \item[] $maxTag$ \COMMENT{Integer, largest tag ever seen} 
    
    \item[] $ts$ \COMMENT{timestamp for value}
    \item[] \hrulefill

\begin{multicols*}{1}
\item[\textbf{Procedure Scan()}:]
\FOR{$phase :=$ 1 to 2} 
\STATE $readTag()$ \label{line:read_tag}
\STATE $r \gets \max (maxTag, r + 1)$  \label{line:tag_for_lattice}
\STATE $writeTag(r)$ \label{line:write_tag} 
\STATE $input \gets readState()$ \label{line:read_state}
\STATE $writeState(input)$ \label{line:write_state}
\STATE $output := LA(r, input)$ \label{line:lattice_op}
\STATE $readTag()$ \label{line:read_tag_again}
\IF{$maxTag \leq r$} \label{line:no_higher_tag}
\STATE $V[r] := output$ 
\STATE $writeView(output, r)$ \label{line:write_view}
\STATE {\bf return} $V[r]$ 
\ELSIF{$phase = 2$} 
\STATE {\bf wait until} $V[r] \not = \emptyset$ \label{line:borrow_view}
\STATE {\bf return} $V[r]$
\ENDIF
\ENDFOR

\item[]
\item[]

\item[\textbf{Procedure Update($v$)}:]
\STATE $ts \gets ts + 1$ \\
\STATE $writeValue(\tuple{v, ts})$
\STATE {\bf return {\em Scan}()} \\
\item[] 
\item[]

\item[\textbf{Procedure readTag()}:]
\STATE Send $(\quotes{readTag})$ to all 
\STATE \textbf{Wait until} receiving \\\hspace{0.3in} $\geq n-f$ $(\quotes{readTagAck}, *)$ msgs
\STATE $maxTag \gets$ largest tag received

\item[] 

\item[\textbf{Procedure writeTag($tag$)}:]
\STATE Send $(\quotes{writeTag}, tag)$ to all 
\STATE \textbf{Wait until} receiving \\\hspace{0.3in} $\geq n-f$ $(\quotes{writeTagAck}, tag)$ msgs

\item[] 

\item[\textbf{Procedure writeValue($\tuple{v, ts}$)}:]
\STATE Send ($\quotes{value}$, $\tuple{v, ts}$) to all
\STATE \textbf{Wait until} receiving \\\hspace{0.3in} $\geq n - f$ ($\quotes{valueAck}$) msgs \\

\item[] 

\item[\textbf{Procedure readState()}:]
\STATE Send ($\quotes{readState}$) to all
\STATE \textbf{Wait until} receiving \\\hspace{0.3in} $\geq n - f$ ($\quotes{readStateAck}$, $*$) msgs \\ 
\STATE Let $S_j$ denote the state vector received from $j$
\STATE {\bf return} $\bigsqcup \limits_{j = 1}^{n} S_j$

\item[] 

\item[\textbf{Procedure writeState($S$)}:]
\STATE Send ($\quotes{writeState}$, $S$) to all
\STATE \textbf{Wait until} receiving \\\hspace{0.3in} $\geq n - f$ ($\quotes{writeStateAck}$) msgs \\

\item[]

\item[\textbf{Procedure writeView($view$, $r$)}:]
\STATE Send ($\quotes{writeView}$, $view$, $r$) to all
\STATE \textbf{Wait until} receiving \\\hspace{0.3in} $\geq n - f$ ($\quotes{viewAck}$) msgs \\

\item[]

\item[/* Event handlers: executing in background */]
\item[/* All event handlers executed atomically */]

\item[{\bf Upon receiving $(\quotes{value}, \tuple{u, ts})$ from $q$}:]
\STATE  $Snap[q] \gets \max(Snap[q], \tuple{u, ts})$
\item[]

\item[{\bf Upon receiving $(``writeTag", tag)$} from $q$:]
\STATE $maxTag \gets \max(maxTag, tag)$
\STATE Send $(\quotes{writeTagAck}, tag)$ to $q$ 
\item[]

\item[{\bf Upon receiving $(\quotes{readTag})$ from $q$}:]
\STATE Send $(``readTagAck", maxTag)$ to $q$
\item[]

\item[{\bf Upon receiving $(\quotes{readState})$ from $q$}:]
\STATE Send $(\quotes{readStateAck}, Snap)$ to $q$

\item[]

\item[{\bf Upon receiving $(\quotes{writeState}, S)$ from $q$}:]
\STATE $Snap \gets Sanp \sqcup S$
\STATE Send $(\quotes{writeStateAck})$ to $q$

\item[]

\item[{\bf Upon receiving $(\quotes{writeView}, U, r)$ from $q$}:]
\STATE $V[r] \gets V[r] \sqcup U$

\item[]

\end{multicols*}
\end{algorithmic}
\caption{TS-ASO: code for node $p$.}
\label{algo:transformation_message}
\end{algorithm*}

\subsection{Proof of Correctness} 
In this section, we show that atomic object implementation is linearizable by explicitly constructing a linearization of all update and scan operations. We can first obtain the following lemma regarding to the views returned by all operations. We say a view returned by an operation is a $direct$ view if this view is returned in the first phase of the scan procedure. Otherwise, we call this view as an $indirect$ view. We can readily see that a direct view of $p_i$ is obtained from an execution of lattice agreement of $p_i$. An indirect view of node $i$ is direct view of some other node $j$. We call an invocation of a lattice agreement with tag $r$ as a lattice operation with tag $r$.

\begin{lemma} \label{lem:input_quorum}
Consider a lattice operation $op$ by node $p$ with tag $r$, suppose $op$ returns $V_{op}$ at time $t$. Then, for each $j$, there exists a set of nodes $Q_j$ with size at least $n - f$ such that $V_{op}[j] \leq Snap_i[j]$ for each $i \in Q_j$ at time $t$. 
\begin{proof}
Let $P$ denote the set of nodes which invokes the lattice operation $op$ before or at time $t$. By Upward-Validity, we have that $V_{op}[j] = input_p[j]$ for some node $p \in P$. By line \ref{line:write_state}, there exists a set of nodes $Q_j$ with size at least $n - f$ such that $input_p \leq Snap_q$ for each $i \in Q_j$. Therefore, $V_{op}[j] = input_p[j] \leq Snap_i[j]$ for any $i \in Q_j$.  
\end{proof}
\end{lemma}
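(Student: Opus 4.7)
The plan is to exploit the Upward-Validity property of the underlying lattice agreement together with the $writeState$ step that TS-ASO performs immediately before each lattice operation. First, I would let $P$ denote the set of nodes that have invoked the lattice operation $op$ (with tag $r$) by time $t$, and consider the input vectors $input_{q}$ that these nodes supplied to $op$, each computed at line \ref{line:read_state}. Upward-Validity tells us that $V_{op} \le \bigsqcup_{q \in P} input_{q}$. Since each coordinate of $Snap$ stores a value with a timestamp of the form $\tuple{v, ts}$ and timestamps are totally ordered (so each $Snap[j]$ lies in a chain), the join in the $j$-th coordinate is attained by a single maximal element. Hence there exists $p^* \in P$ with $input_{p^*}[j] = V_{op}[j]$.

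Next, I would invoke line \ref{line:write_state}: before $p^*$ was allowed to issue its lattice call, it first completed $writeState(input_{p^*})$, which blocks until at least $n-f$ nodes acknowledge the write. Let $Q_j$ be that quorum. On receipt of the $(\quotes{writeState}, input_{p^*})$ message, each node $i \in Q_j$ executed $Snap_i \gets Snap_i \sqcup input_{p^*}$, so at that instant $Snap_i[j] \ge input_{p^*}[j]$. Because each component of $Snap_i$ is monotone non-decreasing over time (the only assignment to $Snap_i[j]$ is via a join), the inequality persists through time $t$, giving $V_{op}[j] = input_{p^*}[j] \le Snap_i[j]$ for every $i \in Q_j$, which is the required conclusion.

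The main obstacle I anticipate is justifying the step that turns ``$V_{op}[j]$ is bounded by the join of all inputs'' into ``$V_{op}[j]$ equals a single input coordinate.'' Strictly, Upward-Validity only supplies an upper bound, and for a general lattice the join in the $j$-th coordinate could strictly exceed every individual $input_{q}[j]$. I would close this gap by noting that timestamps in TS-ASO are totally ordered pairs, and $Snap[j]$ always holds a single timestamped value taken by $\max$, so the family $\{input_q[j] : q \in P\}$ forms a chain and is realized by a single $p^*$. Once this is pinned down, the rest is a direct bookkeeping argument chaining the $writeState$ quorum to the current value of $Snap_i[j]$.
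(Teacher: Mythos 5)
Your proposal is correct and follows essentially the same route as the paper's proof: identify via Upward-Validity (plus the fact that each coordinate of the snapshot lattice is a chain of timestamped values, so the coordinatewise join is attained by a single input) a node $p^*$ whose input dominates $V_{op}[j]$, and then use the $n-f$ acknowledgement quorum of $p^*$'s $writeState$ together with monotonicity of $Snap_i$ to obtain the required set $Q_j$. Your explicit handling of the step from ``bounded by the join of all inputs'' to ``bounded by a single input coordinate'' is a point the paper's proof glosses over, but the underlying argument is the same.
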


\begin{lemma}
If two operations return $view_i$ and $view_j$, then $view_i$ and $view_j$ are comparable. 
\begin{proof}
We only need to show that $view_i$ and $view_j$ are comparable if they are direct views. Let $op_i$ and $op_j$ denote the two operations that return $view_i$ and $view_j$, respectively. We have the following cases. 

Case 1. $view_i$ and $view_j$ are obtained from the same lattice operation. By comparability of lattice agreement, $view_i$ and $view_j$ are comparable. 

Case 2. $view_i$ are obtained from lattice operation with tag $r_i$ and $view_j$ is obtained from lattice operation with round $r_j$. Assume that $r_j > r_i$, w.l.o.g. Assume that $op_i$ obtains $view_i$ in {\bf first phase}. The case that $op_i$ returns $view_i$ in {\bf second phase} is symmetric. Then, in line $op_i$ finds no nodes with tag number greater than $r_i$. Therefore, $op_i$ obtains $view_i$ before $op_j$ completes line \ref{line:write_tag}. Then, when $op_j$ starts to read states of at least $n - f$ nodes at line \ref{line:read_state}, it must be able to read $view_i[j]$ for each $j$, by Lemma \ref{lem:input_quorum} and quorum intersection. Thus, $view_i$ is less that or equal to the input of $op_j$ for the lattice operation with tag $r_j$. By Downward-Validity of lattice agreement, we have $view_i \leq view_j$. 
\end{proof} 
\end{lemma}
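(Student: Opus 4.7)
The plan is to reduce comparability of arbitrary returned views to the case where both are direct views, and then split on whether the two direct views arise from the same lattice instance or from lattice instances with different tags. For the reduction, note that by construction an indirect view returned by some operation $Op$ is exactly a direct view produced by another node's lattice call and placed into $V[r]$ via $writeView(\cdot, r)$; hence it is enough to handle pairs of direct views.

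If both $view_i$ and $view_j$ are direct views carrying the \emph{same} tag $r$, they are outputs of one and the same lattice agreement instance (the one labelled $r$), so the comparability property of LA gives the conclusion immediately. The substantive case is when the two direct views have distinct tags, say $r_i < r_j$; here I would prove $view_i \le view_j$ using Lemma~\ref{lem:input_quorum} together with downward-validity of lattice agreement.

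For the cross-tag case, let $op_i$ and $op_j$ be the phases that produced $view_i$ and $view_j$. Because $view_i$ is direct, $op_i$ observed $maxTag \le r_i$ at the $readTag$ step on line~\ref{line:read_tag_again}; in particular no node had installed the larger tag $r_j$ by that moment, so $op_j$'s $writeTag(r_j)$ on line~\ref{line:write_tag} must have completed \emph{after} $op_i$'s second $readTag$, by quorum intersection. Now Lemma~\ref{lem:input_quorum} applied to $op_i$'s lattice output $view_i$ yields, for each index $k$, a quorum $Q_k$ of nodes whose $Snap[k]$ already dominates $view_i[k]$ once $op_i$ finishes. Intersecting each $Q_k$ with the quorum that responds to $op_j$'s subsequent $readState$, and using the temporal order derived above, I would conclude that the $input$ computed by $op_j$ on line~\ref{line:read_state} is componentwise at least $view_i$. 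Downward-validity of the LA instance for tag $r_j$ then gives $view_j \ge input \ge view_i$.

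The main obstacle I anticipate is pinning down the temporal ordering cleanly. The argument threads together several quorum-intersection steps: $op_i$ sees no larger tag at its second $readTag$, which forces $op_j$'s $writeTag$ to follow; this in turn forces $op_j$'s $readState$ to follow $op_i$'s $writeState$; and only then does quorum intersection absorb $view_i$ into $op_j$'s lattice input. Establishing these orderings carefully, and verifying that the asymmetric handling of direct views in phase~1 versus phase~2 does not break the chain (the phase-2 case should be handled symmetrically, since whichever phase returned $view_i$ still performed the tag check), is where I expect most of the technical care to go.
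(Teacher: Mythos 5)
Your proposal is correct and follows essentially the same route as the paper's proof: reduce to direct views, dispose of the equal-tag case by LA comparability, and in the cross-tag case use the fact that a direct view with the smaller tag saw no larger tag at its second $readTag$ to order it before $op_j$'s $writeTag$, then combine Lemma~\ref{lem:input_quorum} with quorum intersection at $op_j$'s $readState$ and downward-validity. The only difference is that you spell out the quorum-intersection steps in more detail than the paper does, which is fine.
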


Now, we associate an view with the beginning of an operation. For operation $op$, the view associated with the beginning of $op$ is view $view'$ such that $view'[j]$ is the largest value written by $p_j$ which is contained in the local state of at least $n - f$ nodes. 

\begin{lemma} \label{lem:view_begin} 
Assume operation $op$ returns $view$ and let $view'$ denote the view associated with the beginning of $op$, then $view' \leq view$.
\begin{proof}
Consider the following two cases. 

Case 1. $op$ return $view$ directly from lattice operation with tag $r$. W.l.o.g, assume that $op$ returns $view$ in the {\bf first phase}. The case where $op$ returns $view$ in the {\bf second phase} is symmetric. Let $input$ denote the input of $op$ for the lattice operation with tag $r$ at line \ref{line:lattice_op}. By definition of $view'$, when $op$ executes line \ref{line:read_state}, it must be able to read all values in $view'$. Since nodes write increasing values to the snapshot object, $view' \leq input$. By Downward-Validity of lattice agreement, we have $input \leq view$. Thus, $view' \leq view$. 

Case 2. $op$ returns $view$ indirectly. Then, $op$ must continue to phase 2. Let $r_1$ and $r_2$ denote the tag number $op$ obtains at line \ref{line:tag_for_lattice} of phase 1 and phase 2, respectively. We have $r_1 < r_2$. Consider the second phase, the condition at line \ref{line:no_higher_tag} is satisfied and $op$ borrows a $view$ of some other nodes for tag $r_2$. $view$ must be a direct view of some operation $op'$ for tag $r_2$. W.l.o.g, assume that $op'$ returns $view$ in the {\bf first phase}. The case where $op'$ returns $view$ in the {\bf second phase} is symmetric. Since $r_1 < r_2$, $op'$ must start line \ref{line:read_state} after $op$ starts. Otherwise, $op$ would obtain tag number $r_2$ instead of $r_1$ for its first phase. By the definition of the view associated with an operation, $op'$ must be able to read all values in $view'$ at line \ref{line:read_state}. Downward-Validity of lattice agreement implies that $view' \leq view$. 
\end{proof}
\end{lemma}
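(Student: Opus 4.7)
The plan is a case analysis on whether $op$ returns a direct view (produced by $op$'s own lattice operation) or an indirect view (borrowed from another operation $op'$). In both cases the final step will be Downward-Validity of the lattice agreement, so the real work is showing that whichever lattice operation produced $view$ had an input that already dominates $view'$.

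For the direct case, let $r$ be the tag and $input$ the input vector of the lattice operation whose output $op$ returns. By definition of $view'$, for every $j$ the value $view'[j]$ is already contained in $Snap_q[j]$ for at least $n-f$ nodes $q$ at the moment $op$ is invoked. When $op$ reaches line~\ref{line:read_state} it performs $readState()$ and collects snapshots from a set $Q$ of at least $n-f$ nodes. Since $f<n/2$ these two $(n{-}f)$-sets intersect, and because values in the $Snap$ arrays are monotone (the handler for $\quotes{value}$ only replaces entries with strictly larger timestamps), any node in the intersection still stores at least $view'[j]$. Taking the join at the end of $readState()$ gives $view'\le input$, and Downward-Validity then yields $view'\le input\le view$.

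For the indirect case, $op$ reaches phase~2 and borrows $view$ at line~\ref{line:borrow_view}. The borrowed $view$ must be a direct view of some operation $op'$ that invoked the lattice operation with the same tag $r_2$ as $op$'s phase~2, where $r_1 < r_2$ and $r_1$ is the tag $op$ obtained in phase~1. The key subclaim is that $op'$ starts its own $readState()$ strictly after the moment $op$ was invoked. To see this, observe that when $op$ performed its initial $readTag()$ in phase~1 it heard from a quorum $Q_1$ of $n-f$ nodes, none of which reported $maxTag\ge r_2$ (otherwise $op$ would not have selected $r_1$). On the other hand $op'$, before entering its lattice operation, propagates tag $r_2$ via $writeTag(r_2)$ to a quorum $Q_2$; by quorum intersection there is a node $s\in Q_1\cap Q_2$ that sent $readTagAck$ to $op$ before receiving $writeTag(r_2)$ from $op'$. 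Hence $op'$'s $writeTag$, and consequently its $readState$, happen after $op$ was invoked. We can then apply exactly the Case~1 argument to $op'$: its $readState$ sees a quorum that intersects the $(n-f)$-quorum certifying every entry of $view'$, so the input of $op'$ dominates $view'$, and Downward-Validity gives $view'\le view$.

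The main obstacle I expect is pinning down the timing argument in Case~2 cleanly, specifically the quorum-intersection step that forces $op'$'s $writeTag$ (and therefore its subsequent $readState$) to occur after $op$'s invocation time. Once that ``happens-after'' relation is established the rest reduces to Case~1 applied to $op'$. A minor technical nuisance is ensuring that the symmetric subcase (where either $op$ or $op'$ obtains its direct view in the second phase rather than the first) is handled by the same argument, since in both subcases the determining event is the completion of the corresponding $readState()$ relative to the invocation time of $op$.
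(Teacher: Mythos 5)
Your proposal is correct and follows essentially the same route as the paper: a direct/indirect case split, with Downward-Validity of lattice agreement as the final step, and in the indirect case a timing argument showing that the lending operation $op'$ executes its $readState()$ after $op$ is invoked (your explicit quorum-intersection justification of "otherwise $op$ would have obtained $r_2$" is just a fleshed-out version of the paper's one-line claim). No gaps.
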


\begin{lemma}
Consider two operations $op_i$ and $op_j$ that return $view_i$ and $view_j$, respectively. If $op_i \rightarrow op_j$, then $view_i \leq view_j$. 

\begin{proof}
Let $view$ be the view associated with the beginning of $op_j$. By Lemma \ref{lem:view_begin}, $view \leq view_j$. Since $op_i$ obtains $view_i$ before the beginning of $op_j$, Lemma \ref{lem:input_quorum} implies that $view_i \leq view$. Thus, $view_i \leq view_j$. 
\end{proof}
\end{lemma}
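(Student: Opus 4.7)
The plan is to prove this lemma by introducing an auxiliary view $view'$ associated with the starting instant of $op_j$, and then establishing two sandwich inequalities $view_i \leq view' \leq view_j$. I would define $view'$ so that its $k$-th coordinate is the largest value written by $p_k$ that is held in the local $Snap$ array of at least $n-f$ nodes at the moment $op_j$ begins. This is essentially the auxiliary object already used in Lemma \ref{lem:view_begin}, so tying the argument to that lemma avoids re-deriving the case analysis from scratch.

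For the right-hand inequality $view' \leq view_j$, I would invoke Lemma \ref{lem:view_begin} directly. Its case split already covers (i) a direct return, where the \texttt{readState} step of $op_j$ at line \ref{line:read_state} intersects each quorum that witnesses $view'[k]$ by the usual $n-f + n-f > n$ counting argument, and downward validity of the subsequent lattice step carries the values into $view_j$; and (ii) an indirect return borrowed from some lattice operation $op''$ with a strictly larger tag, which by the tag-reading mechanism must have performed its \texttt{readState} after $op_j$ started, hence also dominates $view'$.

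For the left-hand inequality $view_i \leq view'$, I would argue on how $op_i$ returned its view. In the direct case, Lemma \ref{lem:input_quorum} supplies, for each coordinate $k$, a quorum $Q_k$ of at least $n-f$ nodes whose $Snap[k]$ already contains $view_i[k]$ at the response time of $op_i$; since $op_j$ starts strictly later and the $Snap$ arrays grow monotonically, those values persist and thus appear in $view'$. In the indirect case, $view_i$ equals the direct view of some originating operation whose \texttt{writeView} step pushed the entries of the view to a quorum before $op_i$ could return, so the same monotonicity argument applies to that originating operation.

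The main obstacle I anticipate is the indirect-view case on the $op_i$ side: handling it cleanly requires tracing the borrowed view back to its originating direct return and verifying that its entries were already replicated to a quorum of $Snap$ arrays before $op_i$'s response event. The \texttt{writeView} step is the key hook that makes this possible, but one must be careful that the borrowed view was indeed pinned by a completed \texttt{writeView} and not just in-flight. Once this bookkeeping is in place, transitivity of the componentwise order closes the chain $view_i \leq view' \leq view_j$.
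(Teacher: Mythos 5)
Your proposal is correct and follows essentially the same route as the paper: sandwich $view_i \leq view' \leq view_j$ through the view associated with the beginning of $op_j$, getting the right inequality from Lemma \ref{lem:view_begin} and the left from Lemma \ref{lem:input_quorum} together with monotonicity of the $Snap$ arrays. One small correction: in the indirect case for $op_i$, the replication of the borrowed view's entries into a quorum of $Snap$ arrays comes from the \emph{writeState} step performed before the originating lattice operation (which is exactly what Lemma \ref{lem:input_quorum} records), not from the \emph{writeView} step, which only populates the $V[r]$ maps.
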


\begin{lemma}
Let $up$ be an \update{} operation by node $p$ that writes value $v$, and returns $view_p$. Then, $\tuple{v, ts_v} \leq view_p[p]$.
\begin{proof}
Let $view$ be the view associated with the beginning of the \scan operation embedded in $up$. Since $up$ writes $\tuple{v, ts_v}$ to at least $n - f$ nodes before its embedded \scan operation, then $\tuple{v, ts} \leq view[p]$. By Lemma \ref{lem:view_begin}, we have $view \leq view_p$. Thus, $\tuple{v, ts_v} \leq view_p[p]$.  
\end{proof}
\end{lemma}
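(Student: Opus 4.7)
The statement ties together two facts: (i) the update operation $up$ writes $\tuple{v, ts_v}$ before starting its embedded \scan, and (ii) a view returned by any operation dominates the view associated with its beginning (Lemma~\ref{lem:view_begin}). My plan is to combine these two observations: first show that when the embedded \scan begins, the $p$-coordinate of the beginning-view already contains $\tuple{v, ts_v}$; then invoke Lemma~\ref{lem:view_begin} to transfer this lower bound to the view returned by the operation.

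More concretely, recall the pseudocode of \update$(v)$ in TS-ASO: node $p$ increments $ts_p$, invokes $writeValue(\tuple{v, ts_v})$, and only then calls \scan. The $writeValue$ procedure blocks until $\ge n-f$ acknowledgements have been received, so by the time \scan is invoked there exist $\ge n-f$ nodes $q$ whose local variable $Snap_q[p]$ has been updated (via the $Snap[q] \gets \max(Snap[q], \tuple{u, ts})$ handler) to a value at least $\tuple{v, ts_v}$. Now apply the definition of the view associated with the beginning of an operation, which picks the largest value written by each $p_j$ that is already stored in the local state of at least $n-f$ nodes: the $p$-coordinate of this beginning-view, call it $view'$, must therefore satisfy $view'[p] \ge \tuple{v, ts_v}$.

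Finally, Lemma~\ref{lem:view_begin} gives $view' \leq view_p$ coordinate-wise, and in particular $view_p[p] \geq view'[p] \geq \tuple{v, ts_v}$, which is the desired conclusion. The only subtlety I foresee is making sure the timing is right: $writeValue$ must have fully completed its quorum write before the \scan begins, so that the beginning-view of the \emph{embedded} \scan (not the surrounding \update) is the right object to reason about. This is immediate from the sequential structure of the \update{} procedure in Algorithm~\ref{algo:transformation_message}, so no additional work is required beyond pointing it out.
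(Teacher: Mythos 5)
Your proposal is correct and follows essentially the same route as the paper's proof: both argue that the quorum write of $\tuple{v, ts_v}$ completes before the embedded \scan{} begins, so the beginning-view of that \scan{} already dominates $\tuple{v, ts_v}$ in coordinate $p$, and then both transfer this bound to the returned view via Lemma~\ref{lem:view_begin}. Your version merely spells out the quorum-intersection and sequencing details that the paper leaves implicit.
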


The linearization sequence of the $scan$ and $update$ operations is constructed in the same way as the one given in \cite{attiya1995atomic}. First, we construct a sequence $\sigma'$ which only includes all $scan$ operations. The sequence also includes the $scan$ operations embedded in the $update$ operations. The $scan$ operations are ordered in $\sigma'$ according to the order of the views returned by them. Specifically, for any two $scan$ operations $sc_i$ and $sc_j$ that return $view_i$ and $view_j$, respectively, if $view_i < view_j$, then $sc_i$ appears before $sc_j$ in $\sigma'$. If $view_i = view_j$ and $sc_i \rightarrow sc_j$, then $sc_i$ appears before $sc_j$ in $\sigma'$. Otherwise, $sc_i$ and $sc_j$ are ordered arbitrarily. 

Now we create a linearization sequence $\sigma$ from $\sigma'$ by inserting all $update$ operations into $\sigma'$. Consider an operation $op_i$ that writes value $v$. We insert $op$ after all $scan$ operations that return a value strictly smaller than $v$ and before all scan operations that return a value greater than or equal to $v$. That is, $op$ is inserted  just before the first $scan$ operation that returns a view which contains $v$. For any two operations $op_1$ and $op_2$ that fit between the same pair of $scan$ operations. If $op_1 \rightarrow op_2$, then we put $op_1$ before $op_2$ in sequence $\sigma$. Otherwise, $op_1$ and $op_2$ are ordered arbitrarily. 

As long as we have the above lemmas, the proof which shows that $\sigma$ is a linearization is the same as the proof in \cite{attiya1995atomic}. 

\begin{theorem}
There exists an atomic snapshot object implementation in asynchronous crash-prone message passing systems, which requires $O(\log n)$ message delays per $update$ or $scan$ operation, where $f < \frac{n}{2}$ is the maximum number of crash failures in the system. \begin{proof}
The paper~\cite{zheng2018linearizable} presents an $O(\log f)$ rounds algorithm for the lattice agreement problem in asynchronous crash-prone message passing systems. Directly plugging in their algorithm into our transformation result in $O(n)$ rounds complexity for update and scan operations, since their algorithm requires all nodes to start around the same time. Their algorithm can be simply modified to run in $O(\log n)$ rounds even if nodes start at different times. 
\end{proof}
\end{theorem}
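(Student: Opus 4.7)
The plan is to invoke the transformation algorithm TS-ASO (Algorithm \ref{algo:transformation_message}) whose correctness has already been established by the chain of lemmas in Appendix \ref{app:transformation_message}, and then bound the round complexity of each \scan{} and \update{} by accounting for the cost of each step.

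First, I would observe that every auxiliary procedure in TS-ASO ($readTag$, $writeTag$, $readState$, $writeState$, $writeValue$, $writeView$) is a single all-to-all broadcast that waits for $n - f$ acknowledgements, and therefore completes in $O(1)$ message delays. The \scan{} procedure consists of at most two phases, and within each phase the non-LA work is a constant number of such calls (lines \ref{line:read_tag}--\ref{line:write_state}, line \ref{line:read_tag_again}, line \ref{line:write_view}), together with a single invocation of the lattice agreement subroutine on line \ref{line:lattice_op}. Thus if the lattice agreement subroutine completes in $R$ message delays, each \scan{} completes in $O(1) + O(R)$ message delays. Since \update{}$(v)$ simply performs one $writeValue$ followed by a \scan{}, its complexity is also $O(1) + O(R)$. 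It remains to exhibit a lattice agreement subroutine with $R = O(\log n)$.

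The natural candidate is the $O(\log f)$ asynchronous LA algorithm of Xiong et al.~\cite{zheng2018linearizable}. The subtlety, which I would flag explicitly (as the footnote to the table and the closing paragraph of the section both warn), is that the round complexity of~\cite{zheng2018linearizable} as originally stated assumes that all nodes start the same LA instance at approximately the same time. In TS-ASO, different nodes may enter the \emph{same} lattice instance (identified by the same tag $r$) at very different real times, because a late-arriving participant reads tag $r$ through $readTag$ and only then invokes $LA(r, \cdot)$. Without modification, this skew can inflate the round count to $\Omega(n)$. The main obstacle is therefore to adapt the algorithm of~\cite{zheng2018linearizable} so that its $O(\log \cdot)$ bound holds under arbitrary start times; I would do this by having each node, upon receiving any message of an LA instance it has not yet joined, immediately begin participating from its current best known value, and by replacing the $O(\log f)$ doubling schedule with one indexed by $O(\log n)$ levels so that late joiners can ``catch up'' in $O(\log n)$ additional rounds. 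Once this modification is in place, each LA instance completes in $O(\log n)$ message delays regardless of start-time skew, which gives the desired $O(\log n)$ bound per \scan{} and per \update{}, completing the proof.
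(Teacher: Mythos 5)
Your proposal takes essentially the same route as the paper: both bound the non-LA overhead of TS-ASO by a constant number of quorum round-trips per phase and then attribute the $O(\log n)$ cost to a modified version of the lattice agreement algorithm of~\cite{zheng2018linearizable}, flagging the same start-time-skew issue that makes the unmodified algorithm degrade to $O(n)$. Your write-up is in fact somewhat more explicit than the paper's (which leaves both the step-by-step accounting and the nature of the modification as one-line claims), but the key remaining gap --- a full proof that the modified LA algorithm achieves $O(\log n)$ under arbitrary join times --- is left informal in both.
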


\section{Comparison between TS-ASO and AC-ASO} \label{app:comparison_TS-ASO_AC-ASO}
Let TS-ASO denote the general transformation in Appendix \ref{app:transformation_message}. We list the two primary differences between TS-ASO and AC-ASO here. 

1) In both TS-ASO and AC-ASO, to write a new value (in an \update{} operation), a node needs to first read the largest tag from a quorum of nodes. Let $r$ denote the tag obtained. In TS-ASO, a node directly participates in the lattice operation with tag $r$ (This design ensures that there is good lattice operation for each tag). This design is also the main reason why TS-ASO cannot preserve the round complexity of our ELA algorithm, since the round complexity of our ELA algorithm depends on the assumption that each node starts the algorithm around the same time and nodes can join the lattice operation with the same tag at quite different times. To solve this problem, our idea is to let a node participate a lattice operation with a strictly greater tag than the tag it reads from a quorum. That is, if a node observes a tag $r$, then it participates the lattice operation with tag $r + 1$. This ensures that all nodes participate the lattice operation with same tag around the same time (at most constant round apart). If we only have the above modification, we cannot guarantee that there exists a good lattice operation for each tag, then when some lattice operation needs to borrow a view from a good lattice operation, the existence of such a good lattice operation is not guaranteed. Thus, to tackle this problem, our idea is to use a dummy lattice operation whose only purpose is to ensure the existence of a good lattice operation for each tag. Specifically, we let each \update{} operation executes an initial lattice operation with tag $r$ but without introducing a new value with tag $r$. This initial lattice operation guarantees the existence of a good lattice operation for each tag but does not prevent the termination of existing lattice operations with the same tag due to the reason below. Our lattice operation has the following properties: the termination of a lattice operation with tag $T$ depends on only the values with tags at most $T$. Since the initial lattice operation with tag $r$ does not introduce a new value with tag $r$, it does not prevent progress of other existing lattice operations with tag at most $r$. This is also the reason why our design is not a general transformation that preserves the round complexity of any lattice agreement algorithm. 

2) In TS-ASO, before participating a lattice operation, a node needs to collect the states of at least a quorum of nodes and use their join as input for the lattice operation. In TS-ASO, such a read step is important in ensuring the correctness. (First, it ensures that a lattice operation with a bigger tag must be able to read the view obtained by a lattice operation with a smaller tag. Second, it ensures that a later \update{} or \scan operation must be able to read the view obtained by previous (completed) \update{} or \scan operation.) Each state read from other nodes is a vector of $n$ values. Thus, each message in the lattice operation has $O(n)$ overhead in size (it contains at least $n$ values). In our design, we would like to remove such overhead in message size. Note that our ELA algorithm has constant message size overhead. In AC-ASO, when a node participates a lattice operation, it does not collect the states of at least a quorum of nodes and use that as input for lattice operation. Without the reading step, two lattice operations are not sufficient to guarantee correctness. To ensure correctness, we will show that three lattice operations are sufficient. 

\section{Proofs for the \alg{} Algorithm} \label{app:aso_proofs}
 
\subsection{Mesage Handlers of \alg{}} \label{app:message_handlers}

\begin{itemize}
\vspace*{-0.1in}
    \item Upon receiving a $\quotes{writeTag}$ message: node $i$ updates its $maxTag$ to be the maximum of its current $maxTag$ and the received tag. Then, it responds a $writeAck$. 
    
    
    \item Upon receiving a $\quotes{echoTag}$ message: node $i$ updates its $maxTag$ to be the maximum of its current $maxTag$ and the received tag. 
    
    \item Upon receiving a $\quotes{value}$ message from node $j$: node $i$ adds the value into $V_i[i]$ and $V_i[j]$. It then forwards this value to all other nodes if it has never done so before. It is important to note that a node does \textit{not} update its $maxTag$ variable when it receives a value with a larger tag from a \quotes{value} message. The $maxtTag$ variable is only updated when a node receives a $\quotes{writeTag}$ message or $\quotes{echoTag}$ message. 
    
    \item Upon receiving a $\quotes{readTag}$ message: node $i$ responds a $readAck$ message along with the largest tag it has ever seen via $writeTag$ messages. 
    
    \item Upon receiving a $\quotes{goodLA}$ message with tag $r$ from node $j$: node $i$ \textit{borrows} the view from node $j$ by recording $V_i[j]^{\leq r}$. Our design ensures that the borrowed view is identical to the view from $j$'s \textit{good} lattice operation.
    By assumption the communication is FIFO, and node $j$ sends the message $(\quotes{goodLA}, r)$ right after its view satisfies the equivalence predicate at line \ref{line:lattice_termination_condition}; thus, $V_i[j]^{\leq r}$ must be the same as the view of the particular good lattice operation at node $j$. We formally prove this claim in Lemma \ref{lem:view_from_good_lattice}.
\end{itemize}

\subsection{Proof of Lemma \ref{lem:operation_termination}} \label{app:operation_termination}

\terminationASOLemma*

\begin{proof}
We show that each $LatticeRenewal()$ invocation terminates. 
Let $Op$ denote the $LatticeRenewal()$ procedure at node $i$. 
We only need to show that the condition at line \ref{line:wait_borrow} is eventually satisfied if $Op$ has not returned earlier, since the condition is the only blocking code inside $Op$. Consider the phse 3 lattice operation $L_3$. Since $Op$ continues to line \ref{line:wait_borrow} with $phase = 3$, $L_3$ is \textit{not} a good lattice operation. This means that the {\bf for} loop of $Op$ breaks at line \ref{line:bad_lattice_op}; hence, $r_i$ at line \ref{line:wait_borrow} is equal to the tag used by $L_3$. Since $L_3$ is not a good lattice operation, and it returns $false$, we have $maxTag_i > r_i$ at line \ref{line:good_lattice_condition}. In other words, at this point of time, the largest tag in the system is at least $maxTag_i$. Lemma \ref{lem:nonskipping_good_execution} implies that a good lattice operation with tag $r_i$ must be completed before $L_3$ executes line \ref{line:good_lattice_condition}. By assumption of the reliable communication channel, node $i$ is able to receive a $(\quotes{completed}, r_i)$ message from the good lattice operation.
After receiving the message, condition at line \ref{line:wait_borrow} is satisfied, and hence, $LatticeRenewal()$ terminates.
\end{proof}

\subsection{Proof of Lemma \ref{lem:view_from_good_lattice}} \label{app:view_from_good_lattice}

\viewFromGoodLattice*

\begin{proof}
If the operation is direct, then by definition, its view is the view of its final lattice operation in $LatticeRenewal()$ procedure. Otherwise, the operation borrows the view from some other node $j$ at line \ref{line:wait_borrow}, which must be the view of $j$'s good lattice operation. This is because only a good lattice operation sends a $\quotes{goodLA}$ message. 
\end{proof}

\subsection{Proof of Lemma \ref{lem:comparable_tagged_views_for_same_node}} \label{app:comparable_tagged_views_for_same_node} 

\comparableTagViewsSameNode*

\begin{proof}

The value of set $V_i[s]^{\leq T}$ is modified only when $i$ receives a new value with tag $\leq T$ from node $s$. Since the communication is FIFO, the set $V_i[s]^{\leq T}$ at time $t$ must be the same as the set $V_s[s]^{\leq T}$ at some time $t_i$ and the set $V_j[s]^{\leq T}$ at time $t'$ must be the same as the set $V_s[s]^{\leq T}$ at some time $t_j$. The set $V_s[s]^{\leq T}$ is non-decreasing. Thus, $V_i[s]^{\leq T}$ at time $t$ must be comparable with $V_j[s]^{\leq T}$ at time $t'$. 

\end{proof}



\subsection{Proof of Lemma \ref{lem:lattice_comparable_view}} \label{app:lattice_comparable_view}

\laComparableViews*

\begin{proof}
Consider two good lattice operations $Op_i$ with tag $T_i$ by node $i$  and $Op_j$ with tag $T_j$ by node $j$. Let $H_i$ and $H_j$ denote node $i$ and $j$'s view, respectively, after they complete line \ref{line:lattice_termination_condition}. Recall that by Definition \ref{def:view}, $H_i = V_i^* = V_i[i]^{\leq T_i}$ and $H_j = V_j^* = V_j[j]^{\leq T_j}$ right after the equivalence quorum predicate is satisfied.

To prove the lemma, we need to show that either $H_i \subseteq H_j$ or $H_j \subseteq H_i$. Assume without loss of generality $T_i \leq T_j$. Then consider two following cases. 
\begin{itemize}
    \item \textit{Case 1}: $T_i = T_j = T$. 
    Intuitively, both nodes participate in the same instance of lattice agreement, and thus, they will obtain  comparable outputs (views).
    
    Formally, 
    let $W_i$ and $W_j$ denote the equivalence quorum of lattice operation $Op_i$ and $Op_j$, respectively. Thus, there exists a nonfaulty node $s \in W_i \cap W_j$ such that $H_i = V_i[i]^{\leq T} = V_i[s]^{\leq T}$ and $H_j = V_j[j]^{\leq T} = V_j[s]^{\leq T}$. Lemma \ref{lem:comparable_tagged_views_for_same_node} implies that $V_i[s]^{\leq T}$ and $V_j[s]^{\leq T}$ are comparable. Thus, $H_i$ must be comparable with $H_j$, since to satisfy the equivalence quorum predicate, $H_i = V_i[i]^{\leq T} = V_i[s]^{\leq T}$ and $H_j = V_j[j]^{\leq T} = V_j[s]^{\leq T}$.
    
    \item \textit{Case 2}: $T_i < T_j$. In this case, we show that $H_i \subseteq H_j$.
    Roughly speaking, we want to show that lattice operation with a larger tag start with a view that is  at least as large as the view of any lattice operation with a smaller tag.
    we rely on Property 2 of $LatticeRenewal()$ and the way we update $maxTag$'s to prove the claim. 
    
    We make the following observations:
    
    \begin{itemize}
        \item \textit{Obs. 1}: the tag $T_j$ is \textit{not} known by node $i$ when $Op_i$ completes line \ref{line:lattice_termination_condition}.
        
        This is because (i) $Op_i$ is a good lattice operation, and the condition $maxTag_i \leq r_i$ at line \ref{line:good_lattice_condition} of $Op_i$ must return true; and (ii) line \ref{line:lattice_termination_condition} and \ref{line:good_lattice_condition} are executed atomically, and hence $V_i^* = V_i[i]^{\leq T_i}$ does \textit{not} change during this block of code. \vspace{5pt}
        
        \item \textit{Obs. 2}: There exists a node $s$ such that $H_i \subseteq V_s[s]^{\leq T_i}$ and $V_s[s]^{\leq T_i} \subseteq V_j[j]^{\leq T_j}$.
        
        Let $W_i$ denote the equivalence quorum of $Op_i$. Let $Q_j$ denote the set of at least $n - f$ nodes that sent the $\quotes{writeAck}$ messages in responding to the $\quotes{writeTag}$ message of $Op_j$. Since both set are of size at least $n-f$, there exists a node $s \in W_i \cap Q_j$ such that (i) $H_i = V_i^* =  V_i[s]^{\leq T_i}$ (due to the equivalence predicate); and (ii) node $s \in Q_j$. 
        
        By \textit{Obs. 1}, and the assumption of FIFO communication, node $s$ must have received all values in $V_i[s]^{\leq T_i}$ before receiving the $\quotes{writeTag}$ message of $Op_j$. Otherwise, $Op_i$ would observe tag $T_j$ at line \ref{line:good_lattice_condition}.  
        Thus, node $s$ must send out all values in $V_i[s]^{\leq T_i}$ to all before sending $writeAck$ for the $\quotes{writeTag}$ message of $op_j$. Then, when node $j$ receives the $writeAck$ from node $s$, it must also receive all values in $V_i[s]^{\leq T_i}$, i.e., the view of node $j$ contains all values in $V_i[s]^{\leq T_i}$ after line \ref{line:lattice_writeTag} of $op_j$ completes. Since $H_j$ is the set of values with tag at most $T_j > T_i$ in the history of node $j$ when $op_j$ completes line \ref{line:lattice_termination_condition},  $V_i[s]^{\leq T_i} \subseteq H_j$. Therefore, $H_i \subseteq H_j$.
        
    \end{itemize}


\end{itemize}
\end{proof}

\subsection{Proof of Theorem \ref{theo:mainTheoremASO}} \label{app:mainTheoremASO} 

We first show the following two lemmas. 

\begin{restatable}{lemma}{semanticsASO} \label{lem: sequential specification proof}

The sequence $\sigma$ preserves the semantics of atomic snapshot object, i.e., for any \scan operation which returns $Snap$, for any $i$, $Snap[i]$ must be the value written by the latest \update{} operation of node $i$ that appears before the \scan operation in $\sigma$. 
\end{restatable}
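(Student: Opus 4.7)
\textbf{Proof plan for Lemma~\ref{lem: sequential specification proof}.} Fix a \scan $Sc$ that returns $Snap$, let $H_{Sc}$ denote its view, and fix a segment index $i$. By the $extract$ procedure either $Snap[i]$ equals the initial value (if $H_{Sc}$ contains no value written by $i$) or $Snap[i]=v$ where $\langle v, ts_v\rangle\in H_{Sc}$ carries the largest tag among node $i$'s values in $H_{Sc}$. In the latter case let $Up$ be the \update{} that wrote $\langle v, ts_v\rangle$. I first show that $Up$ precedes $Sc$ in $\sigma$: the construction inserts $Up$ immediately before the first scan in $\sigma'$ whose view contains $\langle v, ts_v\rangle$, and since $\langle v, ts_v\rangle\in H_{Sc}$, $Sc$ itself belongs to that group, placing $Up$ at or before $Sc$ in $\sigma$.

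I then rule out any \update{} $Up'$ by node $i$, writing $\langle v', ts_{v'}\rangle$, that lies strictly between $Up$ and $Sc$ in $\sigma$. Node $i$ is sequential and the $readTag$/$writeTag$ pattern in lines~\ref{line:update_readTag}--\ref{line:value_ts} forces strictly increasing tags across its consecutive \update{}s, so $ts_{v'}\neq ts_v$. If $ts_{v'}>ts_v$ and $\langle v', ts_{v'}\rangle\in H_{Sc}$, then $extract$ would pick $v'$ over $v$, contradicting $Snap[i]=v$; otherwise $\langle v', ts_{v'}\rangle\notin H_{Sc}$, and Lemma~\ref{lem:comparable_views} forces every view containing $\langle v', ts_{v'}\rangle$ to strictly contain $H_{Sc}$, so every scan containing $\langle v', ts_{v'}\rangle$ follows $Sc$ in $\sigma'$. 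Since $Up'$ is inserted just before the first such scan, $Up'$ ends up after $Sc$ in $\sigma$---a contradiction. The initial-value branch follows by the identical sub-argument applied to any hypothetical preceding $Up'$.

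If instead $ts_{v'}<ts_v$, sequentiality of node $i$ gives $Up'\rightarrow Up$ in real-time, and Lemma~\ref{lem:view_order} yields $H_{Up'}\subseteq H_{Up}$, so $\langle v', ts_{v'}\rangle\in H_{Up}$. Let $Sc^*$ be the first scan in $\sigma'$ whose view contains $\langle v, ts_v\rangle$. Because $\langle v, ts_v\rangle$ is created only inside $Up$, no view that completes before $Up$ starts can contain it; in particular $\langle v, ts_v\rangle\notin H_{Up'}$, so Lemma~\ref{lem:comparable_views} gives $H_{Up'}\subseteq H_{Sc^*}$, whence $\langle v', ts_{v'}\rangle\in H_{Sc^*}$. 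Thus the first scan containing $\langle v', ts_{v'}\rangle$ is at $Sc^*$ or earlier in $\sigma'$, placing $Up'$ at or before the $\sigma$-position of $Up$; the same-slot tiebreaker on \update{}s (using $Up'\rightarrow Up$) then orders $Up'$ before $Up$, again contradicting $Up'$ being strictly between $Up$ and $Sc$.

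The main obstacle is the case $ts_{v'}<ts_v$, since the placement of an older \update{} relative to $Up$ and $Sc$ does not follow directly from $extract$: it requires combining comparability of views (Lemma~\ref{lem:comparable_views}), monotonicity of views along real-time order (Lemma~\ref{lem:view_order}), the algorithmic invariant that $\langle v, ts_v\rangle$ can only appear in views completed after $Up$ starts, and the tiebreaker on \update{}s sharing a $\sigma$-slot. The remaining cases reduce to the $extract$ rule together with the view-ordering principle that drives the construction of $\sigma$.
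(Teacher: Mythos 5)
Your proposal is correct and follows essentially the same route as the paper's proof: place $Up$ at or before $Sc$ via the insertion rule, then show any other \update{} of node $i$ with a larger timestamp lands after $Sc$ and any with a smaller timestamp lands before $Up$. The paper states these two facts without elaboration; your version supplies the missing justifications (comparability of views, monotonicity along real-time order, and the same-slot tiebreaker), which is a faithful expansion rather than a different argument.
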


\begin{proof}
Assume $Snap[i] = v$, and let $op$ be the \update$(v)$ operation by node $i$. By the construction of $\sigma$, $op$ appears before $sc$ in $\sigma$. First, we have that any \update{} operation by node $i$ that writes a value strictly greater than $v$ is ordered after $sc$ in $\sigma$. Furthermore, any \update{} operation by node $i$ that writes a value strictly smaller than $v$ is ordered before $op$ in $\sigma$. Therefore, $op$ is the last \update{} operation by node $i$ that is ordered before \scan.
\end{proof}

The following lemma is implied by Lemma \ref{lem:output_view_dominate_input_view}. 
\begin{restatable}{lemma}{downwardValidity} \label{lem:downward_validity}
Let $Op$ be an \update{} operation by node $i$ that writes value $v$ with timestamp $ts$ and has view $H_i$. Then, $\tuple{v, ts} ~ \in H_i$.
\end{restatable}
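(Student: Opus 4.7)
The plan is to reduce the lemma to Lemma~\ref{lem:output_view_dominate_input_view}, which already says that the view of $Op$ contains $H^{\leq T}$, where $H$ is node $i$'s view $V_i[i]$ immediately before $Op$ invokes $LatticeRenewal()$ and $T$ is the tag of $Op$. Thus it suffices to prove two things: (i) $\tuple{v, ts} \in V_i[i]$ at the moment $Op$ reaches line~\ref{line:update_lattice_renewals}, and (ii) the tag of the value, $r+1$, is at most $T$.

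For (i), I would trace the body of \update$(v)$. At line~\ref{line:value_ts} the timestamp $ts = \tuple{r+1,i}$ is fixed, and at line~\ref{line:brodcast_val} node $i$ sends $(\quotes{value}, \tuple{v, ts})$ to all nodes, including itself. The ``value'' handler atomically adds the received pair into $V[i]$. Since the self-send is delivered (reliably and without being blocked by later steps), and since the subsequent phase 0 Lattice$(r)$ at line~\ref{line:initial_phase} itself blocks on message exchanges (writeTag waits for $n-f$ acks), by the time $Op$ passes line~\ref{line:initial_phase} and reaches line~\ref{line:update_lattice_renewals} the handler has executed and $\tuple{v, ts}$ is in $V_i[i]$.

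For (ii), I would use Definition~\ref{def:tag_operation}: $T$ is the tag of the last lattice operation in $LatticeRenewal(r')$, and the first lattice operation in that call has tag $r' = \max\{r+1,\, maxTag_i\} \ge r+1$. Since the tag used by subsequent phases of $LatticeRenewal()$ is set to $maxTag \ge r'$ (it can only grow), we get $T \ge r' \ge r+1$. Combining (i) and (ii), $\tuple{v, ts} \in H^{\leq T}$, and then Lemma~\ref{lem:output_view_dominate_input_view} gives $\tuple{v, ts} \in H^{\leq T} \subseteq H_i$.

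The main obstacle is the bookkeeping in step (i): one must justify that the self-addressed ``value'' message is indeed processed before $LatticeRenewal()$ is invoked. The cleanest way to handle this is to observe that phase 0 Lattice$(r)$ at line~\ref{line:initial_phase} blocks until a full writeTag round-trip completes, and the atomicity of event handlers together with the FIFO/reliable delivery assumption guarantees that $i$'s own value message cannot be ``lost'' or delayed past that blocking point. Everything else is a direct application of prior results and of Definition~\ref{def:tag_operation}.
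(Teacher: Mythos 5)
Your proposal is correct and takes exactly the paper's route: the paper proves Lemma~\ref{lem:downward_validity} by simply asserting it is implied by Lemma~\ref{lem:output_view_dominate_input_view}, and your steps (i) and (ii) supply precisely the two facts that implication needs --- that $\tuple{v,ts}$ enters $V_i[i]$ via the self-addressed $\quotes{value}$ message before $LatticeRenewal()$ is invoked, and that the value's tag $r+1$ is at most the operation's tag $T$ because $r' = \max\{r+1, maxTag\}$ and tags only grow across phases. The only cosmetic caveat is that the paper leaves the self-delivery timing implicit, so your explicit justification via the blocking $writeTag$ round trip and FIFO channels is a welcome elaboration rather than a deviation.
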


\begin{restatable}{lemma}{realTimeOrder} \label{lem: real time ordering}
The sequence $\sigma$ respects the real-time order of operations, i.e., for any two operations $Op_i$ and $Op_j$, if $Op_i \rightarrow Op_j$, then $Op_i$ appears before $Op_j$ in $\sigma$.
\end{restatable}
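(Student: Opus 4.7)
The plan is to proceed by case analysis on whether $Op_i$ and $Op_j$ are scans or updates, relying on three already-proved properties: Lemma~\ref{lem:view_order} (which gives $H_i \subseteq H_j$ whenever $Op_i \rightarrow Op_j$), Lemma~\ref{lem:downward_validity} (which places an update's written value in its own view), and Lemma~\ref{lem:comparable_views} (which totally orders all views by inclusion). A recurring observation I will use in the mixed and update-update cases is that a value first broadcast by $Op_j$ (in particular, one tagged with $Op_j$'s unique $\tuple{r+1, j}$ timestamp) cannot possibly be contained in $H_i$, because $H_i$ is determined no later than the completion time of $Op_i$, and that completion time precedes the start of $Op_j$ and hence precedes the first broadcast of $\tuple{v_j, ts_{v_j}}$ into the system.

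For the two mixed cases and the scan-scan case the argument is short. If both operations are scans, $H_i \subseteq H_j$ by Lemma~\ref{lem:view_order} and the construction of $\sigma'$ orders scans by view with ties broken by real-time order, so $Op_i$ precedes $Op_j$. If $Op_i$ is a scan and $Op_j$ is an update writing $\tuple{v_j, ts_{v_j}}$, the observation above yields $\tuple{v_j, ts_{v_j}} \notin H_i$, whereas Lemma~\ref{lem:downward_validity} puts this value in $H_j$, so by construction $Op_j$ is placed strictly after $Op_i$ (which has no such value in its view). If $Op_i$ is an update writing $\tuple{v_i, ts_{v_i}}$ and $Op_j$ is a scan, then $\tuple{v_i, ts_{v_i}} \in H_i \subseteq H_j$, so $Op_j$ is a scan whose view contains $Op_i$'s written value, and by construction $Op_i$ is inserted before any such scan.

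The main obstacle is the update-update case, where both insertion rules apply simultaneously. Let $k_i$ (resp.\ $k_j$) denote the position in $\sigma'$ of the first scan whose view contains $\tuple{v_i, ts_{v_i}}$ (resp.\ $\tuple{v_j, ts_{v_j}}$), with the convention that this is $\infty$ if no such scan exists. The plan is to prove $k_i \leq k_j$: take any scan $S$ in $\sigma'$ whose view $H$ contains $\tuple{v_j, ts_{v_j}}$, apply Lemma~\ref{lem:comparable_views} to $H$ and $H_i$, and rule out $H \subsetneq H_i$ using the observation that $\tuple{v_j, ts_{v_j}} \notin H_i$. Hence $H \supseteq H_i$, which forces $\tuple{v_i, ts_{v_i}} \in H$ by Lemma~\ref{lem:downward_validity}, giving $k_i \leq k_j$. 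If $k_i < k_j$ the insertion rule places $Op_i$ strictly before $Op_j$ in $\sigma$; if $k_i = k_j$ the two updates fall between the same pair of consecutive scans, and the tie-breaking clause of the construction uses the real-time order $Op_i \rightarrow Op_j$ to place $Op_i$ first.

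The step I expect to require the most care is justifying $\tuple{v_j, ts_{v_j}} \notin H_i$ when $H_i$ is an indirect (borrowed) view. I will argue that the borrowed view $D[s]$ equals $V_s[s]^{\leq r}$ at the moment node $s$ sent its $\quotes{goodLA}$ message, which occurs before $Op_i$ completes and therefore before $Op_j$ begins broadcasting $\tuple{v_j, ts_{v_j}}$; since no node can possess a value that has not yet been injected, the borrowed view cannot contain $\tuple{v_j, ts_{v_j}}$. Once this observation is secured, the update-update case reduces cleanly to view comparability, and the lemma follows.
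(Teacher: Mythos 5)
Your proof is correct and follows essentially the same route as the paper's: the scan/scan and mixed cases are handled identically via Lemma~\ref{lem:view_order} and Lemma~\ref{lem:downward_validity}, and your direct argument that $k_i \leq k_j$ in the update--update case is just the contrapositive form of the paper's contradiction argument (a scan view containing $Op_j$'s value but not $Op_i$'s would be incomparable with $H_i$). Your extra care in justifying $\tuple{v_j, ts_{v_j}} \notin H_i$ for borrowed views is a welcome addition that the paper leaves implicit, but it does not change the substance of the argument.
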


\begin{proof}

Let $H_i$ and $H_j$ be the views of $op_i$ and $op_j$, respectively. Let node $i$ and node $j$ denote the nodes where $op_i$ and $op_j$ take place, respectively. Note that $i$ may be equal to $j$. By Lemma \ref{lem:view_order}, we have $H_i \subseteq H_j$. 
We consider the following cases. 
\begin{itemize}

    \item[$op_i=$ \update$(v)$ and $op_j=$ \update$(u)$:] 
   If $op_i$ and $op_j$ are placed between the same pair of \scan operations, then they are ordered according to $\rightarrow$. Hence, $op_i$ appears before $op_j$ in $\sigma$. Otherwise, there exists a \scan operation $sc$ with view $H_{sc}$ between $op_i$ and $op_j$ in $\sigma$. Suppose, by way of contradiction, that $op_j$ is ordered before $op_i$. In other words, $sc$ appears after $op_j$ and appears before $op_i$ in $\sigma$. Then $\tuple{v, ts_v} ~\not \in H_{sc}$ and $\tuple{u, ts_u} ~\in H_{sc}$. Lemma \ref{lem:downward_validity} implies that $\tuple{v, ts_v} ~ \in H_i$. On the other hand, since $op_i \rightarrow op_j$, we have $\tuple{u, ts_u} ~\not \in H_i$. Thus,  $H_{sc}$ and $H_i$ are incomparable, a contradiction to Lemma \ref{lem:comparable_views}.
   
    \item[$op_i=$ \update$(v)$ and $op_j=$ \scan:] 
    
    Lemma \ref{lem:downward_validity} and Lemma \ref{lem:view_order} together imply that $~\tuple{v, ts_v} \in H_i \subseteq H_j$. Thus, $op_i$ must be ordered before $op_j$ in $\sigma$. 
   
    \item[$op_i=$ \scan and $op_j=$ \scan:]
    
    If $H_i \not = H_j$, which means $H_i \subset H_j$. In this case, $op_i$ is ordered before $op_j$ by construction. Otherwise, since two \scan operations that return the same view are ordered according to $\rightarrow$, $op_i$ is ordered before $op_j$ in $\sigma$. 
    
    \item[$op_i=$ \scan and $op_j=$ \update$(v)$:]
    
    Clearly, $\tuple{v, ts_v} ~ \not \in H_i$. Since $op_j$ is ordered after all \scan operations whose view does not contain $\tuple{v, ts_v}$, it follows that $op_i$ appears before $op_j$ in $\sigma$. 
\end{itemize}
\end{proof}

\mainTheoremASO*

\begin{proof}
Immediately follows from Lemma \ref{lem: sequential specification proof} and \ref{lem: real time ordering}. 
\end{proof}

\subsection{Proof of Lemma \ref{lem:later_operation_bigger_tag}} \label{app:later_operation_bigger_tag} 

\laterOperationBiggerTag*

\begin{proof}
Let operation $op$ denote such a lattice operation. Since $op$ sends its tag $T$ to all in the $writeTag$ function at line 28, by time $t + D$, each correct node must have received tag $T$. Thus, any \update{} operation that starts after time $t + D$ must obtain a tag greater than $T$ for its value. Thus, all values with tags at most $T$ must come from \update{} operations that start before time $t + D$. Since local computation does not take time, all such values must be sent out at line \ref{line:brodcast_val} before time $t + D$. 
\end{proof}

\subsection{Proof of Lemma \ref{lem:completed_update_known_to_all_after_D}} \label{app:completed_update_known_to_all_after_D}

\knownToAllAfterD*

\begin{proof}
Since $op$ completes before time $t$, $\tuple{v, ts_v}$ must be sent to node $q$ before time $t$ and must be received by node $q$ by time $t + D$. Thus, node $p$ must receive $\tuple{v, ts_v}$ from node $q$ by the end of $t+ 2D$. 
\end{proof}

\end{document}